\theoremstyle{thmstyleone}%
\newtheorem{theorem}{Theorem}
\newtheorem{proposition}[theorem]{Proposition}%
\newtheorem{corollary}[theorem]{Corollary}%
\newtheorem{lemma}[theorem]{Lemma}%
\theoremstyle{thmstyletwo}%
\theoremstyle{thmstylethree}%
\newtheorem{definition}[theorem]{Definition}%
\theoremstyle{definition}
\newtheorem{remark}[theorem]{Remark}%
\begin{document}
\title[Morse Sequences and Reference Maps]{Morse Sequences: A simple approach to discrete Morse theory}

\author{\fnm{Gilles} \sur{Bertrand}}\email{gilles.bertrand@esiee.fr}

\affil{\orgdiv{LIGM}, \orgname{Univ Gustave Eiffel}, \orgaddress{\city{Marne-la-Vallée}, \postcode{F-77454}, \country{France}}}


\newcommand{\bbDelta}{%
  \Delta\mkern-12mu\Delta%
}

\newcommand{\bbNabla}{%
  \nabla\mkern-12mu\nabla%
}

\newcommand{\ELIMINE}[1]{}

\newcommand{\bbbu}{\; \ddot{\cup} \;}
\newcommand{\axr}[1]{\ddot{\textsc{#1}}  \normalsize}

\newcommand{\axcup}{\textsc{C\tiny{UP}} }
\newcommand{\axcap}{\textsc{C\tiny{AP}} }
\newcommand{\axunion}{\textsc{U\tiny{NION}} }
\newcommand{\axinter}{\textsc{I\tiny{NTER}} }

\newcommand{\bb}[1]{\mathbb{#1}}
\newcommand{\ca}[1]{\mathcal{#1}}
\newcommand{\ax}[1]{\textsc{#1} \normalsize}

\newcommand{\axb}[2]{\ddot{\textsc{#1}} \textsc{\tiny{#2}}  \normalsize}
\newcommand{\bbb}[1]{\ddot{\mathbb{#1}}}
\newcommand{\cab}[1]{\ddot{\mathcal{#1}}}

\newcommand{\rel}[1]{\scriptstyle{\mathbf{#1}}}

\newcommand{\rela}[1]{\textsc{\scriptsize{\bf{{#1}}}} \normalsize}

\newcommand{\de}[2]{#1[#2]}
\newcommand{\di}[2]{#1\langle #2 \rangle}

\newcommand{\la}{\langle}
\newcommand{\ra}{\rangle}

\newcommand{\cell}{\mathbb{C}}
\newcommand{\cellp}{\mathbb{C}^\times}
\newcommand{\simp}{\mathbb{S}}
\newcommand{\comp}{\mathbb{H}}
\newcommand{\simpp}{\mathbb{S}^\times}
\newcommand{\den}{\mathbb{D}\mathrm{en}}
\newcommand{\ram}{\mathbb{R}\mathrm{am}}
\newcommand{\tree}{\mathbb{T}\mathrm{ree}}
\newcommand{\graph}{\mathbb{G}\mathrm{raph}}
\newcommand{\vertex}{\mathbb{V}\mathrm{ert}}
\newcommand{\edge}{\mathbb{E}\mathrm{dge}}
\newcommand{\equ}{\mathbb{E}\mathrm{qu}}
\newcommand{\esub}{\mathbb{E}\mathrm{sub}}

\newcommand{\topp}{\langle \mathrm{K} \rangle}
\newcommand{\topq}{\langle \mathrm{Q} \rangle}
\newcommand{\topxp}{\langle \mathbb{X}, \mathrm{P} \rangle}
\newcommand{\topxq}{\langle \mathbb{X},\mathrm{Q} \rangle}

\newcommand{\vt}{\mathcal{K}}
\newcommand{\vtp}{\mathcal{T'}}
\newcommand{\vtpp}{\mathcal{T''}}
\newcommand{\vq}{\mathcal{Q}}
\newcommand{\vqp}{\mathcal{Q'}}
\newcommand{\vqpp}{\mathcal{Q''}}
\newcommand{\vk}{\mathcal{K}}
\newcommand{\vkp}{\mathcal{K'}}
\newcommand{\vkpp}{\mathcal{K''}}

\newcommand{\C}{\ensuremath{\searrow^{\!\!\!\!\!C}}}
\newcommand{\Detach}{\ensuremath{\;\oslash\;}}

\newcommand{\sig}{\sigma}
\newcommand{\del}{W}
\newcommand{\ms}{\overrightarrow{W}}
\newcommand{\msi}{\overrightarrow{W_{i}}}
\newcommand{\msim}{\overrightarrow{W_{i-1}}}
\newcommand{\mss}{\widehat{W}}
\newcommand{\msc}{\ddot{W}}

\newcommand{\pl}{\partial}
\newcommand{\plc}{\widehat{\partial}}
\newcommand{\dl}{\delta}
\newcommand{\dlc}{\widehat{\delta}}

\newcommand{\torusms}{\overrightarrow{T}}
\newcommand{\torusmso}{\overleftarrow{T}}

\newcommand{\corf}{\curlyvee}
\newcommand{\rf}{\curlywedge}

\newcommand{\ticorf}{\widetilde{\curlyvee}}
\newcommand{\tirf}{\widetilde{\curlywedge}}

\newcommand{\ddcorf}{\ddot{\nabla}}
\newcommand{\ddrf}{\ddot{\triangle}}

\newcommand{\docorf}{\dot{\nabla}}
\newcommand{\dorf}{\dot{\triangle}}



\newcommand{\ov}{\overline}
\newcommand{\un}{\underline}
\newcommand{\ha}{\widehat}
\newcommand{\ti}{\widetilde}
\newcommand{\dd}{\ddot}

\newcommand{\ka}{\kappa}
\newcommand{\ova}{\overrightarrow{w}}

\abstract{
In this paper, we develop the notion of a Morse sequence, offering an alternative approach to discrete Morse theory that is both simple and effective. A Morse sequence on a finite simplicial complex consists solely of two elementary operations: expansions (the inverse of collapses) and fillings (the inverse of perforations). Alternatively, a Morse sequence can be constructed using only collapses and perforations, providing a dual perspective. Such sequences serve as another representation of the gradient vector field of an arbitrary discrete Morse function.

To each Morse sequence, we associate a reference map and an extension map. The reference map assigns a set of critical simplices to each simplex in the complex, while the extension map assigns a set of simplices to each critical simplex. By considering the boundary of each critical simplex, these maps yield a chain complex that corresponds exactly to the Morse complex.

We demonstrate that, when restricted to homology, the extension map is the inverse of the reference map. Furthermore, these maps enable a direct derivation of the isomorphism theorem, which establishes the equivalence between the homology of a given object and that of its Morse complex.

Finally, we introduce the notion of an extension complex, defined exclusively in terms of extension maps. We prove that this concept is equivalent to the classical flow complex.

}

\keywords{Discrete Morse theory, Simplicial complex, Expansions and collapses, Fillings and perforations, Homology and cohomology}

\maketitle

\section{Introduction} \label{sec1}

Discrete Morse theory is a combinatorial adaptation of classical Morse theory, applicable to a wide range of topological problems. 
Initially developed by Robin Forman \cite{For98,For02}, this theory was first introduced through discrete Morse  functions, 
which identify specific cells known as critical cells. These critical cells encapsulate the essential topological characteristics of a given object. 
In more recent treatments \cite{kozlov2007,koz20},  discrete Morse theory is presented through the notion of a combinatorial matching,
which is equivalent to the gradient vector field of a discrete Morse function.

In this paper \footnote{A preliminary version of some parts of this paper appeared in  \cite{Bertrand2023MorseSequences} and \cite{Bertrand2023MorseFrames}}, 
we propose an approach where, instead of a Morse function or a gradient vector field, a sequence of elementary operators 
is used for a simple representation of an object. This sequence, which we called \emph{a Morse sequence},
consists solely of two elementary operations:
expansions (the inverse of a collapse), and fillings (the inverse of a perforation).
These operations correspond precisely to those introduced by Henry Whitehead \cite{Whi39}.

After some basic definitions (Section \ref{sec:basic}), we introduce Morse sequences and give two meaningful examples (Sections \ref{sec:seq}).
We provide two computational schemes for extracting a Morse sequence from a given simplicial complex.
Also, we introduce a notion of equivalence between two Morse sequences, this notion is based on the \emph{gradient vector field} of a Morse sequence.

In section~\ref{sec:reference}, we present \emph{reference maps}, which are maps that associate a set of critical simplices to each simplex. These maps allow adding 
some crucial information to Morse sequences. In Section \ref{sec:grad}, we show that 
reference maps may be fully characterized by counting the number of gradient paths between each simplex and each critical face.

By simply considering the reference map of the boundary of each critical simplex, we obtain the \emph{critical complex} of a Morse sequence (Section \ref{sec:Mcomplex}),
which corresponds precisely
to the so-called  Morse complex.

In Section \ref{sec:reg}, we introduce \emph{arranged sequences} which are derived through a reordering of Morse sequences.
We also define the related notions of \emph{lower and upper skeletons} of a Morse sequence, which explicitly appear in arranged sequences.

Section \ref{sec:ext} introduces \emph{extension maps}, which associate a set of simplices to each critical simplex. 
We show that, when restricted to homology, an extension map is the inverse of a reference map. 
Furthermore, reference and extension maps allow us to directly recover an isomorphism between the homology of an object and the homology of its critical complex.

In Section \ref{sec:gflow}, we define the  \emph{extension complex}, which is based entirely on extension maps. 
We show that there is an isomorphism between the homology of an extension complex and the homology of a critical complex.
Moreover, we prove that this notion is equivalent to the classical flow complex.

Following the conclusion, we provide two appendices.
In the first one, we emphasize that 
a Morse sequence may
represent the gradient vector field of any arbitrary
discrete Morse function (Appendix \ref{app:vector}). In the second one, we make clear the relation between  Morse sequences and
different kinds of Morse
functions (Appendix~\ref{app:functions}).

Note that the paper is self-contained, the only external result which is used is Theorem \ref{th:gradi2} from  \cite{forman2002discretecohomology},
located at the very end of the paper. 

\section{Simplicial complexes, homology, and cohomology}
\label{sec:basic}
\subsection{Simplicial complexes}

Let $K$ be a finite collection of non-empty finite subsets of a set $S$. 
We say that $K$ is a
{\it(simplicial) complex}  if, for every set $\tau$  in $K$, and every non-empty subset $\sig \subseteq \tau$, the set $\sig$ also belongs to $K$.

An element of a simplicial complex $K$ is {\it a simplex of $K$}, or {\it a face of~$K$}.
A {\em facet of~$K$} is a simplex of $K$ that is maximal for inclusion.
The {\it dimension} of $\sigma \in K$, written $dim(\sigma)$,
is the number of its elements
minus one. If $dim(\sigma) =p$, we say that~$\sigma$ is a \emph{$p$-simplex}.
The {\it dimension of $K$}, written $dim(K)$,
is the largest dimension of its simplices,
the {\it dimension of $\emptyset$}, the void complex,  being defined to be $-1$. 

The {\em vertex set} of a simplicial complex $K$ is defined as the union of all faces of $K$. 
The elements of the vertex set are the {\em vertices of $K$}. For every vertex $v$ of $K$, 
the set~$\{ v \}$ is a face of $K$, and every face of $K$ is a subset of the vertex set.

We recall the definitions of simplicial collapses and simplicial expansions~\cite{Whi39}. \\
Let $K$ be a simplicial complex and let $\sig, \tau \in K$.
The couple $(\sig,\tau)$ is a {\em free pair for~$K$}, 
if $\tau$ is the only face of $K$ that contains $\sig$.
Thus $\tau$ is necessarily a facet of~$K$.
The \emph{dimension of a free pair $(\sig,\tau)$}
is defined by $dim(\sig,\tau) = dim(\tau)$.
We have  $dim(\sig,\tau) = dim(\sig) + 1$.
If $(\sig,\tau)$ is a free 
pair for $K$, then the simplicial complex $L = K \setminus \{ \sig,\tau \}$
is {\em an elementary 
collapse of $K$}, and $K$ is {\em an elementary 
expansion of $L$}. \\
We say that
$K$ {\em collapses onto $L$},
or that $L$ {\em expands onto $K$},
if there exists a  sequence 
$\langle K=K_0,\ldots,K_k=L \rangle$, such that
$K_i$ is an elementary collapse of $K_{i-1}$, $i \in [1,k]$; we say that such a sequence is a \emph{collapse sequence (from $K$ to $L$)}.
The complex
$K$ is {\em collapsible} if $K$ collapses onto a complex with a single vertex, that is, onto a complex of the form
$\{ \{ a \} \}$.

\subsection{Chain and cochain complexes}
\label{subsec:chain}

A \emph{chain complex}  $( C_p, d_p ) $ is a sequence of vector spaces $C_p$, with $p \in \mathbf{Z}$, 
connected by linear transformations $d_p : C_p \rightarrow C_{p-1}$, called \emph{boundary operators}
with the property that $d_{p-1} \circ d_{p} = 0$. Each element of $C_p$ is a \emph{$p$-chain}. \\
In a dual way, 
a \emph{cochain complex} $( C^p, d^p ) $ is a sequence of vector spaces $C^p$,  with $p \in \mathbf{Z}$, 
connected by linear transformations $d^p : C^p \rightarrow C^{p+1}$ called \emph{coboundary operators}
with the property that $d^{p+1} \circ d^p = 0$. Each element of $C^p$ is a \emph{$p$-cochain}. 

Let $( C_p, d_p )$ and $( C^p, d^p )$ be a chain and a cochain complex.
We define the four vector spaces:
\begin{itemize}
\item the set $Z_p(C)$ of \emph{$p$-cycles of $(C_p, d_p )$}, $Z_p(C)$ is the kernel of $d_p$;
\item the set $B_p(C)$ of \emph{$p$-boundaries of $(C_p, d_p )$}, $B_p(C)$ is the image of $d_{p+1}$;
\item the set $Z^p(C)$ of \emph{$p$-cocycles of $( C^p, d^p )$}, $Z^p(C)$ is the kernel of $d^p$;
\item the set $B^p(C)$ of \emph{$p$-coboundaries of $( C^p, d^p )$}, $B^p(C)$ is the image of $d^{p-1}$.
\end{itemize}

\noindent
By the definition of  (co)boundary operators, all (co)boundaries are (co)cycles.
We obtain homology and cohomology with the following quotient vector spaces:
\begin{itemize}
\item $H_p(C) = Z_p(C) / B_p(C)$, which is the \emph{$p^\textnormal{th}$ homology vector space of $(C_p, d_p )$};
\item $H^p(C) = Z^p(C) / B^p(C)$, which is the \emph{$p^\textnormal{th}$ cohomology vector space of $(C^p, d^p )$}.
\end{itemize}

\noindent
An element $h$ in $H_p(C)$  is such that $h = z + B_p(C)$ for some $z \in Z_p(C)$. We
write $h = [ z ]$, which  is the \emph{homology class of the cycle z}.

\noindent
Similarly, an element $h$ in $H^p(C)$  is such that $h = z + B^p(C)$ for some $z \in Z^p(C)$. We
also write $h = [ z ]$, which is the \emph{cohomology class of the cocycle z}.

\subsection{Homology and cohomology modulo 2}
\label{sec:hom}

Let $K$ be a simplicial complex. We write $K^{(p)}$ for the set 
of all $p$-simplices of~$K$, and 
$K[p]$ for the set composed of all subsets of $K^{(p)}$.
Thus $K^{(p)} = \emptyset$ and $K[p] = \{\emptyset\}$ whenever $p < 0$ or $p > dim(K)$.
Each element of $K[p]$ is a \emph{$p$-chain of $K$}.
The symmetric difference of two elements of $K[p]$ endows $K[p]$
with the structure of a vector space over the field $\mathbb{Z}_2 = \{ 0, 1 \}$. The set $K^{(p)}$
is a basis for this vector space.
Within this structure, a chain $c \in K[p]$ may be written as a sum
$\sum_{\sigma \in c} \sigma$, the chain $c = \emptyset$
being written $0$. The sum of two chains is obtained using the modulo 2 arithmetic.
See Chapter 8 of \cite{Gib10} and see \cite{Haus14} for a 
general 
presentation of modulo 2 homology. \\
If $S \subseteq K$, we write $S^{(p)} = \{\nu \in S \; | \; \nu \in K^{(p)} \}$ and
 $S[p] = \{c \subseteq S^{(p)} \}$. 

Let $\sigma \in K^{(p)}$, we set:

\noindent
\hspace*{\fill}
$\partial(\sigma) = \{\tau \in K^{(p-1)} \; | \; \tau \subset \sigma \}$ and
$\delta(\sigma) = \{\tau \in  K^{(p+1)}\; | \; \sigma \subset \tau \}$.
\hspace*{\fill}

\noindent
The \emph{boundary operator} $\partial_p : K[p] \rightarrow K[p-1]$ is such that, for
each $c \in K[p]$, 

\noindent
\hspace*{\fill}
$\partial_p(c) = \sum_{\sigma \in c} \partial(\sigma)$, with $\partial_p(\emptyset) = 0$.
\hspace*{\fill}

\noindent
The \emph{coboundary operator} $\delta^p : K[p] \rightarrow K[p+1]$ is such that, for
each $c \in K[p]$,

\noindent
\hspace*{\fill}
$\delta^p(c) = \sum_{\sigma \in c} \delta(\sigma)$, with $\delta^p(\emptyset) = 0$.
\hspace*{\fill}

For each $p \in \mathbf{Z}$, we have
$ \partial_{p} \circ \partial_{p+1} = 0$ and $\delta^{p+1} \circ \delta^{p} = 0$. 
Thus, $(C_p, d_p ) = ( K[p], \pl_p )$ is a chain complex and $( C^p, d^p ) = ( K[p], \dl^p )$ is a cochain complex.
We write  $Z_p(K)$, $B_p(K)$, $H_p(K)$ for $Z_p(C)$, $B_p(C)$, $H_p(C)$, and $Z^p(K)$, $B^p(K)$, $H^p(K)$ for $Z^p(C)$, $B^p(C)$, $H^p(C)$.

Let $\beta_p(K) = dim(H_p(K))$ and $\beta^p(K) = dim(H^p(K))$.
We have $\beta_p(K) = \beta^p(K)$
(See \cite[Sec. V.1]{Edel01}).
The number $\beta_p(K) = \beta^p(K)$ is the \emph{$p^\textnormal{th}$ Betti number (mod 2) of $K$}.
Informally, the $p^\textnormal{th}$ Betti number of $K$ corresponds to the number of ``$p$-dimensional holes'' of the complex $K$.

\def\figsize1{0.22}
\begin{figure*}[tb]
    \centering
        \begin{subfigure}[t]{\figsize1\textwidth}
        \centering
        \includegraphics[width=.99\textwidth]{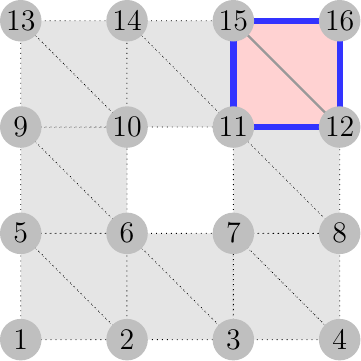}
        \caption{%
        }
    \end{subfigure}%
    ~
    \begin{subfigure}[t]{\figsize1\textwidth}
        \centering
        \includegraphics[width=.99\textwidth]{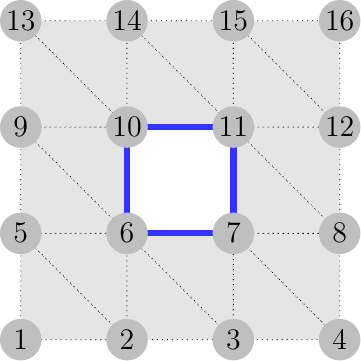}
        \caption{%
        }
    \end{subfigure}%
    ~
    \begin{subfigure}[t]{\figsize1\textwidth}
        \centering
        \includegraphics[width=.99\textwidth]{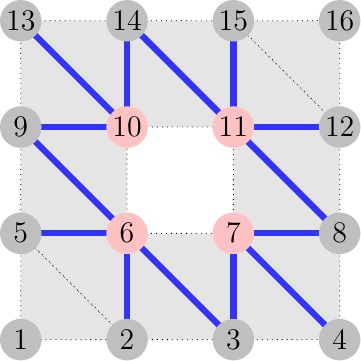}
        \caption{%
        }
    \end{subfigure}%
    ~
    \begin{subfigure}[t]{\figsize1\textwidth}
        \centering
        \includegraphics[width=.99\textwidth]{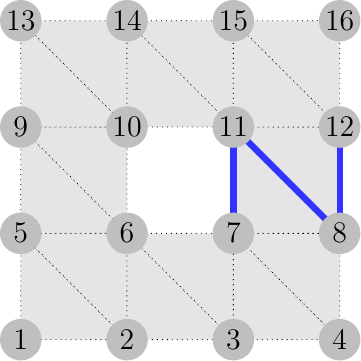}
        \caption{%
        }
    \end{subfigure}
    \caption{\label{fig:annulus} An annulus, with various cycles and cocyles. See text for details. %
}
\end{figure*}

Fig.~\ref{fig:annulus} depicts an annulus, with various cycles and cocyles, coloured in blue. In Fig.~\ref{fig:annulus}.a, we see a 1-cycle that is the 1-boundary of the two pink triangles.  In Fig.~\ref{fig:annulus}.b, we have a 1-cycle that is not a 1-boundary. Such a cycle detects a ``hole'' by ``contouring'' it.
In Fig.~\ref{fig:annulus}.c, we see
 a 1-cocycle which is the 1-coboundary of the four pink points. In Fig.~\ref{fig:annulus}.d, we have a 1-cocycle that is not a 1-coboundary. Such a cocycle detects a ``hole'' by ``cutting'' the annulus.


\section{Morse sequences}
\label{sec:seq}
Let us  start first with the definition of perforations and fillings.

Let $K,L$ be simplicial complexes.
If $\nu \in K$ is a facet of $K$ and if $L = K \setminus \{\nu \}$, we say that
$L$ is {\em an elementary perforation of $K$}, and that
$K$ is {\em an elementary filling of $L$}.

These transformations were introduced by Whitehead in a seminal paper \cite{Whi39}. Combined with collapses and expansions, it has been shown that we obtain four operators
that correspond to the homotopy equivalence between two simplicial complexes  (Th. 17 of \cite{Whi39}).
See also \cite{Ber21} which provides another kind of equivalence based on a variant of these operators.

We introduce the notion of a ``Morse sequence'' by simply considering expansions and fillings
of a simplicial complex. \\

\begin{definition} \label{def:seq1}
Let $K$ be a simplicial complex. A \emph{Morse sequence (on $K$)} is a sequence
$\ms = \langle \emptyset = K_0,\ldots,K_k =K \rangle$ of simplicial complexes such that,
for each $i \in [1,k]$, $K_i$ is either an elementary expansion or an elementary filling of $K_{i-1}$.
We also write $\ms(K)$ for a Morse sequence $\ms$ on $K$. \\
\end{definition}

\noindent
Let $\ms = \langle K_0,\ldots,K_k \rangle$ be a Morse sequence.
For each $i \in [1,k]$, let $\ka_i$ be such that:
\begin{itemize}
\item If $K_i$ is an elementary filling of $K_{i-1}$ and $K_i = K_{i-1} \cup \{\nu \}$, then $\ka_i = \nu$. We say that
the face $\ka_i$ is \emph{critical for $\ms$}. 
\item If $K_i$ is an elementary expansion of $K_{i-1}$ and $K_i = K_{i-1} \cup \{\sigma,\tau \}$, with $\sig \subseteq \tau$, then
$\ka_i = (\sigma,\tau)$. The pair 
$\ka_i$ is \emph{regular for $\ms$}, the face $\sig$ is \emph{lower regular for $\ms$}, and the face $\tau$ is \emph{upper regular for $\ms$}.
\end{itemize}

\noindent
We write 
$\diamond \ms = \langle \ka_1, \ldots, \ka_k \rangle$, and we say that
$\diamond \ms$ is a \emph{simplex-wise (Morse) sequence}.

\noindent
We will use also the four following notations: 
\begin{itemize}
  \item $\widehat{W} = \{\nu \in  K \; | \; \nu$ is critical for $\ms \}$,
\item $\msc = \{(\sig,\tau) \in K \times K \; | \; (\sig,\tau)$ is a regular pair for $\ms \}$, 
\item $\overline{W} = \{\tau \in  K \; | \; \tau$ is upper regular for $\ms \}$, 
\item $\underline{W} = \{\sig \in  K \; | \; \sig$ is lower regular for $\ms \}$. 
\end{itemize}
Thus, $\diamond \ms$ is a sequence of faces of $\ha{W}$ and pairs of $\msc$. 
Clearly, $\ms$ and $\diamond \ms$  are two equivalent forms. 
That is, one form is determined by the other.
Also, we observe that $K = \ha{W} \cup \underline{W} \cup \overline{W}$. This partition of a complex 
corresponds to a classification  which is often associated to a matching in the Hasse diagram of a complex or a poset, see for example 
Section 10.2 of \cite{koz20}. 

\ELIMINE{
We write: \\
- $\eta(\sig) = \tau$ and $\eta(\tau) = \sig$ whenever $(\sig, \tau)$ is a regular pair for $\ms$. \\
- $\eta(\nu) = 0$ whenever $\nu$ is critical for $\ms$. \\
Thus, these expressions define a bijection $\eta : K \rightarrow K$ such that $\eta \circ \eta = Id_K$. \\
}

Observe that, if $\ms = \langle K_0,...,K_k \rangle$ is a Morse sequence, with $k \geq 1$, then
$K_1$ is necessarily a filling of $\emptyset$. 
Thus, $K_1$ is necessarily a vertex. That is, $K_1$ is made of a single $0$-simplex that is critical for $\ms$.

There are several ways to obtain a Morse sequence $\ms$ from a given complex~$K$.
The two following schemes are basic ones to achieve this goal, see \cite{Ben14,Mro09,Har14,fugacci2019computing,Dlotko11}
for similar schemes given in the context of discrete Morse theory:
\begin{enumerate}
\item \emph{The increasing scheme}. We build $\ms$ from the left to the right. Starting from~$\emptyset$, we obtain $K$ by iterative expansions and fillings.
We say that this scheme is \emph{maximal} if
we make a filling only if no expansion can be made.
\item \emph{The decreasing scheme}. We build $\ms$ from the right to the left. Starting from $K$, we obtain $\emptyset$ by iterative collapses and perforations.
We say that this scheme is \emph{maximal} if we make a perforation only if no collapse can be made.
\end{enumerate}
\begin{figure*}[tb]
    \centering
    \begin{subfigure}[t]{0.31\textwidth}
        \centering
        \includegraphics[height=.99\textwidth]{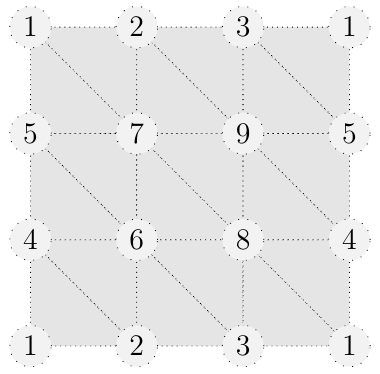}
        \caption{}
    \end{subfigure}%
    ~
    \begin{subfigure}[t]{0.31\textwidth}
        \centering
        \includegraphics[height=.99\textwidth]{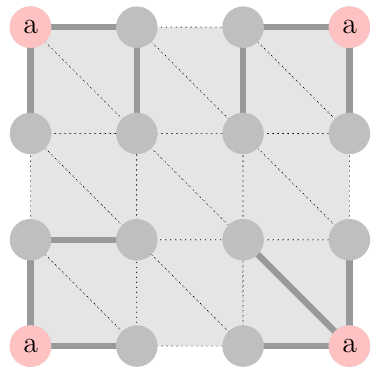}
        \caption{}
    \end{subfigure}%
    ~
    \begin{subfigure}[t]{0.31\textwidth}
        \centering
        \includegraphics[height=.99\textwidth]{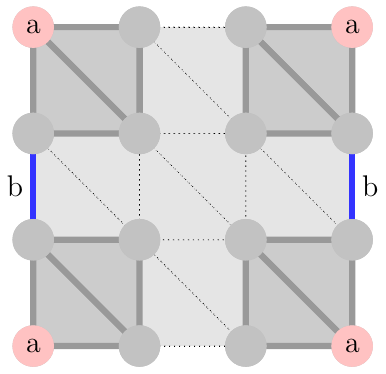}
        \caption{}
    \end{subfigure}%
    
    \begin{subfigure}[t]{0.31\textwidth}
        \centering
        \includegraphics[height=.99\textwidth]{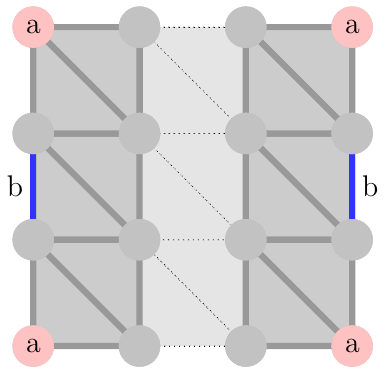}
        \caption{}
    \end{subfigure}%
    ~
    \begin{subfigure}[t]{0.31\textwidth}
        \centering
        \includegraphics[height=.99\textwidth]{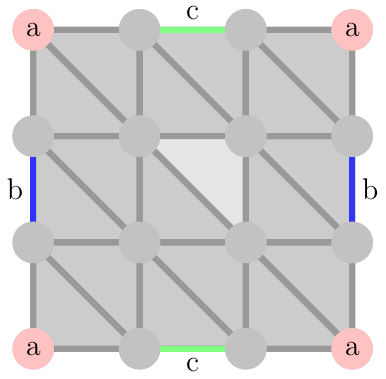}
        \caption{}
    \end{subfigure}
    ~
    \begin{subfigure}[t]{0.31\textwidth}
        \centering
        \includegraphics[height=.99\textwidth]{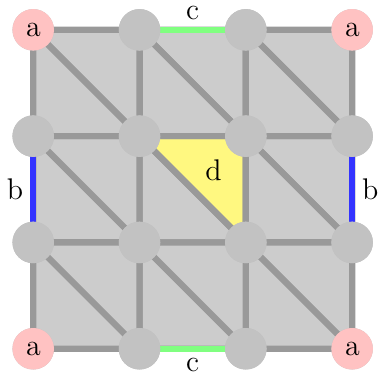}
        \caption{}
    \end{subfigure}

 \caption{Illustration of some steps of a Morse sequence on the torus. 
 (a) A triangulation, points with the same label are identified. 
 (b) The sequence begins with the critical 0-simplex~$a$,  then few elementary expansions are added to the sequence.
 (c) A maximal expansion from $a$ is done, then the critical 1-simplex $b$ is added to the sequence. 
 (d) A maximal expansion from $b$ is done. 
 (e) The second critical 1-simplex $c$ is added, and a maximal expansion from $c$ is done. (f) The critical 2-simplex~$d$ is added. }
 \label{fig:MorseSequenceTorus}
\end{figure*}
\begin{figure*}[tb]
    \centering
    \begin{subfigure}[t]{0.32\textwidth}
        \centering
        \includegraphics[width=.99\textwidth]{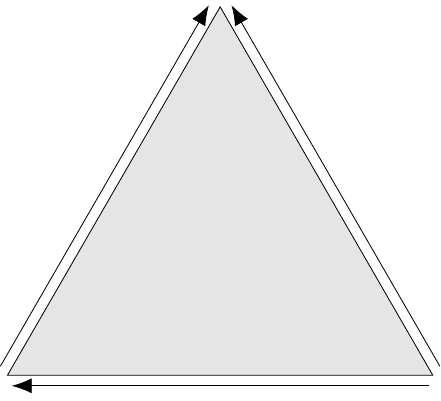}
        \caption{}
    \end{subfigure}%
    ~
    \begin{subfigure}[t]{0.32\textwidth}
        \centering
        \includegraphics[width=.99\textwidth]{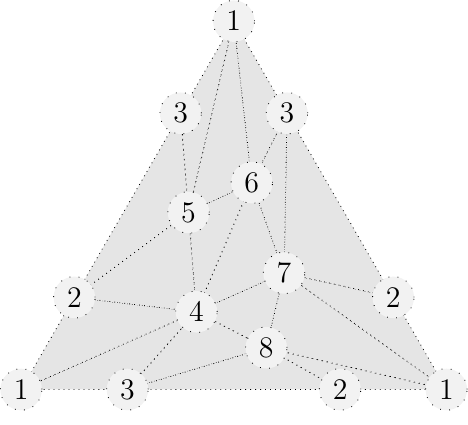}
        \caption{}
    \end{subfigure}%
    ~
    \begin{subfigure}[t]{0.32\textwidth}
        \centering
         \includegraphics[width=.99\textwidth]{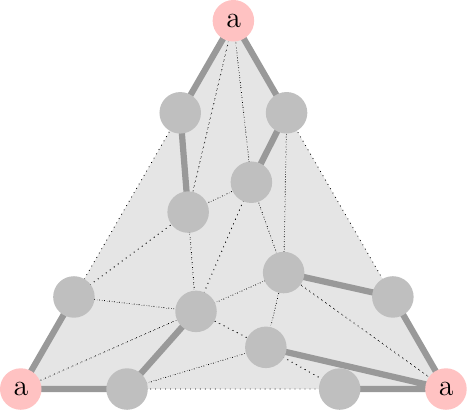}
        \caption{}
    \end{subfigure}%

    \begin{subfigure}[t]{0.32\textwidth}
        \centering
        \includegraphics[width=.99\textwidth]{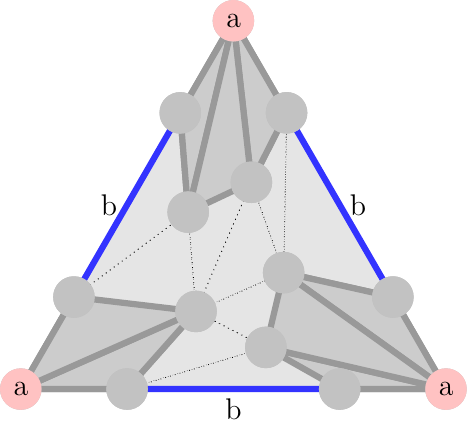}
        \caption{}
    \end{subfigure}%
    ~
    \begin{subfigure}[t]{0.32\textwidth}
        \centering
        \includegraphics[width=.99\textwidth]{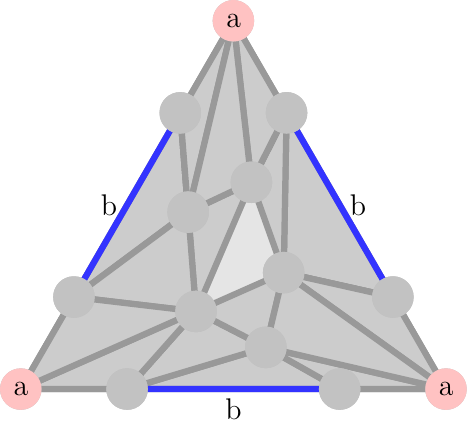}
        \caption{}
    \end{subfigure}
     ~
    \begin{subfigure}[t]{0.32\textwidth}
        \centering
        \includegraphics[width=.99\textwidth]{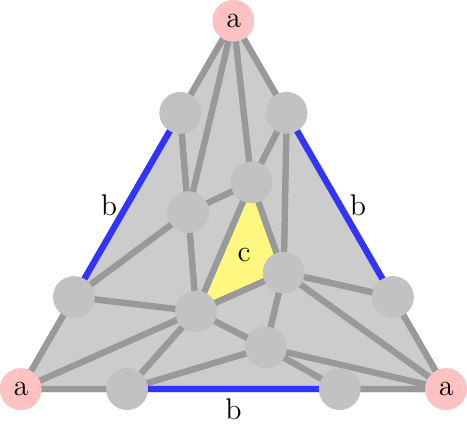}
        \caption{}
    \end{subfigure}

 \caption{Illustration of some steps of a Morse sequence on the dunce hat.  
 (a) the dunce hat, the three edges of the triangle have to be identified with the arrows. 
 (b) A triangulation of the dunce hat. 
 (c) The sequence begins with the critical 0-simplex $a$, then few elementary expansions are added to the sequence.
 (d) A maximal expansion from $a$ is done, then the 1-critical simplex $b$ is added. (e) A maximal expansion from~$b$. (f) The critical 2-simplex
 $c$ is added.}
  \label{fig:MorseSequenceDunceHat}
\end{figure*}
Fig.~\ref{fig:MorseSequenceTorus} presents an example of a Morse sequence  on a torus. The sequence has been obtained by following the maximal increasing scheme.
Using the same scheme, an example of a Morse sequence on the dunce hat  is given Fig.~\ref{fig:MorseSequenceDunceHat}.
The dunce hat is a well-known example of a contractible but not collapsible object, see \cite{Zee64}. 
 Contractibility can be seen by noting that the dunce hat embeds in a 3-ball which both collapses onto a vertex and onto the dunce hat. 
 We observe that: \\
 - The Morse sequence on the torus leads to two critical $1$-faces and one critical $2$-face. These numbers correspond exactly to the 
 numbers of 1D and 2D ``holes'' of the torus. \\
 - The dunce hat has no 1D or 2D ``hole'', nevertheless the Morse sequence leads to one critical $1$-face and one critical $2$-face. These numbers are the smallest we can obtain. \\
 In this paper, we use the torus and the dunce hat as our canonical examples, as they exemplify two contrasting cases which arise in the context of discrete Morse theory.\\
 

\begin{remark} \label{rem:NP}
The above maximal increasing and decreasing schemes are two methods 
which try to minimize the number of critical simplices of a Morse sequence. 
By the results given in \cite{Jos06,lewiner03}, this problem is NP-hard.
Therefore, these methods do not, in general, give optimal results. \\
\end{remark}

\begin{remark} \label{rem:col}
Let $\ms = \langle K_0,\ldots,K_k \rangle$ be a Morse sequence, let  $\diamond \ms = \langle \ka_1, \ldots, \ka_k \rangle$,
and let $\ka_i$, $\ka_j$, $j >i$, be two consecutive
critical faces of $\diamond \ms$, that is, $\ka_{i+1},...,\ka_{j-1}$ are regular pairs.
Then, as a direct consequence of the definition of a Morse sequence, the complex $K_{j-1}$ collapses onto $K_{i}$.
This property is the core of a fundamental theorem, called \emph{the collapse theorem}, which makes the link between the basic definitions of  discrete Morse theory
and discrete homotopy (See Theorem 3.3 of \cite{For98a} and Theorem 4.27 of \cite{Sco19}).
In a certain sense, we can say that Morse sequences provide an introduction to discrete Morse theory by starting from this property.\\
\end{remark}

\begin{remark} \label{rem:fil}
Any Morse sequence $\ms$ on $K$ is a \emph{filtration on $K$}, that is a sequence of nested complexes  $\langle \emptyset = K_0,...,K_k =K \rangle$
such that, for $i \in [0,k-1]$, we have $K_i \subseteq K_{i+1}$; see \cite{DW22}.
Also any \emph{simplex-wise filtration on $K$} is a special case of a Morse sequence where,
for $i \in [0,k-1]$, $K_{i+1} \setminus K_i$ is made of a single simplex.
That is, a simplex-wise filtration is a Morse sequence which is made solely of fillings; all faces of $K$ are critical for such a sequence. \\
\end{remark}

\begin{definition} \label{def:seq2}
Let $\ms(K)$ be a Morse sequence. 
The \emph{gradient vector field of 
$\ms$} is the set composed of all regular pairs for $\ms$, that is, the set  $\msc$.
We say that two Morse sequences $\overrightarrow{V}(K)$ and $\ms(K)$
are \emph{equivalent} if
they have the same gradient vector field.\\
\end{definition}

Let $\overrightarrow{V}(K)$ and $\ms(K)$ be two sequences. 
Clearly, if $\overrightarrow{V}(K)$ and $\ms(K)$ are equivalent, then these sequences have the same critical faces. 
In other words, if $\ddot{V} = \msc$, then we have $\widehat{V} = \widehat{W}$. 
The converse of this statement is true if $dim(K) \leq 1$, but not true in general. 
The smallest counter-example is given by an elementary triangle. Let $T$ be the complex composed of all 
non-empty subsets of the set $\{a,b,c\}$. Let $\overrightarrow{V}(T)$ and $\ms(T)$ be the two sequences such that 
$\diamond \overrightarrow{V} = \langle \ka_1, \ldots, \ka_4 \rangle$, $\diamond \ms = \langle \ka'_1, \ldots, \ka'_4 \rangle$ and: 
\begin{enumerate}
\item[-]
$\ka_1 = \{a\}$, $\ka_2= (\{b\}, \{a,b\})$, 
$\ka_3= (\{c\}, \{b,c\})$, $\ka_4= (\{a,c\}, \{a,b,c\})$, 
\item[-] $\ka'_1 = \{a\}$, $\ka'_2= (\{c\}, \{a,c\})$, 
$\ka'_3= (\{b\}, \{b,c\})$, $\ka'_4= (\{a,b\}, \{a,b,c\})$. 
\end{enumerate}
Then we have $\widehat{V} = \widehat{W}= \{ \{a\} \}$, but $\ddot{V} \not= \msc$.

To conclude this section, it is worth mentioning that using Morse sequences as a representation of acyclic vector fields and Morse functions entails no loss of generality. 
Specifically, we can prove that there is an equivalence between the gradient vector field of an arbitrary Morse function and the gradient vector field of a Morse sequence. 
The same holds true for an arbitrary acyclic vector field. For more details on these correspondences, see Appendices \ref{app:vector} and \ref{app:functions}.

In essence, a Morse sequence can be viewed as a total ordering that respects the partial ordering induced by a Morse function or by an acyclic vector field. 
Refer to Definition 5.38 of \cite{Sco19}, which introduces linear extensions of a poset that preserve such partial orderings. 
Additionally, see Theorem 11.9 of \cite{koz20}, which establishes an equivalence between a linear extension and an acyclic vector field. 

\section{The reference and coreference maps}
\label{sec:reference}
\ELIMINE{
A Morse complex is a basic tool for efficiently computing simplicial homology
using discrete Morse theory. Since a Morse complex is built solely on critical complexes, its dimension
is generally much smaller than the one of the original complex.
In this section, we introduce two maps  which will allow us to simplify the construction of such a complex.}

A reference or a coreference map is a linear map which assigns, to each $p$-simplex of a simplicial complex, a set of critical $p$-simplices. 
These maps provide Morse sequences with a structure, they will be the main ingredient of this paper.

Let $\ms(K)$ be a Morse sequence. We write $\ha{W}[p] = \{ c \in K[p] \; | \; c \subseteq \ha{W} \}$.
Thus $\ha{W}[p]$ is a vector subspace of $K[p]$. 
A \emph{frame on $\ms$}
is a map $\Upsilon:$ $\nu \in K \mapsto  \Upsilon(\nu) \subseteq \ha{W}$ such that, for each 
$\nu \in K^{(p)}$, we have $\Upsilon(\nu) \in \ha{W}[p]$.
If $\Upsilon$ is a frame on $\ms$, we write $\Upsilon_p$ for the linear map
$c \in K[p] \mapsto  \Upsilon_p(c) \in \ha{W}[p]$, where $\Upsilon_p(c) = \sum_{\nu \in c} \Upsilon(\nu)$
and $\Upsilon(\emptyset) = 0$.

In the sequel of the paper, we often omit the
subscript or the superscript $p$ if the variable $p$ is clear from the context. \\

\begin{definition} \label{def:reference}
Let $\ms(K)$ be a Morse sequence. Let $\rf$ and $\corf$ be two frames on $\ms$
such that, for each critical simplex $\nu$ of $\ms$,
we have $\rf(\nu) = \corf(\nu) = \nu$.  

\noindent
We say that $\rf$ is a \emph{reference map for $\ms$} if,
for each $\tau \in \ov{W}$, we have: \\
\hspace*{\fill} $\rf(\tau) = 0$ and $ \rf(\partial(\tau)) = 0$. \hspace*{\fill} \\
\noindent
We say that $\corf$ is a \emph{coreference map for $\ms$} if,
for each $\sig \in \un{W}$, we have: \\
\hspace*{\fill} $\corf(\sig) = 0$ and $\corf(\delta(\sig)) = 0$. \hspace*{\fill} \\
\end{definition}

Let $(\sigma,\tau)$ be a regular pair of~$\ms$. By linearity, we observe that: \\
- If $\rf$ is a reference map for $\ms$, then we have $ \rf(\sig) = \rf(\partial(\tau) + \sig)$. \\ 
- If $\corf$ is a coreference map for $\ms$, then have $ \corf(\tau) = \corf(\delta(\sig) + \tau)$. \\ 
That is, we have 
$ \rf(\sig) = \rf(\partial(\tau) \setminus \{ \sig \})$ and $ \corf(\tau) = \corf(\delta(\sig) \setminus \{ \tau \})$. \\
Since a Morse sequence is a filtration, we immediately deduce the following. \\

\begin{proposition} \label{prop:reference0}
Let $\ms(K) = \langle \emptyset = K_0, \ldots, K_k = K \rangle$  be a Morse sequence. 
Let $\rf$ and $\corf$ be, respectively, a reference and a coreference map for $\ms$. Let $\diamond \ms = \langle \ka_1, \ldots, \ka_k \rangle$. 
Then $\rf$ and $\corf$ satisfy the two conditions (A) and (B): 
\begin{enumerate}
\item[(A)]  If $k_i= \nu$ is critical for $\ms$, then we have $\rf(\nu) = \nu$ and $\corf(\nu) = \nu$. 
\item[(B)] If $k_i= (\sig,\tau)$ is a regular pair for $\ms$, then  we have: 
\begin{itemize}
\item
$\rf(\tau) = 0$ and $ \rf(\sig) = \rf(\partial(\tau) + \sig)$, with $\partial(\tau) + \sig \subseteq K_{i-1}$.
\item
$\corf(\sig) = 0$ and $ \corf(\tau) = \corf(\delta(\sig) + \tau)$, with $\delta(\sig) + \tau \subseteq K \setminus K_{i}$.  \\
\end{itemize}
\end{enumerate}
\end{proposition}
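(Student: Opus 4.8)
The plan is to observe that conditions (A) and (B) are, for the most part, a matter of unwinding Definition \ref{def:reference} together with the linearity identity already recorded just before the statement; the genuinely new content lies in the two containment assertions $\partial(\tau) + \sig \subseteq K_{i-1}$ and $\delta(\sig) + \tau \subseteq K \setminus K_{i}$, which encode the filtration structure of $\ms$. First I would dispose of (A): by Definition \ref{def:reference}, both $\rf$ and $\corf$ satisfy $\rf(\nu) = \corf(\nu) = \nu$ for \emph{every} critical simplex $\nu$, so in particular this holds when $\ka_i = \nu$. For (B), suppose $\ka_i = (\sig,\tau)$ is a regular pair; then $\tau \in \ov{W}$ and $\sig \in \un{W}$ by definition, so the defining conditions of a reference and a coreference map immediately give $\rf(\tau) = 0$, $\rf(\partial(\tau)) = 0$, $\corf(\sig) = 0$, and $\corf(\delta(\sig)) = 0$. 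The identities $\rf(\sig) = \rf(\partial(\tau) + \sig)$ and $\corf(\tau) = \corf(\delta(\sig) + \tau)$ are then exactly the linearity observation of the preamble: since $\sig \in \partial(\tau)$ and $\tau \in \delta(\sig)$, over $\mathbb{Z}_2$ we have $\partial(\tau) = \sig + (\partial(\tau) + \sig)$ and $\delta(\sig) = \tau + (\delta(\sig) + \tau)$, and applying $\rf$ and $\corf$ and using that these boundaries map to $0$ yields the claimed equalities.

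It remains to justify the two containments. For the first, recall that the step $K_{i-1} \to K_i$ is an elementary expansion with $K_i = K_{i-1} \cup \{\sig,\tau\}$, $\sig \subseteq \tau$, and $\sig,\tau \notin K_{i-1}$, where $dim(\sig) = dim(\tau) - 1$. The chain $\partial(\tau) + \sig$ consists precisely of the $(dim(\tau)-1)$-faces of $\tau$ other than $\sig$. Each such face lies in $K_i$ (as $\tau \in K_i$ and $K_i$ is a complex), differs from $\sig$ by construction, and differs from $\tau$ by dimension; hence it lies in $K_{i-1} = K_i \setminus \{\sig,\tau\}$. This gives $\partial(\tau) + \sig \subseteq K_{i-1}$.

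The second containment is the one place where the free-pair property is genuinely needed, and I expect it to be the main (if mild) obstacle, since here one must use the defining property of the pair at step $i$ rather than the mere nesting of the filtration. As $K_{i-1}$ is an elementary collapse of $K_i$ through $(\sig,\tau)$, the pair $(\sig,\tau)$ is a free pair for $K_i$; that is, $\tau$ is the only face of $K_i$ that contains $\sig$, necessarily as a proper coface. The chain $\delta(\sig) + \tau$ consists of the cofaces of $\sig$ in $K$ of dimension $dim(\sig)+1$ other than $\tau$. Any such coface strictly contains $\sig$ and is distinct from $\tau$, so by the free-pair property it cannot belong to $K_i$; therefore it lies in $K \setminus K_i$, giving $\delta(\sig) + \tau \subseteq K \setminus K_{i}$. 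Combining the definitional reductions of the first paragraph with these two containments establishes (A) and (B).
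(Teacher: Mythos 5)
Your proposal is correct and takes essentially the same route as the paper, which proves the proposition implicitly in the preceding text: condition (A) and the equalities $\rf(\sig) = \rf(\partial(\tau) + \sig)$, $\corf(\tau) = \corf(\delta(\sig) + \tau)$ are obtained there from Definition \ref{def:reference} and linearity over $\mathbb{Z}_2$, and the containments are dispatched with ``since a Morse sequence is a filtration, we immediately deduce the following.'' Your write-up merely makes explicit what the paper leaves tacit, most usefully the observation that the second containment $\delta(\sig) + \tau \subseteq K \setminus K_i$ rests on the free-pair property of $(\sig,\tau)$ for $K_i$ and not on the nesting of the filtration alone.
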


If $K = \emptyset$, then the empty map from $\emptyset$ to $\{ \emptyset \}$
is the only reference and also the only coreference map for $\ms$. 

Now, suppose that $K \not= \emptyset$.  Let $\rf$ and $\corf$ be two frames that satisfy the conditions (A) and (B) of Prop.~\ref{prop:reference0}.
It may be seen that we are in the setting of a
recursive characterization of these frames. We have $\rf(\ka_1) = \ka_1$ since $\ka_1$ is critical. We have $\corf(\ka_k) = \ka_k$
if $\ka_k$ is critical. If $\ka_k = (\sig,\tau)$ is a regular pair, we have $\corf(\sig) = 0$, and $\corf(\tau) = \corf(\delta(\sig) + \tau) = \corf(0) = 0$. 
This gives the base cases of the recursive characterization. It follows that $\rf$ and $\corf$
are fully specified. Clearly these frames are reference
and coreference maps. 

Thus, by Prop. \ref{prop:reference0}, we obtain the following result. \\

\begin{theorem} \label{pro:unit}
Any Morse sequence admits a unique reference map and a unique coreference map.\\
\end{theorem}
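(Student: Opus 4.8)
The plan is to derive Theorem~\ref{pro:unit} from Proposition~\ref{prop:reference0}, which shows that any reference map $\rf$ (respectively coreference map $\corf$) for $\ms$ must satisfy conditions (A) and (B). I would argue that these two conditions amount to a well-founded recursive specification of the map along the simplex-wise sequence $\diamond \ms = \langle \ka_1, \ldots, \ka_k \rangle$, and conversely that the map produced by this recursion is a genuine reference (respectively coreference) map. The case $K = \emptyset$ is immediate, since the empty map is the only frame available, so I assume $K \neq \emptyset$ throughout.

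For uniqueness of $\rf$, I would induct on the index $i$ from left to right and show that (A) and (B) force the value of $\rf$ on each simplex added at step~$i$. The base case is $\ka_1$, which is critical (a filling of $\emptyset$, as observed after Definition~\ref{def:seq1}), so $\rf(\ka_1) = \ka_1$. For the inductive step: if $\ka_i = \nu$ is critical, then (A) gives $\rf(\nu) = \nu$; if $\ka_i = (\sig,\tau)$ is regular, then (B) gives $\rf(\tau) = 0$ and $\rf(\sig) = \rf(\partial(\tau) + \sig)$. The crucial point is the containment $\partial(\tau) + \sig \subseteq K_{i-1}$ supplied by~(B): every simplex occurring in this chain was added strictly before step~$i$, so its $\rf$-value is already fixed by the induction hypothesis, and hence $\rf(\sig)$ is determined by linearity. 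Since the steps partition $K = \ha{W} \cup \un{W} \cup \ov{W}$, this assigns a value to every simplex, so $\rf$ is unique. The argument for $\corf$ is dual: I would induct from right to left, using the base case $\ka_k$ (where $\delta(\sig) + \tau \subseteq K \setminus K_k = \emptyset$ forces $\corf(\tau) = \corf(0) = 0$) and the containment $\delta(\sig) + \tau \subseteq K \setminus K_i$ at a general step to reduce $\corf(\tau)$ to values already fixed for later simplices.

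For existence, I would take the recursion above as a definition of $\rf$ on the basis $K$ of the spaces $K[p]$, extended by linearity, and then verify the two requirements of Definition~\ref{def:reference}. First, the frame condition $\rf(\nu) \in \ha{W}[p]$ for $\nu \in K^{(p)}$ follows by the same left-to-right induction: critical and upper-regular $p$-simplices map to $\nu$ or $0$ in $\ha{W}[p]$, while for a lower-regular $\sig$ paired with $\tau$ the chain $\partial(\tau) + \sig$ consists of $p$-simplices whose $\rf$-values already lie in $\ha{W}[p]$, so $\rf(\sig)$ does too. Second, the defining identities hold by construction: $\rf(\nu) = \nu$ on critical simplices, and on an upper-regular $\tau$ we have $\rf(\tau) = 0$ together with $\rf(\partial(\tau)) = \rf(\sig) + \rf(\partial(\tau) + \sig) = 0$, using $\rf(\sig) = \rf(\partial(\tau) + \sig)$ modulo~$2$. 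The verification for $\corf$ is symmetric.

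The main obstacle to watch is the well-foundedness of the recursion, that is, ensuring that computing $\rf(\sig)$ never refers to values not yet fixed, and dually for $\corf$. This is exactly guaranteed by the containment clauses of condition~(B) in Proposition~\ref{prop:reference0}, which themselves rest on the fact (Remark~\ref{rem:fil}) that a Morse sequence is a filtration. Once these containments are secured, both the uniqueness inductions and the frame/grading checks go through routinely.
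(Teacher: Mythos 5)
Your proposal is correct and follows essentially the same route as the paper: the paper also derives the theorem from Proposition~\ref{prop:reference0} by observing that conditions (A) and (B) yield a well-founded recursive characterization along $\diamond \ms$, with base case $\ka_1$ for $\rf$ (scanned left to right) and $\ka_k$ for $\corf$ (scanned right to left, using $\delta(\sig)+\tau \subseteq K \setminus K_k = \emptyset$). Your write-up merely spells out in more detail what the paper leaves as ``it may be seen'' and ``clearly'' --- in particular the explicit inductions and the verification that the recursively defined frames satisfy Definition~\ref{def:reference} --- so there is no substantive difference.
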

\noindent


If $\rf$ and $\corf$ are respectively the reference and the coreference maps of $\ms(K)$, we say that 
$(\rf,\corf)$ is \emph{the reference pair of $\ms(K)$}. If $c \in K[p]$, we say that $\rf(c)$
is the \emph{reference of $c$} and $\corf(c)$ is the \emph{coreference of $c$} (for $\ms$). \\
By Prop. \ref{prop:reference0}, there is a simple way to obtain the values of $\rf$ and $\corf$, 
see \cite{BN25}:

\begin{itemize}
\item The reference map $\rf$ can be computed by scanning the sequence $\diamond \ms$
from the left to the right and by applying the conditions (A) and (B) for $\rf$.
\item  The coreference map $\corf$ can be computed
by scanning the sequence $\diamond \ms$
from the right to the left and by applying the conditions (A) and (B) for $\corf$. 
\end{itemize}
In this context, we have an efficient method to label each simplex of a Morse sequence with a set of critical simplices,
providing global information about the topology of the complex.

\ELIMINE{
As a limit case, observe that:
\begin{itemize}[noitemsep,topsep=0pt]
\item If $\tau$ is a facet of $K$, then we have $\Upsilon(\tau) = 0$ whenever $\tau$ is not critical.
\item If $\sig$ is a 0-simplex of $K$, then we have $\Upsilon^*(\sig) = 0$ whenever $\sig$ is not critical.
\end{itemize}
\noindent

 Let $(\rf,\corf)$ be the reference pair of $\ms(K)$ and let $\nu \in K$. We observe that: \\
 - We have $\nu \in \rf (\nu)$ if and only if $\nu$ is critical for $\ms$. \\
 - We have $\nu \in \corf (\nu)$ if and only if $\nu$ is critical for $\ms$. \\
}

\begin{figure*}[tb]
    \centering
    \begin{subfigure}[t]{0.32\textwidth}
        \centering
        \includegraphics[height=.9\textwidth]{Figures/torus.pdf}
        \caption{}
    \end{subfigure}%
    ~
    \begin{subfigure}[t]{0.32\textwidth}
        \centering
        \includegraphics[height=.9\textwidth]{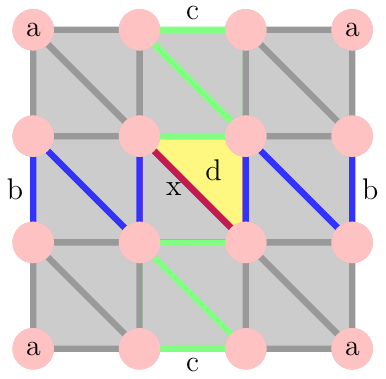}
        \caption{}
    \end{subfigure}%
     ~
    \begin{subfigure}[t]{0.32\textwidth}
        \centering
        \includegraphics[height=.9\textwidth]{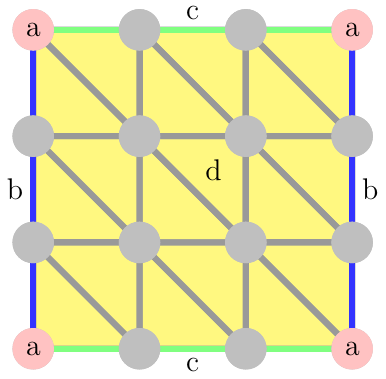}
        \caption{}
    \end{subfigure}
 \caption{The reference map (b) and the coreference map (c) of the Morse sequence of Fig. \ref{fig:MorseSequenceTorus}.
  See text for details.
 }
 \label{fig:MorseRefTorus}
\end{figure*}
We illustrate in Fig.~\ref{fig:MorseRefTorus} (b) the reference map $\rf$ of the Morse sequence of the torus shown in Fig. \ref{fig:MorseSequenceTorus} (f). 
In this figure, any simplex $\nu$ in grey is such that $\rf(\nu)=0$. We have $\rf(\nu)=a$ (in pink) for all simplices $\nu$ of dimension $0$.
All the simplices $\nu$ coloured in blue (resp. in green) are such that $\rf(\nu)=b$ (resp. $\rf(\nu)=c$).
We have $\rf(x) = b+c$ (in red) and $\rf(d) = d$ (in yellow). Using the same colouring  conventions, the coreference map $\corf$ of the same Morse sequence
is given in Fig.~\ref{fig:MorseRefTorus}~(c). 
In Fig.~\ref{fig:MorseRefDunceHat}, again with these conventions, we give the reference and the coreference maps of the Morse sequence 
of the dunce hat depicted in Fig. \ref{fig:MorseSequenceDunceHat} (f). 
In this final example, we observe that each $2$-simplex $\nu$ is such that $\corf(\nu) = c$, where $c$ is the critical $2$-simplex. 
However, this property does not hold if we choose a different critical segment than~$b$ when constructing the Morse sequence. 
In this case we have $\corf(\nu) = 0$ for some $2$-simplices $\nu$.

The following proposition is easy to prove. It provides another example of a reference and a coreference map 
with the simple case where the complex $K$  has a unique critical face, that is, when
$K$ is collapsible. \\

 \begin{proposition} \label{pro:colref} 
 Let $\ms(K)$ be a Morse sequence such that $\ha{W} = \{\ka \}$. Thus, the simplex $\ka$ is a vertex, and $K$ collapses onto $\ka$. Then,  for each $\nu \in K$, 
 \begin{enumerate}
 \item[-] we have $\rf(\nu) = \ka$ if $dim(\nu) = 0$, and $\rf(\nu) = 0$
 if $dim(\nu) \geq 1$, 
  \item[-]  we have $\corf(\nu) = \ka$ if $\nu = \ka$, and $\corf(\nu) = 0$ if $\nu \not= \ka$. \\
  \end{enumerate} 
 \end{proposition}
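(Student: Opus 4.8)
The plan is to exploit the recursive characterization of the reference pair supplied by Proposition~\ref{prop:reference0}, together with the structural fact that the unique critical face $\ka$ is a $0$-simplex. I would begin with two preliminary observations. Since every Morse sequence starts with a critical vertex and $\ka$ is here the only critical face, $\ka$ must be that initial vertex; hence $dim(\ka)=0$ and, by Remark~\ref{rem:col}, $K$ collapses onto $\{\ka\}$, as asserted in the statement. The decisive consequence is a dimension constraint on frame values: because $\ha{W}=\{\ka\}$ with $dim(\ka)=0$, we have $\ha{W}[p]=\{0\}$ for every $p\geq 1$, while $\ha{W}[0]=\{0,\ka\}$. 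As $\rf$ and $\corf$ are frames, this forces $\rf(\nu)=\corf(\nu)=0$ whenever $dim(\nu)\geq 1$, disposing of all higher-dimensional cases at once.

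It then remains to treat the vertices, and the key remark is that no vertex other than $\ka$ can be upper regular: an upper regular face is the top $\tau$ of a regular pair $(\sig,\tau)$, so $dim(\tau)=dim(\sig)+1\geq 1$. Thus every vertex $\nu\neq\ka$ is lower regular, say $\nu=\sig$ in a pair $(\nu,\tau)$ with $\tau$ an edge. For the coreference map the conclusion is immediate: condition (A) gives $\corf(\ka)=\ka$, and condition (B) gives $\corf(\sig)=0$, i.e.\ $\corf(\nu)=0$. Combined with the higher-dimensional case above, this settles $\corf$ entirely.

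For the reference map in dimension $0$ I would argue by induction on the position of a vertex in $\diamond\ms$, scanning left to right. The base case is $\rf(\ka)=\ka$ from condition (A). For a non-critical vertex $\nu$ paired with the edge $\tau$ at step $i$, condition (B) yields $\rf(\nu)=\rf(\partial(\tau)+\nu)$; since $\tau$ is an edge, $\partial(\tau)$ consists of $\nu$ together with the opposite endpoint $\mu$, so modulo $2$ we have $\partial(\tau)+\nu=\mu$, and the inclusion $\partial(\tau)+\nu\subseteq K_{i-1}$ recorded in condition (B) guarantees that $\mu$ occurs earlier in the sequence. By the induction hypothesis $\rf(\mu)=\ka$, whence $\rf(\nu)=\ka$. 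The only step requiring any care is precisely this last one—confirming that $\mu$ is a genuinely previously processed vertex—but it follows directly from the filtration property packaged into condition (B); the remainder is routine substitution, consistent with the claim that the proposition is easy to prove.
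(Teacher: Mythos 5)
Your proof is correct. Note that the paper offers no proof of this proposition at all --- it simply declares it ``easy to prove'' --- so there is no official argument to compare against; your write-up is a valid filling-in of the omitted details. Each step checks out: since $\ha{W}=\{\ka\}$ with $dim(\ka)=0$ (forced because $K_1$ is always a critical vertex), the frame condition $\rf(\nu),\corf(\nu)\in\ha{W}[p]=\{0\}$ for $p\geq 1$ instantly kills both maps in positive dimension, which is an efficient observation that spares you any examination of the sequence there; the remark that $\ov{W}$ contains no vertices correctly reduces the $0$-dimensional case of $\corf$ to the defining condition $\corf(\sig)=0$ on $\un{W}$ together with condition~(A); and your left-to-right induction for $\rf$ on vertices is sound, since a vertex can only enter the sequence as the critical vertex $\ka$ or as the lower member of a pair $(\nu,\tau)$ with $\tau$ an edge, in which case $\rf(\nu)=\rf(\partial(\tau)+\nu)=\rf(\mu)$ with the opposite endpoint $\mu\in K_{i-1}$, so the induction hypothesis applies. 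The only cosmetic remark is that your appeal to Remark~\ref{rem:col} for the collapsibility claim concerns consecutive critical faces, whereas here you need the tail of the sequence after the unique critical face; but since $\ha{W}=\{\ka\}$ means every step after $K_1$ is an expansion, $K$ collapses onto $K_1=\{\ka\}$ directly from the definition, so the conclusion stands.
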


\begin{figure*}[tb]
    \centering
    \begin{subfigure}[t]{0.32\textwidth}
        \centering
        \includegraphics[width=.9\textwidth]{Figures/duncehat.pdf}
        \caption{}
    \end{subfigure}%
    ~
    \begin{subfigure}[t]{0.32\textwidth}
        \centering
        \includegraphics[width=.9\textwidth]{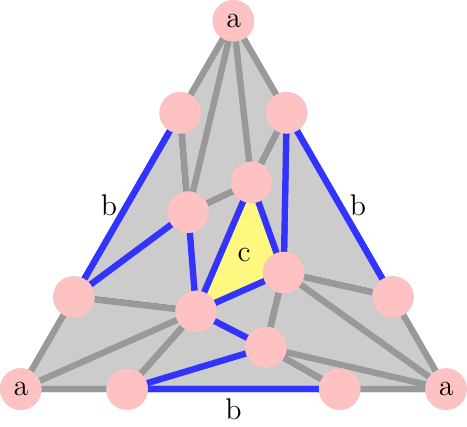}
        \caption{}
    \end{subfigure}%
     ~
    \begin{subfigure}[t]{0.32\textwidth}
        \centering
        \includegraphics[width=.9\textwidth]{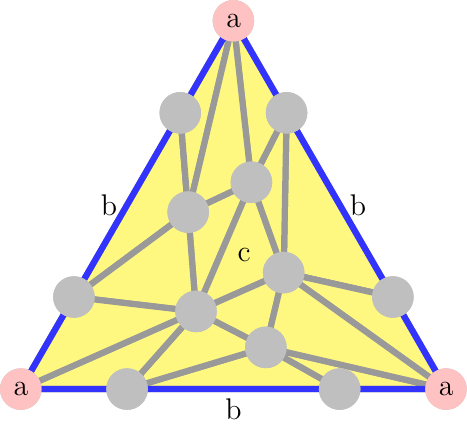}
        \caption{}
    \end{subfigure}
 \caption{
 The reference map (b) and the coreference map (c) of the Morse sequence of Fig. \ref{fig:MorseSequenceDunceHat}.
 See text for details.}
 \label{fig:MorseRefDunceHat}
\end{figure*}


 \begin{proposition} \label{pro:equi} 
 Let $\ms(K)$ and $\ms'(K)$ be two Morse sequences. Let $(\rf,\corf)$ be the reference pair of $\ms$,
 and $(\rf',\corf')$ be the reference pair of $\ms'$. 
If $\ms$ and $\ms'$ are equivalent, then we have
$\rf = \rf'$ and $\corf = \corf'$. \\
 \end{proposition}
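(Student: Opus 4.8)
The plan is to reduce everything to the uniqueness statement of Theorem~\ref{pro:unit}, by observing that the characterizing conditions of Definition~\ref{def:reference} never mention the order of the sequence: they involve only the sets $\ha{W}$, $\ov{W}$ together with the boundary operator $\partial$ (for the reference map), and the sets $\ha{W}$, $\un{W}$ together with the coboundary operator $\delta$ (for the coreference map). So rather than comparing the two recursive constructions step by step, I would show that the reference map $\rf'$ of $\ms'$ already satisfies the conditions that single out the reference map of $\ms$, and then invoke uniqueness to conclude $\rf = \rf'$; the argument for $\corf = \corf'$ is symmetric.

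First I would record that the three sets $\ha{W}$, $\ov{W}$, $\un{W}$ are entirely recoverable from the gradient vector field $\msc$. Indeed $\un{W}$ and $\ov{W}$ are, respectively, the sets of first and second components of the regular pairs in $\msc$, and the partition $K = \ha{W} \cup \un{W} \cup \ov{W}$ observed in Section~\ref{sec:seq} gives $\ha{W} = K \setminus (\un{W} \cup \ov{W})$. By Definition~\ref{def:seq2}, the equivalence of $\ms$ and $\ms'$ means precisely that their gradient vector fields coincide; since both sequences live on the same complex $K$, it follows that $\ha{W} = \ha{W'}$, $\ov{W} = \ov{W'}$ and $\un{W} = \un{W'}$, where the primed sets are the corresponding sets for $\ms'$.

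Next I would note that the very notion of a frame depends only on $\ha{W}$ and on the dimension of each simplex, both of which the two sequences share; hence a frame on $\ms$ is the same thing as a frame on $\ms'$. Consequently the conditions of Definition~\ref{def:reference} for a reference map --- namely $\rf(\nu) = \nu$ for $\nu \in \ha{W}$, together with $\rf(\tau) = 0$ and $\rf(\partial(\tau)) = 0$ for $\tau \in \ov{W}$ --- are literally the same conditions whether read for $\ms$ or for $\ms'$. Therefore $\rf'$, which satisfies these conditions for $\ms'$, also satisfies them for $\ms$; by the uniqueness granted in Theorem~\ref{pro:unit}, $\rf = \rf'$. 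Replacing $\ov{W}$ by $\un{W}$ and $\partial$ by $\delta$ throughout yields $\corf = \corf'$ in exactly the same manner.

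I expect no genuine obstacle of a computational nature here; the entire content of the proof is the observation that Definition~\ref{def:reference} is order-free. The one point that must be stated with care is precisely this order-independence: I would make explicit that neither the definition of a frame nor the defining equalities of $\rf$ and $\corf$ refer to the ordering $\langle K_0, \ldots, K_k \rangle$ beyond the data $\ha{W}$, $\ov{W}$, $\un{W}$ that it determines, so that equivalence --- that is, equality of gradient vector fields --- really does force equality of the two reference pairs.
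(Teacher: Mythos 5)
Your proof is correct, but it takes a genuinely different route from the paper's. The paper essentially re-runs the induction underlying Theorem~\ref{pro:unit}: writing $\ms = \langle K_0, \ldots, K_k \rangle$, it shows by induction on $i$ that $\rf$ and $\rf'$ agree on $K_i$, using conditions (A) and (B) of Proposition~\ref{prop:reference0} --- for a regular pair $\ka_i = (\sig,\tau)$ one has $\rf(\sig) = \rf(\partial(\tau)+\sig)$ with $\partial(\tau)+\sig \subseteq K_{i-1}$, so the induction hypothesis applies --- and then argues dually for $\corf$. You instead factor the entire argument through uniqueness: since Definition~\ref{def:reference} mentions only the data $\ha{W}$, $\ov{W}$, $\un{W}$ together with the operators $\partial$ and $\delta$, and since these sets are recoverable from the gradient vector field $\msc$ (as you make explicit: $\un{W}$ and $\ov{W}$ are the first and second projections of $\msc$, and $\ha{W} = K \setminus (\un{W} \cup \ov{W})$ by the partition noted in Section~\ref{sec:seq}), the map $\rf'$ satisfies verbatim the conditions characterizing the reference map of $\ms$, and Theorem~\ref{pro:unit} concludes. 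Your version is cleaner --- it needs no fresh induction, all inductive work having been paid for once in Theorem~\ref{pro:unit} --- and it in fact establishes something marginally stronger than the proposition: two Morse sequences on $K$ that merely induce the same partition $K = \ha{W} \cup \un{W} \cup \ov{W}$, without necessarily sharing the pairing $\msc$ itself, already have equal reference pairs. What the paper's approach buys in exchange is a self-contained argument that mirrors the left-to-right (and right-to-left) scanning procedure used to compute $\rf$ and $\corf$, in keeping with its computational emphasis. The one point your reduction hinges on --- that neither the notion of a frame nor the defining equalities of Definition~\ref{def:reference} refer to the ordering beyond the sets it determines --- is exactly the point you flag and justify, so there is no gap.
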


\begin{proof}
Let $\ms$ and $\ms'$ be two equivalent sequences. We have $\dd{W} = \dd{W'}$ and $\ha{W} = \ha{W}'$. 
We set $\ms = \langle \emptyset = K_0,\ldots,K_k = K \rangle$ and 
$\diamond \ms = \langle \ka_1, \ldots, \ka_k \rangle$. 
The simplex $\ka_1$ is critical for both $\ms$ and $\ms'$, thus $\rf(\ka_1) =\rf'(\ka_1)$.
Let $i$ such that $2 \leq i \leq k$. Suppose we have $\rf(\nu) = \rf'(\nu)$ for each $\nu \in K_{i-1}$. 
If $\ka_i \in \ha{W}$, then $\ka_i \in \ha{W}'$, thus we have $\rf(\ka_i) =\rf'(\ka_i)= \ka_i$.
If $\ka_i = (\sig,\tau)$ is in $\dd{W}$, then 
$\rf(\tau) = 0$ and $ \rf(\sig) = \rf(\partial(\tau) + \sig)$.
Since $\ka_i$ is in $\dd{W}'$, we have $\rf'(\tau) =0$. 
We have $\partial(\tau) + \sig \subseteq K_{i-1}$, thus by the induction hypothesis 
$ \rf(\sig) = \rf'(\partial(\tau) + \sig) = \rf'(\sig)$. Therefore, we have $\rf(\nu) = \rf'(\nu)$ for each $\nu \in K_{i}$. 
We obtain $\rf = \rf'$. The equality $\corf = \corf'$ may be derived in the same way.
\end{proof}

It should be noted that the converse of the previous proposition is, in general, not true. 
Let us consider again the two sequences $\overrightarrow{V}(T)$ and $\ms(T)$ on the elementary triangle $T$ 
presented in the previous section. 
We have $\widehat{V} = \widehat{W}= \{ \{a\} \}$, but $\ddot{V} \not= \msc$.
By Proposition \ref{pro:colref}, these two sequences have the same reference and coreference maps.

The following result will be used in the sequel. It is simply obtained by linearity from the definitions of $\rf$ and $\corf$. \\

 \begin{proposition} \label{pro:cons1} 

   Let $(\rf,\corf)$ be the reference pair of a Morse sequence $\ms(K)$. 
\begin{enumerate}
\item[-] If $C \subseteq \overline{W}(p)$, then $\rf(C) = 0$ and $\rf(\partial(C)) = 0$.   
\item[-] If $D \subseteq \underline{W}(p)$, then $\corf(D) = 0$ and $\corf(\delta(D)) = 0$.
\end{enumerate}
 \end{proposition}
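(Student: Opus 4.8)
The plan is to reduce both assertions to the single-simplex statements already contained in Definition~\ref{def:reference}, and then pass to arbitrary chains purely by linearity.

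First I would recall that $\rf$ and $\corf$ are frames on $\ms$, so each is the linear extension of its values on simplices: for every $p$-chain $c$ one has $\rf(c) = \sum_{\nu \in c} \rf(\nu)$ and $\corf(c) = \sum_{\nu \in c} \corf(\nu)$. I would also recall from Section~\ref{sec:hom} that the boundary and coboundary of a chain distribute over its summands, i.e.\ $\partial(C) = \sum_{\tau \in C} \partial(\tau)$ and $\delta(D) = \sum_{\sig \in D} \delta(\sig)$.

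For the first item, view $C \subseteq \ov{W}(p)$ as a $p$-chain. By linearity, $\rf(C) = \sum_{\tau \in C} \rf(\tau)$, and since each $\tau$ lies in $\ov{W}$, Definition~\ref{def:reference} gives $\rf(\tau) = 0$, so the sum vanishes. For the boundary part I would combine the two distributivity facts above: $\rf(\partial(C)) = \rf\big(\sum_{\tau \in C} \partial(\tau)\big) = \sum_{\tau \in C} \rf(\partial(\tau))$, and each term is zero because Definition~\ref{def:reference} also asserts $\rf(\partial(\tau)) = 0$ for every $\tau \in \ov{W}$. Hence $\rf(\partial(C)) = 0$. The second item is exactly dual, and I would prove it by the same two lines with the substitutions $\rf \mapsto \corf$, $\ov{W} \mapsto \un{W}$, $\partial \mapsto \delta$: for $D \subseteq \un{W}(p)$ linearity gives $\corf(D) = \sum_{\sig \in D} \corf(\sig) = 0$, and $\corf(\delta(D)) = \sum_{\sig \in D} \corf(\delta(\sig)) = 0$, both using the coreference clause $\corf(\sig) = \corf(\delta(\sig)) = 0$ of Definition~\ref{def:reference}.

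There is really no obstacle here; the statement is a bookkeeping consequence of the linearity of $\rf$ and $\corf$ together with the single-simplex vanishing already built into the definition. The only thing to verify carefully is that the boundary (resp.\ coboundary) of a chain is the sum of the boundaries (resp.\ coboundaries) of its simplices, which is precisely how $\partial_p$ and $\delta^p$ were defined in Section~\ref{sec:hom}.
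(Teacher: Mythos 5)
Your proof is correct and follows exactly the route the paper intends: the paper gives no separate argument, stating only that the proposition ``is simply obtained by linearity from the definitions of $\rf$ and $\corf$,'' which is precisely the two-step linearity computation you spell out. Your only addition is the explicit check that $\partial$ and $\delta$ distribute over the simplices of a chain, which is indeed just their definition in Section~\ref{sec:hom}, so nothing is missing.
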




\section{Gradient paths and cogradient paths}
\label{sec:grad}
Reference maps are closely related to the notion of a gradient path.
We first recall the classical definition of such a path 
(see also \cite{forman2002discretecohomology}).\\
Let $\ms(K)$ be a Morse sequence. We consider the sequences: 
 \begin{itemize}
\item $\pi = \langle \sigma_0, \tau_0, \ldots,\sigma_{k-1}, \tau_{k-1}, \sig_k \rangle$, $k \geq 0$,
 with $\sigma_i \in K^{(p)}$, $\tau_i \in K^{(p+1)}$. 
We say that $\pi$ is a \emph{gradient path in $\ms$ (from $\sig_0$ to $\sig_k$)} if, for any $i \in [0,k-1]$,
the pair $(\sigma_i,\tau_{i})$ is regular for $\ms$ and
$\sigma_{i+1} \in \partial(\tau_{i})$, with $\sigma_{i+1} \not= \sig_{i}$.
The path $\pi$ is {\em trivial} if $k=0$, that is, if
$\pi = \langle \sigma_0 \rangle$ with $\sigma_0 \in K^{(p)}$.
\item $\pi = \langle \tau_0, \sig_1,\tau_1, \ldots,\sig_{k}, \tau_{k} \rangle$, $k \geq 0$, with $\tau_i \in K^{(p)}$, $\sigma_i \in K^{(p-1)}$.
We say that $\pi$ is a \emph{cogradient path in $\ms$ (from $\tau_0$ to $\tau_k$)} if, for any $i \in [1,k]$,
the pair $(\sigma_i,\tau_{i})$ is regular for $\ms$ and
$\tau_{i-1} \in \delta(\sig_{i})$, with $\tau_{i} \not= \tau_{i-1}$. 
The path $\pi$ is {\em trivial} if $k=0$, that is, if 
$\pi = \langle \tau_0 \rangle$ with $\tau_0 \in K^{(p)}$. 
\ELIMINE{
\item Let $\pi = \langle \tau_0, \sig_0, \ldots, \tau_{k-1}, \sig_{k-1}, \tau_{k} \rangle$, $k \geq 0$,
be a sequence with $\tau_i \in K^{(p)}$, $\sigma_i \in K^{(p-1)}$.
We say that $\pi$ is a \emph{cogradient path in $\ms$ (from $\tau_0$ to $\tau_k$)} if, for any $i \in [0,k-1]$,
the pair $(\sigma_i,\tau_{i})$ is regular for $\ms$ and
$\tau_{i+1} \in \delta(\sig_{i})$, with $\tau_{i+1} \not= \tau_{i}$.
The path $\pi$ is {\em trivial} if $k=0$, that is, if
$\pi = \langle \tau_0 \rangle$ with $\tau_0 \in K^{(p)}$. 
}
\end{itemize}
Observe that a gradient path may end at a critical face, and a cogradient path may begin at a critical face. 
Also note that the sequence obtained by reversing a gradient path is not a cogradient path unless the path is trivial.

 A gradient vector field which may be obtained from a Morse sequence is given in Figure \ref{fig:Grad} (a).
The 1D (resp. 2D) regular pairs correspond to black (resp. green) arrows, the critical faces are in red. 
There are two gradient paths from the face $a$ to the critical face $b$. 
The gradient path beginning at $a$ splits into two gradient paths at a triangle, and these two paths merge at a segment. 
Note that this gradient vector field is not obtained by a maximal increasing or decreasing scheme, 
otherwise we would have only one critical face.

\begin{figure*}[tb]
    \centering
    \begin{subfigure}[t]{0.5\textwidth}
        \centering
        \includegraphics[width=0.9\textwidth]{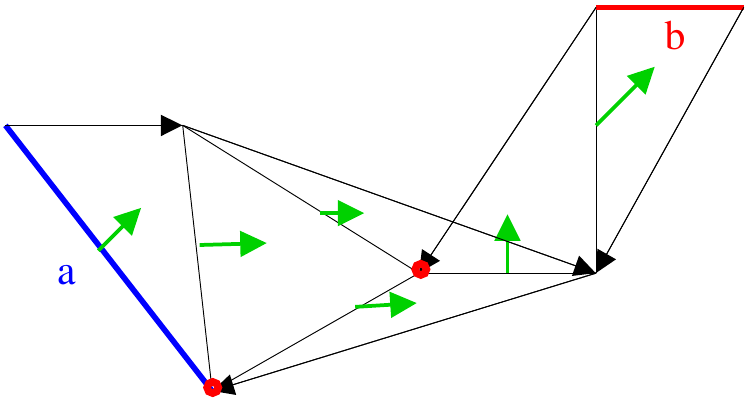}
        \caption{}
    \end{subfigure}%
    ~
    \begin{subfigure}[t]{0.5\textwidth}
        \centering
        \includegraphics[width=0.9\textwidth]{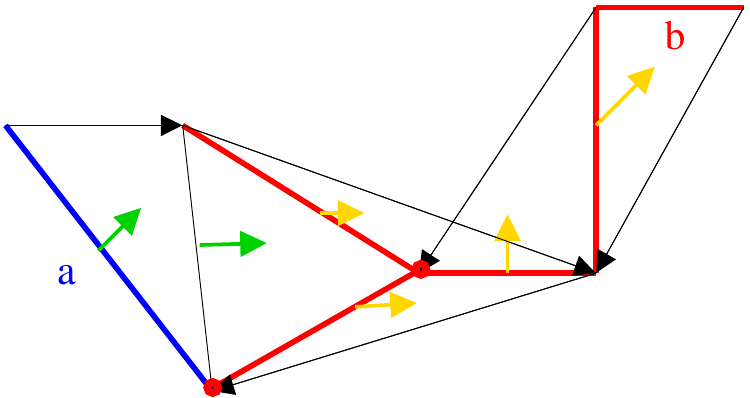}
        \caption{}
    \end{subfigure}%
 \caption{(a) A gradient vector field: there are two gradient paths from $a$ to $b$. \\
  (b)  The only regular pairs involved in a $\rf$-path ending at $b$ are in yellow: thus there is no $\rf$-path from $a$ to $b$. See text for details.}
 \label{fig:Grad}
\end{figure*}

The parity of the number of gradient paths between two simplices  is a classical tool for extracting information about the homology of a complex. 
With our definitions, this may be accomplished through the following result. \\

 \begin{theorem} \label{pro:grad1} 
 Let $(\rf,\corf)$ be the reference pair of a Morse sequence $\ms(K)$. \\
 If $\kappa \in \ha{W}$ and $\nu \in K$, then: 
 \begin{enumerate}
\item $\kappa \in \rf (\nu)$ if and only if the number of gradient paths from $\nu$ to $\kappa$ is odd. 
\item $\kappa \in \corf (\nu)$ if and only if the number of cogradient paths from  $\kappa$ to $\nu$ is odd. \\
\end{enumerate}
 \end{theorem}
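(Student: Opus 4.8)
The plan is to prove each statement by exhibiting a ``path-counting frame'' and identifying it with the corresponding member of the reference pair via the uniqueness of Theorem~\ref{pro:unit}. For part~(1), for each critical face $\kappa$ and each $\nu \in K^{(p)}$, let $N_\kappa(\nu)$ denote the number of gradient paths from $\nu$ to $\kappa$, reduced modulo $2$. I first record that this is well defined: along any gradient path $\langle \sigma_0, \tau_0, \sigma_1, \ldots \rangle$, condition (B) of Prop.~\ref{prop:reference0} gives $\partial(\tau_i) + \sigma_i \subseteq K_{j-1}$, where $j$ is the step at which the regular pair $(\sigma_i,\tau_i)$ is introduced; hence $\sigma_{i+1}$ is introduced strictly earlier in $\ms$ than $\sigma_i$, the step of introduction strictly decreases along the path, and so only finitely many paths exist. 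I then set $g(\nu) = \sum_{\kappa \in \ha{W}} N_\kappa(\nu)\,\kappa$. Since every gradient path preserves dimension and ends at a critical face of the same dimension as $\nu$, the image $g(\nu)$ lies in $\ha{W}[p]$, so $g$ is a frame on $\ms$; by Theorem~\ref{pro:unit} it suffices to show that $g$ satisfies the conditions (A) and (B) governing $\rf$, for this forces $g = \rf$ and hence $\kappa \in \rf(\nu)$ iff $N_\kappa(\nu) = 1$.

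The heart of the argument is a decomposition of gradient paths. If $\nu$ is critical, the only gradient path issuing from $\nu$ is the trivial one $\langle \nu \rangle$, since a non-trivial path would require $\nu$ to be lower regular; hence $N_\kappa(\nu) = 1$ iff $\kappa = \nu$, giving $g(\nu) = \nu$, which is condition (A). If $(\sigma,\tau)$ is a regular pair, then $\tau \in \ov{W}$ is the upper member of no pair other than its own and of no pair as a \emph{lower} member, so again the only path from $\tau$ is trivial and, as $\kappa \neq \tau$, we get $g(\tau) = 0$. For the lower member $\sigma$, the face $\tau$ is the unique partner of $\sigma$, so every non-trivial gradient path from $\sigma$ begins with the pair $(\sigma,\tau)$, then selects some $\sigma' \in \partial(\tau) \setminus \{\sigma\}$, and continues as a gradient path from $\sigma'$ to $\kappa$; the trivial path cannot reach the critical $\kappa \neq \sigma$. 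This bijection yields
\[
N_\kappa(\sigma) \equiv \sum_{\sigma' \in \partial(\tau)\setminus\{\sigma\}} N_\kappa(\sigma') \pmod 2 ,
\]
so that $g(\sigma) = g(\partial(\tau)\setminus\{\sigma\}) = g(\partial(\tau)+\sigma)$, which is exactly condition (B). Hence $g = \rf$ and part~(1) follows.

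Part~(2) is entirely dual, obtained by the same method applied to $\corf$ while scanning $\ms$ from right to left. Fixing $\kappa$, I let $M_\kappa(\nu)$ be the number of cogradient paths from $\kappa$ to $\nu$ modulo $2$; here condition (B) gives $\delta(\sigma)+\tau \subseteq K \setminus K_i$, so each cogradient step moves to a face introduced strictly later, again guaranteeing finiteness. Setting $h(\nu) = \sum_{\kappa \in \ha{W}} M_\kappa(\nu)\,\kappa$, I would verify conditions (A) and (B) for $\corf$: a critical or lower-regular $\nu$ is the upper member of no regular pair, so no non-trivial cogradient path can end at it, giving $h(\nu) = \nu$ and $h(\sigma) = 0$ respectively; and for an upper member $\tau$ of $(\sigma,\tau)$, decomposing a cogradient path ending at $\tau$ through its final pair $(\sigma,\tau)$ and the preceding face $\tau' \in \delta(\sigma)\setminus\{\tau\}$ gives $M_\kappa(\tau) \equiv \sum_{\tau'\in\delta(\sigma)\setminus\{\tau\}} M_\kappa(\tau') \pmod 2$, i.e. $h(\tau) = h(\delta(\sigma)+\tau)$. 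By Theorem~\ref{pro:unit}, $h = \corf$, which is part~(2).

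The routine parts are the verifications of (A) and of the ``vanishing'' cases for upper/lower regular faces, which follow at once from the disjointness of the partition $K = \ha{W} \cup \un{W} \cup \ov{W}$ and from each regular simplex having a unique partner. The one step demanding care is the path-decomposition bijection for a lower-regular $\sigma$ (and dually for an upper-regular $\tau$): I must confirm that uniqueness of the partner forces the first (resp. last) pair of the path, that the remaining subsequence is again a genuine gradient (resp. cogradient) path, and that the strict monotonicity of the introduction step excludes any extra or trivial contribution. This is the only place where the definitions of (co)gradient path and of the reference pair are used in full, and the only place where a subtle miscount could arise.
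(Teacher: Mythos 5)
Your proposal is correct: every step checks out, including the two delicate points you flag (the first-pair/last-pair decomposition bijections and the finiteness of path counts via the strictly monotone introduction index, which is exactly what the containments $\partial(\tau)+\sig \subseteq K_{i-1}$ and $\delta(\sig)+\tau \subseteq K\setminus K_i$ in Proposition \ref{prop:reference0} give). The route differs from the paper's in architecture, though it rests on the same combinatorial core. The paper proves the statement by an explicit induction along the filtration: writing $N(\nu,\kappa)$ for the number of gradient paths from $\nu$ to $\kappa$, it compares, at each step $K_i \setminus K_{i-1}$, the set $B$ of faces of $\partial(\tau)+\sig$ with odd path count against the set $C$ of faces whose reference contains $\kappa$, and uses the induction hypothesis $B = C$ together with the identity $N(\sig,\kappa)=\sum_{\nu' \in \partial(\tau)+\sig} N(\nu',\kappa)$ — precisely your decomposition through the unique pair $(\sig,\tau)$. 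For the cogradient statement the paper re-runs the induction on the reversed sequence $\langle K\setminus K_k, \ldots, K \setminus K_0\rangle$. You instead assemble the parities into a candidate frame $g$ (resp. $h$), verify that it satisfies conditions (A) and (B), and invoke the uniqueness of Theorem \ref{pro:unit}; the induction is thereby absorbed into the already-proved recursive characterization rather than repeated. What your packaging buys is modularity and a genuinely symmetric treatment of part (2) — you verify the dual axioms directly instead of reversing the sequence — and it isolates exactly where acyclicity is used (well-definedness of the counts, which you must argue explicitly, whereas in the paper's stage-by-stage induction finiteness is automatic since all paths from $K_i \setminus K_{i-1}$ factor through $K_{i-1}$). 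One small caveat: Theorem \ref{pro:unit} as stated asserts only existence and uniqueness of the reference map; what you actually need is the slightly stronger fact, established in the discussion preceding it, that \emph{any} frame satisfying (A) and (B) is fully determined by the recursion and is a reference map — your appeal is legitimate within the paper's framework, but it is worth citing that discussion rather than the bare statement of the theorem.
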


\begin{proof}  
Let $\ms (K) = \langle \emptyset = K_0,\ldots, K_i,\ldots, K_k =K \rangle$. \\
1) For the first statement we 
proceed by induction on $i$, the base case $\nu \in K_0$ is trivial. 
Suppose the property is true for each $\nu \in K_{i-1}$, $i \geq 1$.
We write $N(\nu,\kappa)$ for the number of gradient paths from $\nu$ to $\kappa$.
Let $\nu \in K_i \setminus K_{i-1}$. 

i)  Suppose $\nu \in \ha{W}$. We have $\rf(\nu) = \nu$. If $\nu \not= \kappa$ there is no gradient path from $\nu$ to $\kappa$, thus $N(\nu,\kappa) = 0$. 
If $\nu = \kappa$ there is one and only one gradient path from $\nu$ to $\kappa$, that is, the trivial path 
$\langle \kappa \rangle$, thus $N(\nu,\kappa) = 1$. 

ii) If $\nu \not\in \ha{W}$, we have $K_i \setminus K_{i-1}= \{ \sig,\tau \}$ where $(\sig,\tau)$ is a regular pair for $\ms$. \\
- Suppose $\nu = \tau$. We have $\rf (\tau) =0$ and $\tau \not= \ka$. Furthermore, there is no non-trivial gradient path starting at $\tau$.
Thus we have $N(\nu,\kappa) = 0$. \\
- Suppose $\nu = \sig$.  We have $\sig \not= \ka$.
We consider the sets: \\
\hspace*{\fill}
 $A = \{ \nu' \in \sig + \partial{\tau} \}$, $B = \{ \nu' \in A \; | \; N(\nu',\kappa)$ is odd$\}$, 
$C = \{ \nu' \in A \; | \; \kappa \in \rf (\nu') \}$. \hspace*{\fill} \\
 Any gradient path from $\nu$ to $\kappa$ is obtained by concatenating $(\sig,\tau)$ to a gradient path from a simplex $\nu' \in A$ to $\kappa$. Thus we have
$N(\nu,\kappa) = \sum_{\nu' \in A} N(\nu',\kappa)$. By parity considerations, we obtain: \\
\hspace*{\fill}
 $N(\nu,\kappa)$ is odd  if and only if $Card(B)$ is odd.
  \hspace*{\fill} $(P)$ \\
By the definition of $\rf$ we have: \\
\hspace*{\fill}
 $\kappa \in \rf(\nu)$  if and only if $Card(C)$ is odd.
  \hspace*{\fill} $(Q)$ \\
 We have $A \subseteq K_{i-1}$, thus $B \subseteq K_{i-1}$ and $C \subseteq K_{i-1}$. By our induction hypothesis we obtain $B = C$. 
The result follows from  $(P)$ and $(Q)$. \\ 
2) For the second statement we consider the sequence \\
\hspace*{\fill}
$\overleftarrow{W} (K) = \langle \overline{K}_0 = K \setminus K_k,\ldots, \overline{K}_j = K \setminus K_{k-j},\ldots, \overline{K}_k =  K \setminus K_{0} \rangle$. 
\hspace*{\fill} \\
We proceed by induction on the number $j$, the base case $\nu \in \overline{K}_0$ is trivial.
Then we consider $\nu \in \overline{K}_j \setminus \overline{K}_{j-1}$. 
We use the same kind of arguments as above. 
The set $A$ is replaced by the set $A' = \{ \nu' \in \tau + \delta{\sig} \}$,
thus $A' \subseteq  \overline{K}_{j-1}$. 
\end{proof}

Now, we introduce a refinement of gradient and cogradient paths. \\

\begin{definition} \label{rfpaths}
Let $(\rf,\corf)$ be the reference pair of a Morse sequence $\ms$. \\
We say that a gradient path $\langle \sigma_0, \tau_0, \ldots,\sigma_{k-1}, \tau_{k-1}, \sig_k \rangle$ in $\ms$ is a \emph{$\rf$-path} if,
for each $i \in [0,k]$, we have $\sig_k \in \rf(\sig_i)$. \\
We say that a cogradient path $\langle \tau_0, \sig_1,\tau_1, \ldots,\sig_{k}, \tau_{k} \rangle$
in $\ms$ is a \emph{$\corf$-path} if, for each $i \in [0,k]$, we have $\tau_0 \in \corf(\tau_i)$. \\
\end{definition}

We observe that a $\rf$-path  (resp. a $\corf$-path) necessarily ends (resp. begins) at a critical face.
By using the same arguments as in the proof of Theorem \ref{pro:grad1}, we derive:\\

 \begin{proposition} \label{pro:grad2}
 Let $(\rf,\corf)$ be the reference pair of a Morse sequence $\ms(K)$. \\
If $\kappa \in \ha{W}$ and $\nu \in K$, then: 
 \begin{enumerate}
\item $\kappa \in \rf (\nu)$ if and only if the number of $\rf$-paths from $\nu$ to $\kappa$ is odd. 
\item $\kappa \in \corf (\nu)$ if and only if the number of $\corf$-paths from  $\kappa$ to $\nu$ is odd.\\
 \end{enumerate}
 \end{proposition}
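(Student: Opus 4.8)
The plan is to imitate, almost verbatim, the inductive proof of Theorem~\ref{pro:grad1}, replacing the count of \emph{all} gradient paths by the count of $\rf$-paths. I would prove statement~(1) by induction on $i$ along $\ms(K) = \langle \emptyset = K_0,\ldots,K_k = K\rangle$, writing $M(\nu,\ka)$ for the number of $\rf$-paths from $\nu$ to $\ka$ and establishing that, for $\nu \in K_i$, one has $\ka \in \rf(\nu)$ if and only if $M(\nu,\ka)$ is odd. Statement~(2) then follows by the dual argument on the reversed sequence $\overleftarrow{W}(K)$, exactly as in the second half of the proof of Theorem~\ref{pro:grad1}.

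The one structural feature that is new here, and that I would isolate at the outset, is the \emph{head condition} built into Definition~\ref{rfpaths}: every face $\sig_j$ of an $\rf$-path ending at $\ka$ satisfies $\ka \in \rf(\sig_j)$, so in particular an $\rf$-path from $\nu$ to $\ka$ can exist only if $\ka \in \rf(\nu)$. Hence $M(\nu,\ka) = 0$ whenever $\ka \notin \rf(\nu)$; this settles the ``only if'' direction and, simultaneously, disposes of the inductive claim in every case where $\ka \notin \rf(\nu)$.

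It therefore remains to treat $\nu \in K_i \setminus K_{i-1}$ with $\ka \in \rf(\nu)$. If $\nu$ is critical, then $\rf(\nu)=\nu$ forces $\nu=\ka$, and the only $\rf$-path is the trivial path $\langle \ka \rangle$, so $M(\nu,\ka)=1$. If $\nu = \tau$ is upper regular, then $\rf(\tau)=0$ and this sub-case does not arise. The substantive sub-case is $\nu = \sig$ lower regular, with $(\sig,\tau)$ the pair introduced at step~$i$: since $(\sig,\tau)$ is the only regular pair whose lower face is $\sig$, every $\rf$-path from $\sig$ begins with the forced step $(\sig,\tau)$ and then continues as an $\rf$-path from some $\nu' \in A$, where $A = \pl(\tau)+\sig = \pl(\tau)\setminus\{\sig\}$; hence $M(\sig,\ka) = \sum_{\nu' \in A} M(\nu',\ka)$.

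The concluding parity computation is then identical to that of Theorem~\ref{pro:grad1}. Setting $C = \{\nu' \in A \mid \ka \in \rf(\nu')\}$ and recalling from Proposition~\ref{prop:reference0} that $A \subseteq K_{i-1}$, the induction hypothesis shows that $M(\nu',\ka)$ is odd exactly when $\nu' \in C$, whence $M(\sig,\ka) \equiv Card(C) \pmod 2$. On the other hand, Proposition~\ref{prop:reference0} also gives $\rf(\sig) = \rf(\pl(\tau)+\sig) = \rf(A)$, so $\ka \in \rf(\sig)$ if and only if $Card(C)$ is odd; as we are in the case $\ka \in \rf(\sig)$, both $Card(C)$ and $M(\sig,\ka)$ are odd, closing the induction. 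I expect the only genuinely new point, beyond routine bookkeeping, to be the clean use of the head condition: one must check that it makes $M(\nu,\ka)$ vanish precisely when $\ka \notin \rf(\nu)$, since this is the single feature distinguishing the present statement from Theorem~\ref{pro:grad1}, where no such condition is present and every gradient path is counted.
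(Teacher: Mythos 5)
Your proposal is correct and takes essentially the same route as the paper, which proves Proposition~\ref{pro:grad2} precisely ``by using the same arguments as in the proof of Theorem~\ref{pro:grad1}'': you replicate that induction with $\rf$-paths in place of gradient paths, and you rightly isolate the head condition of Definition~\ref{rfpaths} as the one new ingredient (it forces $M(\nu,\kappa)=0$ when $\kappa \notin \rf(\nu)$, and licenses prepending the forced step $(\sig,\tau)$ exactly in the remaining case $\kappa \in \rf(\sig)$, so the parity bookkeeping goes through unchanged). One cosmetic slip: the head condition settles the ``if'' direction of the biconditional (an odd, hence nonzero, count of $\rf$-paths forces $\kappa \in \rf(\nu)$), not the ``only if'' direction as you label it.
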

 
Therefore, we can deduce that: \\

 \begin{corollary} \label{cor:grad3}
 Let $(\rf,\corf)$ be the reference pair of a Morse sequence $\ms(K)$. \\
 If $\kappa \in \ha{W}$ and $\nu \in K$, then:
  \begin{enumerate}
 \item We have $\kappa \in \rf (\nu)$ if and only if there exists a $\rf$-path from $\nu$ to $\kappa$. 
\item We have $\kappa \in \corf (\nu)$ if and only if there exists a $\corf$-path from $\kappa$  to $\nu$.\\
 \end{enumerate}
 \end{corollary}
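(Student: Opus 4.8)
The plan is to prove each of the two equivalences by splitting it into its two implications, invoking Proposition~\ref{pro:grad2} for one direction and reading the other directly off Definition~\ref{rfpaths}. The key observation is that the defining condition of an $\rf$-path (resp. a $\corf$-path) already bakes the required endpoint inclusion into the path itself, so the passage from ``odd number of paths'' to ``a path exists'' loses nothing.

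For statement~1, I would first handle the implication $\kappa \in \rf(\nu) \Rightarrow$ existence of an $\rf$-path. Proposition~\ref{pro:grad2} states that $\kappa \in \rf(\nu)$ holds exactly when the number of $\rf$-paths from $\nu$ to $\kappa$ is odd; since an odd number is in particular nonzero, at least one $\rf$-path from $\nu$ to $\kappa$ must exist. This is the only direction that requires the earlier parity count. The converse is then immediate from the definition: if $\pi = \langle \sig_0, \tau_0, \ldots, \sig_k \rangle$ is an $\rf$-path from $\nu = \sig_0$ to $\kappa = \sig_k$, then by Definition~\ref{rfpaths} we have $\sig_k \in \rf(\sig_i)$ for every $i \in [0,k]$; evaluating at $i = 0$ yields $\kappa = \sig_k \in \rf(\sig_0) = \rf(\nu)$. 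Thus the mere existence of an $\rf$-path forces the inclusion, with no parity argument needed.

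Statement~2 would be treated symmetrically. The forward implication again uses Proposition~\ref{pro:grad2} (an odd, hence nonzero, count of $\corf$-paths guarantees existence), and the converse evaluates the defining condition of a $\corf$-path $\langle \tau_0, \sig_1, \ldots, \tau_k \rangle$ from $\kappa = \tau_0$ to $\nu = \tau_k$ at its terminal index $i = k$, giving $\kappa = \tau_0 \in \corf(\tau_k) = \corf(\nu)$. I do not anticipate any genuine obstacle: the substantive work lies entirely in Proposition~\ref{pro:grad2}, and the only content of the corollary is that the defining constraints of an $\rf$-path (resp. $\corf$-path) encode the endpoint inclusion at both ends, so that for the purpose of deciding membership in $\rf(\nu)$ (resp. $\corf(\nu)$) the existence of a single witnessing path is as informative as the full parity statement.
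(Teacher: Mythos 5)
Your proof is correct and follows essentially the same route as the paper, which deduces Corollary~\ref{cor:grad3} directly from Proposition~\ref{pro:grad2}: an odd count of $\rf$-paths (resp. $\corf$-paths) is nonzero, giving existence, while the converse is read off Definition~\ref{rfpaths} by evaluating the endpoint condition at $i=0$ (resp. $i=k$). Nothing is missing.
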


In Figure \ref{fig:Grad} (b), the only regular pairs involved in a $\rf$-path ending at the critical face $b$ are in yellow: 
we can check that there are precisely five segments $\nu$ such that $b \in \rf (\nu)$, these segments are in red.
We can also check that there is no $\rf$-path from $a$ to $b$. By the above corollary, it means that $b \not\in \rf (a)$. 
Since there are precisely two gradient paths from $a$ to $b$, this fact is also a consequence of Theorem \ref{pro:grad1}.

Thus, the property given in Corollary \ref{cor:grad3} does not hold if we consider arbitrary gradient (or cogradient) paths instead of $\rf$-paths (or $\corf$-paths):
$\rf$- and $\corf$-paths allow us to discard some paths that do not convey information about homology.


\section{The critical and cocritical complexes}
\label{sec:Mcomplex}
Now, thanks to the reference and the coreference of a Morse sequence $\ms$,
we introduce two maps  $\plc_p$ and $\dlc_p$ that are restricted to $\ha{W}$.
With these maps we obtain the critical and the cocritical complexes of a Morse sequence. \\
In the sequel of the paper, if we write $\ms$ for a Morse sequence, the symbols $\rf$ and $\corf$
stand respectively for the reference and coreference maps of $\ms$. \\

\begin{definition} \label{def:criticbound}
Let $\ms$ be a Morse sequence. Let $\plc_p$ and $\dlc_p$ be the two linear maps: \\
\hspace*{\fill}
$\plc_p:$ $\ha{W}[p] \rightarrow  \ha{W}[p-1]$ and $\dlc_p:$ $\ha{W}[p] \rightarrow  \ha{W}[p+1]$,
\hspace*{\fill} \\
such that, 
for each $\nu \in \ha{W}(p)$, 
$\plc(\nu) = \rf_{p-1}(\pl(\nu))$ and  $\dlc(\nu) = \corf_{p+1}(\dl(\nu))$. \\
If $c \in  \ha{W}[p]$, then $\plc_p(c)$ is the \emph{$\rf$-boundary of $c$}, $\dlc_p(c)$
is the \emph{$\corf$-coboundary of $c$}.\\
\end{definition}

Thus, for each $c \in  \ha{W}[p]$, we have: \\
- $\plc_p(c)  = \sum_{\nu \in c} \rf_{p-1}(\partial(\nu)) = \rf_{p-1} (\sum_{\nu \in c} \partial(\nu)) = \rf_{p-1}(\pl_p(c))$, and \\
- $\dlc_p(c) = \sum_{\nu \in c} \corf_{p+1}(\delta(\nu)) = \corf_{p+1} (\sum_{\nu \in c} \delta(\nu)) = \corf_{p+1}(\dl_p(c))$.

As an illustration, let us consider the following case of the dunce hat: \\
- In Fig.~\ref{fig:MorseRefDunceHat} (b), we see that $\plc(c) = \rf_1(\partial(c))=b+b+b=b$.\\
- In Fig.~\ref{fig:MorseRefDunceHat} (c), we see that $\dlc(b) = \corf_2(\delta(b))=c+c+c=c$.

This example suggests us the following theorem, which reflects an important duality relation between the reference and the coreference maps of a Morse sequence. \\

  \begin{theorem} \label{pro:grad11} Let $\ms$ be a Morse sequence on $K$
 and let $\sig, \tau \in \ha{W}$.
 We have: \\
 \hspace*{\fill}
  $\sig \in \plc(\tau)$ if and only if $\tau \in \dlc(\sig)$.
   \hspace*{\fill}
 \end{theorem}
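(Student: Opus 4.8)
The plan is to rewrite each side as the parity of a number of paths via Theorem~\ref{pro:grad1}, and then to recognize both parities as counting one and the same family of ``zigzag'' paths running from $\tau$ down to $\sig$. We may assume $\tau \in \ha{W}(p)$ and $\sig \in \ha{W}(p-1)$, since $\plc(\tau) \in \ha{W}[p-1]$ and $\dlc(\sig) \in \ha{W}[dim(\sig)+1]$ force $dim(\tau) = dim(\sig)+1$ for either membership to be possible. Unfolding the definitions, $\sig \in \plc(\tau)$ means that $\sig$ occurs with coefficient $1$ in $\rf(\partial(\tau)) = \sum_{\rho \in \partial(\tau)} \rf(\rho)$, that is, that $Card\{\rho \in \partial(\tau) \mid \sig \in \rf(\rho)\}$ is odd. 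Writing $N(\rho,\sig)$ for the number of gradient paths from $\rho$ to $\sig$, Theorem~\ref{pro:grad1} tells us that $\sig \in \rf(\rho)$ holds precisely when $N(\rho,\sig)$ is odd; since each $N(\rho,\sig)$ is congruent to its parity mod $2$, this cardinality has the same parity as $\sum_{\rho \in \partial(\tau)} N(\rho,\sig)$, which is exactly the number of pairs $(\rho,\pi)$ with $\rho \in \partial(\tau)$ and $\pi$ a gradient path $\rho \to \sig$. Dually, $\tau \in \dlc(\sig)$ holds iff the number of pairs $(\rho',\pi')$ with $\rho' \in \delta(\sig)$ and $\pi'$ a cogradient path $\tau \to \rho'$ is odd.

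I would then encode both kinds of pairs as a single combinatorial object: a \emph{zigzag} $Z = \langle \omega_0,\ldots,\omega_{2l+1}\rangle$ with $\omega_0 = \tau$ and $\omega_{2l+1} = \sig$, the $\omega_{2j}$ being $p$-simplices and the $\omega_{2j+1}$ being $(p-1)$-simplices, in which every descending step satisfies $\omega_{2j+1} \subset \omega_{2j}$ and every ascending step $(\omega_{2j+1},\omega_{2j+2})$ is a regular pair of $\ms$. Prepending $\tau$ to $\pi$ presents the pairs $(\rho,\pi)$ as exactly the zigzags whose initial descending step records $\rho = \omega_1$, subject to the \emph{gradient} constraints $\omega_{2j+1} \neq \omega_{2j-1}$ for $1 \le j \le l$; detaching instead the final descending step $\omega_{2l} \supset \omega_{2l+1}$ as the coface relation $\rho' = \omega_{2l} \in \delta(\sig)$ presents the pairs $(\rho',\pi')$ as the zigzags subject to the \emph{cogradient} constraints $\omega_{2j} \neq \omega_{2j-2}$ for $1 \le j \le l$. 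It therefore suffices to show that, over all zigzags satisfying the descent/ascent structure, these two constraint systems define the same set.

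The key is that each ascending step is a regular pair, so if $\eta$ denotes the involution exchanging the two members of a regular pair, then $\omega_{2j} = \eta(\omega_{2j-1})$ along the path; hence $\omega_{2j} = \omega_{2j-2} \iff \omega_{2j-1} = \omega_{2j-3}$. This identifies the cogradient constraint at index $j$ with the gradient constraint at index $j-1$ for every $2 \le j \le l$, leaving precisely two unpaired constraints: the gradient one, $\sig = \omega_{2l+1} \neq \omega_{2l-1}$, and the cogradient one, $\omega_2 \neq \omega_0 = \tau$. Both hold automatically, because $\omega_{2l-1}$ and $\omega_2$ lie in $\un{W}$ and $\ov{W}$ respectively (each being a member of a regular pair), while $\sig$ and $\tau$ lie in $\ha{W}$, and the three classes $\ha{W}, \un{W}, \ov{W}$ are pairwise disjoint. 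Consequently the two families of zigzags coincide, their common cardinality has a single parity, and that parity is simultaneously the truth value of ``$\sig \in \plc(\tau)$'' and of ``$\tau \in \dlc(\sig)$'', which is the claim. I expect the endpoint bookkeeping --- checking that the two leftover constraints are vacuous, and separately handling the degenerate case $l = 0$, where $Z = \langle \tau,\sig\rangle$ and both sides reduce to $\sig \in \partial(\tau)$ --- to be the main obstacle; the interior index shift is forced once the matching $\eta$ is brought in.
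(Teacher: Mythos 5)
Your proof is correct and follows essentially the same route as the paper's: both sides are reduced, via Theorem~\ref{pro:grad1} and parity considerations, to comparing the number of gradient paths from a face of $\tau$ to $\sig$ with the number of cogradient paths from $\tau$ to a coface of $\sig$, and these two families are identified by the same shift-by-one correspondence (prepend $\tau$, detach $\sig$) that the paper implements as its injections $f$ and $g$ --- your zigzag encoding is that bijection repackaged. Your explicit matching of the distinctness constraints through the pairing involution $\eta$, together with the observation that the two leftover endpoint constraints are vacuous because $\sig,\tau \in \ha{W}$ while the interior simplices are regular, merely makes precise the step the paper disposes of with ``we can check that $f(\pi)$ is a cogradient path''.
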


\ELIMINE{
\begin{proof} Let $\sig, \tau \in \msc$. Suppose $\sig \in \plc(\tau)$, thus
$\sig \in  \rf_{p-1}(\pl(\tau))$.
By Prop. there exists a $\rf$-path $\langle \sig' = \sigma_0, \tau_0, \ldots,\sigma_{k-1}, \tau_{k-1}, \sig_k = \sig \rangle$ 
from  a simplex $\sig' \in \pl(\tau)$ to~$\sig$. Thus, for each $i \in [0,k]$ , we have $\sig \in \rf(\sig_i)$.
Now let us consider cogradient path $\langle \tau_{k-1}, \sigma_{k-1}, \ldots, \tau_{0},  \sigma_0 = \sig' ,\tau \rangle$,

in $\ms$ is a \emph{$\corf$-path} if, for each $i \in [0,k]$, we have $\tau_k \in \corf(\tau_i)$.
 \qed
\end{proof}
}

\begin{proof} Let $\sig, \tau \in \ha{W}$. We write: \\
- $\Omega_\sig$ for the set of all gradient paths from a simplex $\nu \in \partial(\tau)$
to $\sig$, \\
- $\Omega^\tau$ for the set of all cogradient paths from $\tau$ to a simplex $\nu \in \delta(\sig)$. \\
We have $\sig \in \plc(\tau)$ if and only if $Card( \{ \nu \; | \; \nu \in \partial(\tau)$ and $\sig \in \rf(\nu) \})$ is odd. \\
Thus, by Theorem \ref{pro:grad1} and by parity considerations, we have $\sig \in \plc(\tau)$ if and only if $Card(\Omega_\sig)$ is odd. 
Similarly we have $\tau \in \dlc(\sig)$ if and only if $Card(\Omega^\tau)$ is odd. \\
We will establish the result by showing that $Card(\Omega_\sig) = Card(\Omega^\tau)$. \\
1) Let $\pi$ be a gradient path in $\Omega_\sig$. Then $\pi$ is of the form \\
\hspace*{\fill}
$\pi = \langle \sigma_0, \tau_0, \ldots,\sigma_{k-1}, \tau_{k-1}, \sig_k \rangle$, \hspace*{\fill} \\
with $\sig_0 \in \partial(\tau)$ and $\sigma_k = \sig$.
Let
$f(\pi)$ be the sequence \\
\hspace*{\fill}
$f(\pi) = \langle \tau, \sigma_0, \tau_0, \ldots,\sigma_{k-1}, \tau_{k-1} \rangle$. \hspace*{\fill} \\
We can check that $f(\pi)$ is a cogradient path from $\tau$ to $\tau_{k-1}$.
Furthermore we have $\tau_{k-1} \in \delta(\sig)$, thus $f(\pi)$ is in $\Omega^\tau$. \\
Let $f$ be the map $\pi \in \Omega_\sig \mapsto f(\pi) \in \Omega^\tau$ and let \\
\hspace*{\fill}
$\pi = \langle \sigma_0, \tau_0, \ldots,\sigma_{k-1}, \tau_{k-1}, \sig_k \rangle$
and $\pi' = \langle \sigma'_0, \tau'_0, \ldots,\sigma'_{l-1}, \tau'_{l-1}, \sig'_l \rangle$
\hspace*{\fill} \\
be two distinct paths in $\Omega_\sig$. Since $\sig = \sig_k = \sig'_l$, the two sequences \\
\hspace*{\fill}
$\langle \sigma_0, \tau_0, \ldots,\sigma_{k-1}, \tau_{k-1} \rangle$ and
$\langle \sigma'_0, \tau'_0, \ldots,\sigma'_{l-1}, \tau'_{l-1} \rangle$ \hspace*{\fill} \\
must be distinct.
It follows that $f(\pi) \not= f(\pi')$. Therefore the map $f$ is injective, which means that
$Card(\Omega_\sig) \leq Card(\Omega^\tau)$. \\
2) Now, let $\pi = \langle \tau_0, \sigma_1, \tau_1, \ldots,\sigma_{k}, \tau_{k} \rangle$ be a path in $\Omega^\tau$.
Thus, we have $\tau_0 = \tau$ and $\tau_k \in \delta(\sig)$. We proceed as above: \\
- The sequence $g(\pi) = \langle \sigma_1, \tau_1, \ldots,\sigma_{k}, \tau_{k}, \sig \rangle$
is a gradient path in $\Omega_\sig$, \\
- The map $g : \pi \in \Omega^\tau \mapsto g(\pi) \in \Omega_\sig$ is injective, \\
- Therefore we have $Card(\Omega^\tau) \leq Card(\Omega_\sig)$. 
\end{proof}


 \begin{theorem} \label{pro:label3}
  Let  $\ms$ be a Morse sequence. We have:\\
 \hspace*{\fill}
  $\plc_p \circ \rf_p = \rf_{p-1} \circ \partial_p$ and
  $\dlc_p \circ \corf_p = \corf_{p+1} \circ \delta_p$.
   \hspace*{\fill} \\
 \end{theorem}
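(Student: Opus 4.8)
The plan is to prove the first identity $\plc_p \circ \rf_p = \rf_{p-1}\circ \pl_p$ in full, and then to obtain the second one, $\dlc_p\circ\corf_p = \corf_{p+1}\circ\dl_p$, by the dual argument in which the Morse sequence is read from right to left and $\pl,\rf,\plc$ are systematically replaced by $\dl,\corf,\dlc$. For the first identity I would start from the observation recorded just after Definition~\ref{def:criticbound}: since $\rf_p(c)\in\ha{W}[p]$ for every $c\in K[p]$, we have $\plc_p(\rf_p(c)) = \rf_{p-1}(\pl_p(\rf_p(c)))$. Hence the claim is equivalent to $\rf_{p-1}(\pl(\rf_p(c))) = \rf_{p-1}(\pl(c))$, and by linearity of all the maps involved it suffices to verify this on each simplex $\nu\in K^{(p)}$.

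I would then proceed by induction on the index $i$, where $\ms(K) = \langle\emptyset = K_0,\ldots,K_k=K\rangle$, proving that $\rf_{p-1}(\pl(\rf(\nu))) = \rf_{p-1}(\pl(\nu))$ holds for every simplex $\nu\in K_i$ (at every dimension $p$). The base case $K_0=\emptyset$ is vacuous. For the step that inserts $\ka_i$, there are three cases. If $\ka_i=\nu$ is critical, then $\rf(\nu)=\nu$ and the identity is immediate. If $\ka_i=(\sig,\tau)$ is a regular pair, the upper regular face $\tau$ satisfies $\rf(\tau)=0$, so the left-hand side vanishes, while $\rf(\pl(\tau))=0$ by the defining property of a reference map (equivalently, Prop.~\ref{pro:cons1} applied to $\{\tau\}$), so the right-hand side vanishes too.

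The remaining case, the lower regular face $\sig$, is where the real work lies. Here I would use the relation $\rf(\sig)=\rf_p(\pl(\tau)\setminus\{\sig\})$ together with the crucial fact that $\pl(\tau)\setminus\{\sig\}\subseteq K_{i-1}$ --- this holds because in an elementary expansion $\tau$ is a facet whose only free face is $\sig$, so every other $p$-face of $\tau$ already belongs to $K_{i-1}$. Applying linearity and the induction hypothesis to each $p$-simplex occurring in $\pl(\tau)\setminus\{\sig\}$ gives $\rf_{p-1}(\pl(\rf(\sig))) = \rf_{p-1}(\pl(\pl(\tau)\setminus\{\sig\}))$. Finally, since $\sig\in\pl(\tau)$ we may write $\pl(\tau) = (\pl(\tau)\setminus\{\sig\}) + \sig$, and $\pl\circ\pl = 0$ then yields $\pl(\pl(\tau)\setminus\{\sig\}) = \pl(\sig)$, whence $\rf_{p-1}(\pl(\rf(\sig))) = \rf_{p-1}(\pl(\sig))$, completing the inductive step.

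I expect the main obstacle to be precisely this lower regular case: one must justify that the induction hypothesis is applicable to all faces appearing in $\rf(\sig)$ (which is exactly what $\pl(\tau)\setminus\{\sig\}\subseteq K_{i-1}$ guarantees) and then recognize that $\pl\circ\pl=0$ collapses the recursive expression back to $\pl(\sig)$. Once the first identity is established, I would note that the second is fully symmetric: reading the Morse sequence from right to left (as in the second part of the proof of Theorem~\ref{pro:grad1}) turns the lower regular analysis into an upper regular one, with $\corf(\tau)=\corf_{p+1}(\dl(\sig)\setminus\{\tau\})$, the inclusion $\dl(\sig)\setminus\{\tau\}\subseteq K\setminus K_i$, and $\dl\circ\dl=0$ playing the corresponding roles, together with the defining property $\corf(\dl(\sig))=0$ for $\sig\in\un{W}$.
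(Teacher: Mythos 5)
Your proposal is correct and takes essentially the same route as the paper: an induction along the Morse sequence with the identical case split (critical face, upper regular face $\tau$ with $\rf(\tau)=0$ and $\rf(\pl(\tau))=0$, lower regular face $\sig$), resting on the same two facts, namely the vanishing of $\rf$ on $\pl(\tau)$ and $\pl\circ\pl=0$. Your preliminary reduction of $\plc_p\circ\rf_p$ to $\rf_{p-1}\circ\pl_p\circ\rf_p$ and your simplex-wise (rather than chain-wise) induction are only a repackaging: in the hard case the paper replaces the chain $c$ by $c+\pl_{p+1}(\tau)$, which, specialized to $c=\{\sig\}$, is exactly your unfolding $\rf(\sig)=\rf(\pl(\tau)+\sig)$ followed by $\pl(\pl(\tau)+\sig)=\pl(\sig)$.
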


We will only give the first part of the proof of the theorem. 
The second part may be derived by simply exchanging the role of the operators 
$\partial$ and $\delta$. 

\emph{In the sequel
of the paper, we will proceed in the same manner for all propositions and theorems where we have such a duality. }

\begin{proof}
Let $\ms = \langle \emptyset = K_0,...,K_k =K \rangle$ be a Morse sequence on $K$, and let 
$\diamond \ms = \langle \ka_1, \ldots, \ka_k \rangle$. 
We consider the statement $(S_i)$: For each $c \in K_i[p]$,
 we have $\plc_p (\rf_{p} (c)) = \rf_{p-1} (\partial_p (c))$.
 We have $K_0[p] = \{\emptyset \}$. Thus $(S_0)$ holds. \\
 Suppose $(S_{i-1})$ holds, with $0 \leq i-1 \leq k-1$, and
let $c \in K_i[p]$. \\
1) Suppose $\ka_i = \nu$, with $\nu \in \ha{W}$. If $\nu \not\in c$, then we are done.
Otherwise, we have $c = c' \cup \{ \nu \}$, with $c' \in K_{i-1}[p]$.
Thus $\partial_p(c) = \partial_p (c') + \partial(\nu)$ and
$\rf_{p-1} (\partial_p (c)) = \rf_{p-1} (\partial_p (c')) + \rf_{p-1} (\partial(\nu))$.
By the induction hypothesis and by the definition of $\plc(\nu)$, we obtain
$\rf_{p-1} (\partial_p(c)) = \plc_p (\rf_{p} (c'))  + \plc(\nu)$.
Since the reference map $\rf$ is the identity on critical faces, we may write:\\
\hspace*{\fill}
$\rf_{p-1} (\partial_p(c)) = \plc_p (\rf_{p} (c')) + \plc_p(\rf_{p} (\nu ))$. \hspace*{\fill} \\
By linearity we obtain $\rf_{p-1} (\partial_p(c)) =  \plc_p (\rf_{p} (c))$. \\
2) Suppose $\ka_i  = (\sigma,\tau)$ is a free pair.
If $\sigma \not\in c$ and $\tau \not\in c$, then we are done. \\
2.1) Suppose $\sigma \in c$. Let $c' = c + \partial_{p+1}(\tau)$.
By Proposition \ref{pro:cons1}, we have $\rf_{p} (c') = \rf_{p} (c) + \rf_{p} (\partial_{p+1}(\tau)) = \rf_{p} (c)$. We also have
$\partial_p(c') = \partial_p(c) + \partial_p(\partial_{p+1}(\tau)) = \partial_p(c)$.  \\
But we see that $\sig \not\in c'$.
By the induction hypothesis, it follows that
$\plc_p (\rf_{p} (c')) = \rf_{p-1} (\partial_p (c'))$. By the previous equalities, we obtain
$\plc_p (\rf_{p} (c)) = \rf_{p-1} (\partial_p (c))$.\\
2.2) Suppose $\tau \in c$. We have $c = c' \cup \{ \tau \}$, with $c' \in K_{i-1}[p]$. Since $\rf (\tau) =0$, we obtain
$\rf_{p} (c) = \rf_{p} (c')$. Furthermore $\rf_{p-1} (\partial_p (c)) = \rf_{p-1} (\partial_p (c')) + \rf_{p-1} (\partial_p (\tau))
= \rf_{p-1} (\partial_p (c'))$.
By the induction hypothesis, we have $\plc_p (\rf_{p} (c')) = \rf_{p-1} (\partial_p (c'))$.
Therefore $\plc_p (\rf_{p} (c)) = \rf_{p-1} (\partial_p (c))$. \\
We conclude that, in all cases, the statement $(S_i)$ holds. 
\end{proof}

The two following results are direct consequences of Theorem \ref{pro:label3}.\\

 \begin{proposition} \label{pro:label1}
    Let $\ms(K)$ be a Morse sequence and let $c,c' \in K[p]$. 
    \begin{enumerate}
\item We have
$\rf_{p-1}(\partial_p(c)) = \rf_{p-1}(\partial_p(c'))$ whenever $\rf_{p} (c) = \rf_{p} (c')$.
\item We have
$\corf_{p+1}(\delta_p(c)) = \corf_{p+1}(\delta_p(c'))$ whenever $\corf_{p} (c) = \corf_{p} (c')$.
\end{enumerate}
 \end{proposition}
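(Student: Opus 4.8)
The plan is to deduce both statements directly from the operator identities furnished by Theorem \ref{pro:label3}, namely $\plc_p \circ \rf_p = \rf_{p-1} \circ \partial_p$ and $\dlc_p \circ \corf_p = \corf_{p+1} \circ \delta_p$. The guiding observation is that the hypotheses $\rf_p(c) = \rf_p(c')$ and $\corf_p(c) = \corf_p(c')$ are precisely the conditions under which the two sides of these composed-map identities reduce to a common value.

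For the first statement, I would begin from the assumption $\rf_p(c) = \rf_p(c')$ and apply the linear map $\plc_p$ to both sides, so that $\plc_p(\rf_p(c)) = \plc_p(\rf_p(c'))$. Rewriting each side by means of the identity $\plc_p \circ \rf_p = \rf_{p-1} \circ \partial_p$ then yields at once $\rf_{p-1}(\partial_p(c)) = \rf_{p-1}(\partial_p(c'))$, which is exactly the desired conclusion. For the second statement I would argue identically, replacing $\rf$, $\plc$, and $\partial$ by $\corf$, $\dlc$, and $\delta$ respectively, and invoking the dual identity $\dlc_p \circ \corf_p = \corf_{p+1} \circ \delta_p$: applying $\dlc_p$ to the equality $\corf_p(c) = \corf_p(c')$ gives $\corf_{p+1}(\delta_p(c)) = \corf_{p+1}(\delta_p(c'))$.

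Since all of the substantive work is already contained in Theorem \ref{pro:label3}, there is essentially no remaining obstacle: each statement follows by a single substitution. The only point that warrants any attention is to apply $\plc_p$ (respectively $\dlc_p$) and not some other map, so that the composition appearing on the left of the relevant identity matches the hypothesis exactly; but this is entirely routine. In short, the proposition is a direct corollary obtained by feeding the two equal inputs through the commuting squares of Theorem \ref{pro:label3}.
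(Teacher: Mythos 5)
Your proposal is correct and coincides with the paper's own proof: the paper likewise applies $\plc_p$ to the equality $\rf_p(c)=\rf_p(c')$ and invokes Theorem \ref{pro:label3} to rewrite both sides as $\rf_{p-1}(\partial_p(c))=\rf_{p-1}(\partial_p(c'))$, with the dual case handled symmetrically.
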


\begin{proof}
Let $c,c' \in K[p]$ with $\rf_{p} (c) = \rf_{p} (c')$. Thus $\plc_p  (\rf_{p} (c)) = \plc_p  (\rf_{p} (c'))$.
By Theorem \ref{pro:label3}, we have $\rf_{p-1} (\partial_p (c)) = \rf_{p-1} (\partial_p (c'))$. 
\end{proof}

 \begin{proposition} \label{pro:label4}
If $\ms$ is a Morse sequence, then: 
\begin{enumerate}
\item The maps $\plc_p $ are boundary operators. That is,
we have $\plc_p  \circ \plc_{p+1} = 0$. 
\item The maps $\dlc_p $ are coboundary operators. That is,
we have $\dlc_{p+1}  \circ \dlc_{p} = 0$.
\end{enumerate}
\end{proposition}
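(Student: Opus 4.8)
The plan is to reduce the claim $\plc_p \circ \plc_{p+1} = 0$ to the identity $\partial_p \circ \partial_{p+1} = 0$ that already holds in the chain complex $(K[p], \partial_p)$, using Theorem~\ref{pro:label3} as the bridge. The one extra fact I need is that the reference map acts as the identity on chains of critical faces: since $\rf(\nu) = \nu$ for every critical $\nu$ and $\rf_p$ is linear, we have $\rf_p(c) = c$ for every $c \in \ha{W}[p]$. This lets me freely insert $\rf_p$ in front of any critical chain so that Theorem~\ref{pro:label3} becomes applicable.

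For the first part, I would take $c \in \ha{W}[p+1]$. Since $c$ is a chain of critical faces, $\rf_{p+1}(c) = c$, so Theorem~\ref{pro:label3} gives $\plc_{p+1}(c) = \plc_{p+1}(\rf_{p+1}(c)) = \rf_p(\partial_{p+1}(c))$. Applying $\plc_p$ and invoking Theorem~\ref{pro:label3} a second time, now with the chain $\partial_{p+1}(c) \in K[p]$, yields $\plc_p(\plc_{p+1}(c)) = \plc_p(\rf_p(\partial_{p+1}(c))) = \rf_{p-1}(\partial_p(\partial_{p+1}(c)))$. Since $\partial_p \circ \partial_{p+1} = 0$, the argument of $\rf_{p-1}$ is $0$, hence $\plc_p(\plc_{p+1}(c)) = 0$. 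As $c$ ranges over the basis $\ha{W}^{(p+1)}$ of $\ha{W}[p+1]$, this proves $\plc_p \circ \plc_{p+1} = 0$ by linearity.

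For the second part, by the duality remark following Theorem~\ref{pro:label3}, exchanging $\partial$ with $\delta$, $\rf$ with $\corf$, and $\plc$ with $\dlc$, the same computation gives $\dlc_{p+1}(\dlc_p(c)) = \corf_{p+2}(\delta_{p+1}(\delta_p(c))) = 0$ for $c \in \ha{W}[p]$, using $\delta_{p+1} \circ \delta_p = 0$. I do not expect any genuine obstacle here: the whole content is packaged into Theorem~\ref{pro:label3}, and the argument is simply to \emph{sandwich} $\partial^2 = 0$ between two applications of it. The only point requiring a moment's care is checking that the intermediate chain $\partial_{p+1}(c)$ lies in $K[p]$ (not merely in $\ha{W}[p]$), so that the second application of Theorem~\ref{pro:label3}, which is stated for arbitrary chains in $K[p]$, is legitimate.
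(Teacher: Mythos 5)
Your proof is correct and follows essentially the same route as the paper: both arguments sandwich $\partial_p \circ \partial_{p+1} = 0$ inside an application of Theorem~\ref{pro:label3}, the only cosmetic difference being that the paper obtains the identity $\plc_{p+1}(c) = \rf_p(\partial_{p+1}(c))$ directly from the definition of $\plc$ (extended by linearity), whereas you derive it from Theorem~\ref{pro:label3} together with the observation that $\rf$ is the identity on critical chains. Your care in noting that $\partial_{p+1}(c)$ lies in $K[p]$, where Theorem~\ref{pro:label3} applies, is exactly the right check, and the duality argument for $\dlc$ matches the paper's stated convention.
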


\begin{proof}
Let $\nu \in \ha{W}$, with $\nu \in K^{(p+1)}$.
We have $\plc_{p+1} (\nu) = \rf_{p} (\partial_{p+1} (\nu))$.
By Theorem \ref{pro:label3}, we have
$\plc_p(\rf_{p} (\partial_{p+1} (\nu))) = \rf_{p-1} (\partial_p ( \partial_{p+1} (\nu))) = \rf_{p-1} (0) = 0$. \\
Thus $\plc_p (\plc_{p+1} (\nu)) = 0$, which gives the result by linearity. 
\end{proof}


Since  $\plc_p \circ \plc_{p+1} = 0$, the pair $( \ha{W}[p], \plc_p )$ satisfies the definition of a chain complex, see Section \ref{sec:basic}.
Similarly, $(\ha{W}[p], \dlc_p)$ satisfies the definition of a cochain complex. \\

\begin{definition}
Let $\ms$ be a Morse sequence. 
We say that the chain complex $( \ha{W}[p], \plc_p )$ is the \emph{critical complex of $\ms$}
and the cochain complex $( \ha{W}[p], \dlc_p)$ is the \emph{cocritical complex of $\ms$}. \\
\end{definition}

This notion of a critical complex
is equivalent to the classical notion, also sometimes called \emph{Morse complex}, given in the context of discrete Morse theory.
This fact may be verified using  Theorem 8.31 of \cite{Sco19}, Theorem \ref{pro:grad1}, and the very definition of the differential $\plc_p$.


As an example, we give hereafter the complete description of the critical and the cocritical complex of the Morse sequence of the torus depicted 
Fig.~\ref{fig:MorseSequenceTorus}. 
For this sequence $\ms$, we have $\ha{W}^{(0)} = \{a\}$, $\ha{W}^{(1)} = \{b,c\}$, and $\ha{W}^{(2)} = \{d\}$. 
The values of the maps 
$\plc_p$ and  $\dlc_p$ may be obtained from  the figures ~\ref{fig:MorseRefTorus} (b) and ~\ref{fig:MorseRefTorus} (c). We have: \\ 
- $\plc_0(a) = 0$, $\plc_1(b) = \rf_0(\pl(b))= a + a = 0$, $\plc_1(c) = \rf_0(\pl(c))= a + a = 0$, and 
$\plc_2(d) = \rf_1(\pl(d)) = b + c + \rf(x) = 0$. \\
- $\dlc_0(a) = \corf_1(\delta(a)) = b+b+c+c+0+0 = 0$, $\dlc_1(b) = \corf_2(\delta(b)) = d + d = 0$, $\dlc_1(c) = \corf_2(\delta(c)) = d + d = 0$,
$\dlc_2(d) = \corf_3(\delta(d)) = 0$. \\
Thus, for each $p$, we have $\plc_p=0$ and  $\dlc_p=0$. It means that, for each $p$, $\ha{W}^{(p)}$ is a basis for $H_p(\ha{W})$ and for $H^p(\ha{W})$.


By Theorem \ref{pro:label3}, the map $\rf$ is a \emph{chain map} \cite{Hat01} from the chain complex $( K[p], \partial_p )$
to  the chain complex $( \ha{W}[p], \plc_p )$.
The map $\corf$ is a \emph{cochain map} \cite{Hat01} from the chain complex $( K[p], \delta_p )$
to  the chain complex $(\ha{W}[p], \dlc_p)$. 

Also, we easily derive from Theorem \ref{pro:label3} the following facts, which are basic properties of chain maps. We follow the notations given 
in subsection \ref{subsec:chain}: \\
- If $c \in Z_p(K)$, then $\rf_{p} (c) \in Z_p(\ha{W})$, if $c \in Z^p(K)$, then $\corf_{p} (c) \in Z^p(\ha{W})$. \\
- If $c \in B_p(K)$, then $\rf_{p} (c) \in B_p(\ha{W})$, if $c \in B^p(K)$, then $\corf_{p} (c) \in B^p(\ha{W})$. \\
It follows that we have linear maps (homomorphisms) between $H_p(K)$ and $H_p(\ha{W})$,
and between $H^p(K)$ and $H^p(\ha{W})$: \\
\hspace*{\fill}
$\rf^H_p: [z] \in H_p(K) \mapsto \rf^H_p ([z]) = [\rf_{p} (z)] \in H_p(\ha{W})$, 
\hspace*{\fill} \\
\hspace*{\fill}
$\corf^H_p: [z] \in H^p(K) \mapsto \corf^H_p ([z]) = [\corf_{p} (z)] \in H^p(\ha{W})$. 
\hspace*{\fill} 



Thus, the maps $\rf$ and $\corf$ allow to carry out the homology of the complex $K$ to the smaller space $\ha{W}$. We will
see in section \ref{sec:ext} how to go in the other direction in order to obtain an isomorphism between the vector spaces 
$H^p(K)$ and $H^p(\ha{W})$. 

\section{Arranged Morse sequences and skeletons}
\label{sec:reg}

In this section, we introduce \emph{arranged sequences}, which can be obtained by reordering Morse sequences. 
We also define the related notions of \emph{lower} and \emph{upper skeletons} for an arbitrary Morse sequence. Notably, the properties of skeletons and arranged sequences play a crucial role in the theorems presented in the following sections.

Recall that the dimension of a free pair $(\sig,\tau)$ is equal to $dim(\tau)$. 
An arranged sequence is simply a Morse sequence where the critical faces and the free pairs are partially ordered according to their dimensions
\footnote{See also the rephrasing of the proof of Proposition 13.1 given in \cite{koz20} where a similar ordering on the simplices is used.}. \\

\begin{definition} \label{def:reg0}
Let $\ms$ be a Morse sequence and let $\diamond \ms = \langle \ka_1, \ldots, \ka_k \rangle$. 
We say that $\ms$ is an \emph{arranged (Morse) sequence} if for each $i \in [1,k-1 ]$, we have: 
\begin{enumerate}
\item $dim(\ka_{i}) \leq dim(\ka_{i+1})$, and 
\item $dim(\ka_{i}) < dim(\ka_{i+1})$ if $\ka_i$ is critical and $\ka_{i+1}$ is a regular pair. \\
\end{enumerate}
\end{definition}

\noindent
Let us write
 $\ha{W}^{(p)} = \{ \nu \in \ha{W} \; | \; dim(\nu) = p \}$
and $\dd{W}^{(p)} = \{ (\sig,\tau) \in \dd{W} \; | \; dim(\tau) = p \}$.
If $\ms$ is an arranged sequence, each of the two sets $\ha{W}^{(p)}$ and $\dd{W}^{(p)}$
constitutes  a substring of $\diamond \ms$, that is a contiguous sequence. Furthermore, $\ha{W}^{(p)} \cup \dd{W}^{(p)}$
constitutes also a substring of $\diamond \ms$ where $\dd{W}^{(p)}$ is a prefix. 

Let $\diamond \ms = \langle \ka_1, \ldots, \ka_k \rangle$ be a simplex-wise Morse sequence.
If $\ka_i$ and $\ka_{i+1}$ do not satisfy one of the two above conditions, 
it can be checked that the sequence obtained by swapping $\ka_i$ and 
$\ka_{i+1}$ is still a simplex-wise Morse sequence. By induction, we have the following result. \\

 \begin{theorem} \label{pro:reg1}
If $\ms$ is a Morse sequence, then there exists an arranged sequence $\overrightarrow{V}$ such that $\ms$ and $\overrightarrow{V}$
are  equivalent.\\
 \end{theorem}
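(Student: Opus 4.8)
The plan is to prove the statement by the adjacent-transposition (bubble sort) strategy hinted at just before the statement: starting from $\diamond \ms$, I repeatedly locate an adjacent pair $\ka_i,\ka_{i+1}$ violating one of the two conditions of Definition~\ref{def:reg0} and swap them, until no violation remains. Two observations make this reduction clean. First, swapping two entries of $\diamond \ms$ merely reorders the list of critical faces and regular pairs; it changes neither the set $\dd{W}$ of regular pairs nor the set $\ha{W}$ of critical faces. Hence, by Definition~\ref{def:seq2}, the resulting sequence (once we know it is a Morse sequence) is automatically equivalent to $\ms$. Second, assign to each entry the key $\lambda_i = (dim(\ka_i),\epsilon_i)$, where $\epsilon_i = 0$ if $\ka_i$ is a regular pair and $\epsilon_i = 1$ if $\ka_i$ is critical; then $\ms$ is arranged precisely when the sequence of keys is lexicographically non-decreasing, and a violating adjacent pair is exactly one with $\lambda_i >_{\mathrm{lex}} \lambda_{i+1}$. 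So everything reduces to two facts: (i) a single such swap again produces a simplex-wise Morse sequence, and (ii) the procedure terminates.

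For (i), the swap lemma, I would write $S_i = K_i \setminus K_{i-1}$ and $S_{i+1} = K_{i+1} \setminus K_i$ for the (one or two) simplices introduced at steps $i$ and $i+1$; note that $K_{i-1}$, $S_i$, $S_{i+1}$ are pairwise disjoint and that swapping only replaces the intermediate complex $K_i$ by $K_i' = K_{i-1} \cup S_{i+1}$, leaving $K_{i-1}$ and $K_{i+1} = K_{i-1} \cup S_i \cup S_{i+1}$ unchanged. I would then isolate two purely combinatorial conditions: (a) no simplex of $S_i$ is a proper face of a simplex of $S_{i+1}$; and (b) no simplex of $S_{i+1}$ is a proper coface of the critical facet carried by $S_i$ (when $\ka_i$ is critical) or of the lower face $\sig$ of the pair $S_i$ (when $\ka_i$ is a regular pair). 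Granting (a) and (b), a short check shows that performing $\ka_{i+1}$ first on $K_{i-1}$ is a legal elementary filling or expansion: by (a) all proper faces on which step $i+1$ rested lie below dimension $dim(\ka_{i+1})$ and therefore already belonged to $K_{i-1}$ rather than to $S_i$; and then performing $\ka_i$ on the enlarged complex $K_i'$ remains legal, since by (b) the simplices of $S_{i+1}$ create no new coface of the facet or free face carried by $\ka_i$, so facetness and freeness are preserved.

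It remains to verify that (a) and (b) hold exactly when $\ka_i,\ka_{i+1}$ form a violating pair, and this is a small dimension count which I expect to be the main (and essentially only) obstacle. Writing $q = dim(\ka_{i+1}) \le p = dim(\ka_i)$, the proper faces involved in step $i+1$ have dimension at most $q-1$, whereas every simplex of $S_i$ has dimension at least $q$; this gives (a). For (b), if $\ka_i$ is critical of dimension $p$ its cofaces have dimension $> p \ge q$, above everything in $S_{i+1}$; and if $\ka_i$ is a regular pair, a violation forces the strict inequality $q < p$, since a regular $\ka_i$ can be out of order only through condition~1 and never through condition~2, so again the cofaces of $\sig$, of dimension $\ge p > q$, avoid $S_{i+1}$. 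One then runs through the four combinations of critical/regular for $\ka_i$ and $\ka_{i+1}$ to confirm that the two reinserted steps satisfy the exact definitions of an elementary filling and of an elementary expansion; this is the ``it can be checked'' step.

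Finally, for termination (ii), let $I(\diamond \ms) = \#\{(a,b): a < b,\ \lambda_a >_{\mathrm{lex}} \lambda_b\}$ be the number of lexicographic inversions. Swapping an adjacent violating pair flips that single pair from an inversion to a non-inversion and leaves the inversion status of every other pair unchanged, so $I$ drops by exactly $1$; since $I \ge 0$, only finitely many swaps are possible. Moreover, whenever the current sequence is not arranged its key-sequence fails to be non-decreasing, hence some adjacent pair is an inversion and a swap is available. Therefore the process halts at a sequence with no violation, which is an arranged sequence $\overrightarrow{V}$; by the first paragraph it is equivalent to $\ms$, proving the theorem.
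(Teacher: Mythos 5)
Your proof is correct and takes essentially the same approach as the paper, which justifies the theorem by the one-line remark that swapping an adjacent violating pair ``can be checked'' to yield a simplex-wise Morse sequence, followed by induction. Your dimension-count verification of the swap lemma (across the four critical/regular combinations) and the lexicographic inversion-count termination argument simply supply, correctly, the details the paper leaves implicit.
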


\begin{remark} \label{rem:arr}
It should be noted that the maximal increasing scheme introduced in Section \ref{sec:seq} does not, in general, directly produce an arranged Morse sequence. 
A sorting process is required to obtain an equivalent arranged sequence. The same applies to the maximal decreasing scheme. 
For instance, it can be seen that the two Morse sequences of the torus and the dunce hat, shown in Fig. \ref{fig:MorseSequenceTorus} and Fig. \ref{fig:MorseSequenceDunceHat},
do not satisfy the second condition of Definition \ref{def:reg0}. 
In fact, such schemes are not required to fulfill this condition, as the goal is to minimize the number of critical faces in the sequence. \\
\end{remark}

Recall that the $p$-skeleton of a simplicial complex $K$ is the set composed of all the $q$-simplices of $K$ such that $q \leq p$. 
The following may be seen as a refinement of this notion for Morse sequences.\\
 
\begin{definition} \label{def:reg1}
Let $\ms$ be a Morse sequence. We write: 
\begin{itemize}
\item $W_p^- = \{ \nu \in  \ov{W} \; | \; dim(\nu) \leq p \} 
\cup \{ \nu \in \ha{W} \cup \un{W} \; | \; dim(\nu) \leq p-1 \}$. 
\item $W_p^+ = \{ \nu \in \ha{W} \cup \ov{W} \; | \; dim(\nu) \leq p \} 
\cup \{ \nu \in \un{W} \; | \; dim(\nu) \leq p-1 \}$. 
\end{itemize}
The sets $W_p^-$ and $W_p^+$ are, respectively, the \emph{lower} and \emph{the upper $p$-skeleton of~$\ms$}. 
 Also, if $d = dim(K)$, we say that the sequence: \\
 \hspace*{\fill}
  $\overrightarrow{S}(W) = \langle W_0^-,W_0^+, \ldots, W^-_p,W_p^+, \ldots,  W_d^-, W_d^+ \rangle$
   \hspace*{\fill} \\
   is the \emph{skeleton sequence of $\ms$}. \\
\end{definition}

Thus, $W_p^-$ is the simplicial complex that contains the $(p-1)$-skeleton of $K$ and the $p$-simplices 
that are upper regular for $\ms$; $W_p^+$ is the simplicial complex that contains the $p$-skeleton of $K$
except the $p$-simplices of $K$ that are  lower regular for $\ms$. We observe that: 
\begin{enumerate}
\item[-] We have $W_0^- = \emptyset$ and $W_d^+ = K$. The set $W_0^+$ is made of all vertices of $K$ that are critical for $\ms$. 
\item[-] The skeleton sequence $\overrightarrow{S}(W)$ of $\ms$ is a filtration (see Remark 3). 
\item[-]  If two Morse sequences $\overrightarrow{V}$ and $\ms$ are equivalent, then $\overrightarrow{S}(V) = \overrightarrow{S}(W)$. 
\item[-] We have $W_{p}^+ \setminus W_{p}^- = \ha{W}^{(p)}$. 
\item[-] We have $W_{p+1}^- \setminus W_{p}^+ = \{ \nu \in  \ov{W} \; | \; dim(\nu) = p+1 \} 
\cup \{ \nu \in \un{W} \; | \; dim(\nu) = p \}$. 
That is, $W_{p+1}^- \setminus W_{p}^+$ is made of all simplices that are in $\dd{W}^{(p)}$. 
\end{enumerate}
\begin{figure*}[tb]
    \centering
    \begin{subfigure}[t]{0.32\textwidth}
        \centering
        \includegraphics[height=.9\textwidth]{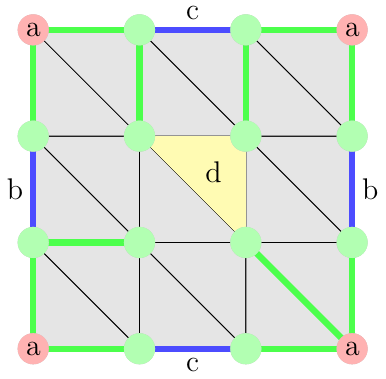}
        \caption{}
    \end{subfigure}%
    ~ \hspace*{1.5cm}
    \begin{subfigure}[t]{0.32\textwidth}
        \centering
        \includegraphics[height=.9\textwidth]{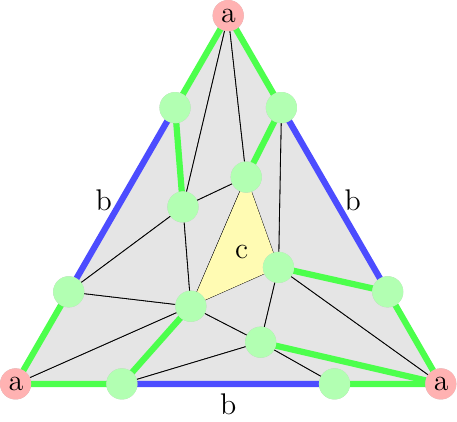}
        \caption{}
    \end{subfigure}%
 \caption{Skeleton sequences of two Morse sequences $\ms(K)$. In (a) the sequence $\ms(K)$ is the one corresponding to Fig \ref{fig:MorseSequenceTorus}, 
 and in (b) the sequence $\ms(K)$ corresponds to Fig~\ref{fig:MorseSequenceDunceHat}. 
  See text for details.
 }
 \label{fig:MorseSequenceTorusSkeleton}
\end{figure*}
An illustration of two skeleton sequences  $\overrightarrow{S}(W) = \langle W_0^-,W_0^+, \ldots,  W_2^-, W_2^+ \rangle$ is given in Figure \ref{fig:MorseSequenceTorusSkeleton}. 
In each of the figures (a) and (b), we have: 
\begin{enumerate}
\item[-] $W_0^- = \emptyset$ and  $W_0^+ = \{ a \}$, where $a$ is the critical vertex.  
\item[-] $W_1^-$ is the tree composed of $W_0^+$, all the green vertices, and all the green edges, 
\item[-] $W_1^+$ is composed of $W_1^-$ and all the critical edges, that is, all the blue edges, 
\item[-] $W_2^-$ is composed of $W_1^+$, all the dark edges, and all the grey triangles. 
\item[-] $W_2^+ = K$, that is, $W_2^+ =  W_2^- \cup \{\ka\}$, where $\ka$ is the critical yellow triangle. 
\end{enumerate}
The two next propositions are direct consequences of the above definitions.\\

\begin{proposition} \label{pro:reg2} Let $\ms$ be an arranged Morse sequence 
and let $\overrightarrow{S}(W)$ be the skeleton sequence of $\ms$. Then each complex in  $\overrightarrow{S}(W)$ is a 
complex in  $\ms$.\\
 \end{proposition}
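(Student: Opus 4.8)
The plan is to identify, for each dimension $p$, the complexes $W_p^-$ and $W_p^+$ with two specific terms $K_i$ of the sequence $\ms$, and to verify the identifications by directly comparing which simplices are present. The key enabling fact is the block structure of arranged sequences noted after Definition \ref{def:reg0}: because the dimensions along $\diamond \ms = \langle \ka_1, \ldots, \ka_k \rangle$ are non-decreasing, and within each dimension $p$ the regular pairs $\dd{W}^{(p)}$ form a prefix followed by the critical faces $\ha{W}^{(p)}$, the simplex-wise sequence splits into consecutive blocks $\dd{W}^{(0)}, \ha{W}^{(0)}, \dd{W}^{(1)}, \ha{W}^{(1)}, \ldots, \dd{W}^{(d)}, \ha{W}^{(d)}$.

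First I would mark two cut indices for each $p$: let $K_{a_p}$ be the term of $\ms$ reached right after the block $\dd{W}^{(p)}$ (hence just before $\ha{W}^{(p)}$), and let $K_{b_p}$ be the term reached right after $\ha{W}^{(p)}$. Both $a_p$ and $b_p$ lie in $[0,k]$, so $K_{a_p}$ and $K_{b_p}$ are genuine complexes of $\ms$; in particular $K_{a_0} = \emptyset = W_0^-$ and $K_{b_d} = K = W_d^+$ serve as sanity checks.

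Next I would describe the faces of $K_{a_p}$ and $K_{b_p}$ by tracking each elementary contribution: a regular pair of dimension $q$ adds its upper member $\tau \in \ov{W}$ of dimension $q$ together with its lower member $\sigma \in \un{W}$ of dimension $q-1$, while a critical step of dimension $q$ adds a face in $\ha{W}$ of dimension $q$. Reading off which faces have been added by the cut index $a_p$ (all pair members coming from pairs with $dim(\tau)\le p$, plus the critical faces of dimension $\le p-1$) and by $b_p$ (the same, together with the critical faces of dimension $p$), and comparing with Definition \ref{def:reg1}, yields $K_{a_p} = W_p^-$ and $K_{b_p} = W_p^+$. This displays every term of $\overrightarrow{S}(W)$ among the terms of $\ms$, which is the assertion.

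I expect the only real care to be in the dimension bookkeeping, specifically the one-dimension gap between the two members of a regular pair. One must confirm that $\sigma$ enters $W_p^-$ (and $W_p^+$) under exactly the condition $dim(\tau)\le p$ that admits $\tau$ --- that is, that the bound $dim(\sigma)\le p-1$ appearing in Definition \ref{def:reg1} is equivalent, via $dim(\sigma) = dim(\tau)-1$, to $dim(\tau)\le p$. This guarantees that no cut index $a_p$ or $b_p$ ever falls strictly between the two members of a single regular pair, so that $K_{a_p}$ and $K_{b_p}$ are exactly the complexes described, and the proposition follows.
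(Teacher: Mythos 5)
Your proof is correct and is essentially the argument the paper intends: the paper states this proposition without proof as a ``direct consequence of the above definitions,'' relying on exactly the block structure of $\diamond \ms$ noted after Definition \ref{def:reg0}, and your cut indices $a_p$, $b_p$ together with the dimension bookkeeping (including the observation that $dim(\sigma)=dim(\tau)-1$ makes the skeleton conditions match the block boundaries) simply spell out that verification. The only cosmetic remark is that $\dd{W}^{(0)}$ is empty, since a regular pair has dimension at least $1$, so your first block is vacuous --- harmless, and your sanity check $K_{a_0}=\emptyset=W_0^-$ still holds.
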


Note that we can also affirm that $\overrightarrow{S}(W)$ is \emph{a subsequence} of an arranged sequence~$\ms$. 
That is, $\overrightarrow{S}(W)$ can be obtained from $\ms$
by removing some elements without changing the order of the remaining elements. \\

\begin{proposition} \label{pro:reg3a}
Let $\ms(K) = \langle K_0,\ldots,K_k \rangle$ be an arranged Morse sequence.
Let $K_i = W_{p}^+$ and  $K_j = W_{p+1}^-$. Then  the sequence $\langle K_j,\ldots, K_i \rangle$ is a collapse sequence.\\
 \end{proposition}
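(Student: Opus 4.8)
The plan is to show that passing from $W_{p+1}^-$ down to $W_p^+$ merely reverses a block of elementary expansions occurring inside the arranged sequence $\ms$, and that the reverse of an elementary expansion is, by definition, an elementary collapse.

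First I would locate $K_i = W_p^+$ and $K_j = W_{p+1}^-$ inside $\ms$. By Proposition \ref{pro:reg2}, both complexes occur among the $K_m$ of the arranged sequence; since $W_p^+$ precedes $W_{p+1}^-$ in the skeleton sequence and $\overrightarrow{S}(W)$ is a subsequence of $\ms$, we have $i \leq j$ and $W_p^+ \subseteq W_{p+1}^-$. The next step is to identify exactly what is added between them: by the observations following Definition \ref{def:reg1}, $W_{p+1}^- \setminus W_p^+$ consists precisely of the upper regular faces of dimension $p+1$ together with the lower regular faces of dimension $p$; equivalently, it is the set of all faces belonging to regular pairs $(\sigma,\tau)$ with $dim(\sigma)=p$ and $dim(\tau)=p+1$. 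In particular, this difference contains \emph{no} critical face.

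I would then examine the steps $\langle K_i, K_{i+1}, \ldots, K_j \rangle$ of $\ms$ one at a time. Each step $K_{m-1} \to K_m$, with $i < m \leq j$, introduces only faces lying in $K_j \setminus K_i$. Since this difference contains no critical face, no such step can be an elementary filling (a filling would add a critical face); hence every step is an elementary expansion, adding a single free pair $(\sigma,\tau)$. By the very definition of an elementary expansion, $(\sigma,\tau)$ is then a free pair for $K_m$, so $K_{m-1} = K_m \setminus \{\sigma,\tau\}$ is an elementary collapse of $K_m$. Reading these relations from $K_j$ back toward $K_i$ shows that each complex is an elementary collapse of its predecessor, so $\langle K_j, \ldots, K_i \rangle$ is a collapse sequence in the sense of Section \ref{sec:basic}.

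The only delicate point—more a matter of careful bookkeeping than a genuine obstacle—is the second step: verifying that $K_i$ and $K_j$ genuinely occur as the two complexes bracketing this expansion block, so that no filling is concealed among the intermediate steps. This rests entirely on the arranged ordering (Definition \ref{def:reg0}), which forces the regular pairs of a fixed dimension to form a contiguous substring of $\diamond \ms$ situated strictly between the critical faces of the two surrounding dimensions. Once this contiguity is established, the conversion of expansions into collapses is purely definitional, and the proposition follows.
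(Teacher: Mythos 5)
Your proof is correct and follows essentially the same route as the paper, which states this proposition as a direct consequence of the definitions: between $W_p^+$ and $W_{p+1}^-$ the difference $W_{p+1}^- \setminus W_p^+$ contains only the faces of regular pairs of dimension $p+1$ and no critical face, so every intermediate step is an elementary expansion, and reading the block backwards gives a collapse sequence. Your paragraph-two observation (no critical face in $K_j \setminus K_i$ rules out fillings) in fact already settles the ``delicate point'' you raise in paragraph three, since the hypothesis $K_i = W_p^+$, $K_j = W_{p+1}^-$ gives the bracketing complexes directly.
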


The following result is a direct consequence of Theorem \ref{pro:reg1} and Proposition~\ref{pro:reg3a}. It  may be seen as another aspect of the fundamental collapse theorem of discrete Morse theory; see Remark 2
given in Section \ref{sec:seq}.
A specificity of this formulation is that all  regular pairs (and not only a part of them) of a given dimension
are removed by the collapse sequence.\\

\begin{theorem} \label{pro:reg3}
If $\ms$ is a Morse sequence, then $W_{p+1}^-$ collapses onto $W_{p}^+$.\\
 \end{theorem}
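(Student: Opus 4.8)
The plan is to reduce the general statement to the arranged case, where Proposition~\ref{pro:reg3a} already supplies the desired collapse sequence. First I would invoke Theorem~\ref{pro:reg1} to produce an arranged Morse sequence $\overrightarrow{V}$ that is equivalent to $\ms$. The point of doing so is that all the data entering the definition of the skeletons --- namely the sets $\ha{W}$, $\un{W}$, and $\ov{W}$, together with the dimensions of faces --- depend only on the gradient vector field $\msc$, and hence are shared by any two equivalent sequences. This is exactly the observation recorded after Definition~\ref{def:reg1}: equivalent sequences satisfy $\overrightarrow{S}(V) = \overrightarrow{S}(W)$. Consequently the relevant skeletons coincide termwise, and in particular $V_p^+ = W_p^+$ and $V_{p+1}^- = W_{p+1}^-$.

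Next I would locate these two complexes inside the arranged sequence $\overrightarrow{V}$. By Proposition~\ref{pro:reg2}, every complex appearing in the skeleton sequence of an arranged sequence is itself one of the complexes of that sequence; writing $\overrightarrow{V} = \langle K_0, \ldots, K_k \rangle$, there are therefore indices $i < j$ with $K_i = V_p^+$ and $K_j = V_{p+1}^-$. Applying Proposition~\ref{pro:reg3a} to these indices shows that $\langle K_j, \ldots, K_i \rangle$ is a collapse sequence, that is, $V_{p+1}^-$ collapses onto $V_p^+$. Substituting the equalities $V_p^+ = W_p^+$ and $V_{p+1}^- = W_{p+1}^-$ then yields that $W_{p+1}^-$ collapses onto $W_p^+$, which is the claim.

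There is essentially no computational content here; the theorem is a packaging of the earlier results, as the text preceding it already suggests. The only step requiring care --- and the closest thing to an obstacle --- is justifying that passing to an equivalent arranged sequence leaves the skeletons unchanged. This rests on the fact that the skeletons are defined purely in terms of the partition $K = \ha{W} \cup \un{W} \cup \ov{W}$ together with the dimensions of faces, and that this partition is an invariant of the gradient vector field rather than of the particular ordering of the Morse sequence. Once this invariance is granted, the collapse statement for an arbitrary sequence transfers verbatim from its arranged representative.
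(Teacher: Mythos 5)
Your proof is correct and follows essentially the same route as the paper, which derives the theorem from Theorem~\ref{pro:reg1} and Proposition~\ref{pro:reg3a} after noting (as recorded among the observations following Definition~\ref{def:reg1}) that equivalent sequences share the same skeleton sequence, so that $\overrightarrow{S}(V) = \overrightarrow{S}(W)$. Your justification of the skeleton invariance via the partition $K = \ha{W} \cup \un{W} \cup \ov{W}$ being determined by the gradient vector field is exactly the right point, and the rest transfers as you describe.
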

 
 \ELIMINE{
 \begin{proof} 
By Theorem \ref{pro:reg1}, there exists an arranged Morse sequence $\overrightarrow{V}$ such that $\ms$ and $\overrightarrow{V}$
are  equivalent. By Proposition \ref{pro:reg2}, the complexes $V_{p+1}^-$ and $V_{p}^+$ are in the sequence $\overrightarrow{V}$.
By the definition of skeletons, we have: \\
\hspace*{\fill}
$V_{p+1}^- \setminus V_{p}^+ = \{ \nu \in  \ov{W} \; | \; dim(\nu) = p+1 \} 
\cup \{ \nu \in \un{W} \; | \; dim(\nu) = p \}$. 
\hspace*{\fill} \\
It means that the sequence in $\overrightarrow{V}$ between $V_{p}^+$ and $V_{p+1}^-$ is made solely of elementary expansions.
Thus $V_{p+1}^-$ collapses onto $V_{p}^+$.
Since $\ms$ and $\overrightarrow{V}$
are  equivalent, we have $\overrightarrow{S}(V) = \overrightarrow{S}(W)$. Therefore $W_{p+1}^-$ collapses onto $W_{p}^+$.
\qed
\end{proof}
}

We now  give some properties  of reference and coreference maps. 
Theorems \ref{pro:reg5} and \ref{pro:injec}
will be essential for some crucial proofs in the sequel of this paper.
These theorems are obtained thanks to arranged sequences that may be derived from arbitrary Morse sequences (Theorem \ref{pro:reg1}), and thanks to 
the skeletons which appear in such sequences (Proposition \ref{pro:reg2}). 
 
 Let  $\nu \in K^{(p)}$ such that $\nu \in W_p^+$. Thus $\nu \in \ha{W} \cup \ov{W}$. We have $\rf (\nu) = \nu$ if $\nu \in \ha{W}$,
 and $\rf (\nu) = 0$ if $\nu \in \ov{W}$. This leads us to the following:\\
 
\begin{proposition} \label{pro:reg4} Let $\ms(K)$ be a Morse sequence and let $c \in K [p]$. 
\begin{enumerate}
\item If $c \subseteq W_p^+$, then $\rf (c) = c \cap \ha{W}$. 
\item If $c \subseteq K \setminus W_p^-$, then $\corf (c) = c \cap \ha{W}$. \\
\end{enumerate}
 \end{proposition}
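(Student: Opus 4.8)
The plan is to prove the two statements by induction, exploiting the fact that membership in the relevant skeleton forces each simplex of $c$ to have a predictable reference value. I would focus on the first statement, since the second follows by the duality principle stated after Theorem \ref{pro:label3} (exchanging $\partial$ with $\delta$, $\rf$ with $\corf$, and $W_p^+$ with $K \setminus W_p^-$).

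First I would recall the key observation made just before the proposition: if $\nu \in K^{(p)}$ and $\nu \in W_p^+$, then $\nu \in \ha{W} \cup \ov{W}$, so that $\rf(\nu) = \nu$ when $\nu$ is critical and $\rf(\nu) = 0$ when $\nu$ is upper regular. This is immediate from the definition of $W_p^+$, which contains the critical and upper-regular $p$-simplices but excludes the lower-regular ones, together with the defining property $\rf(\tau) = 0$ for $\tau \in \ov{W}$ and $\rf(\nu)=\nu$ for $\nu \in \ha{W}$ (Definition \ref{def:reference} and Proposition \ref{prop:reference0}).

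Given this, the argument is purely linear. Suppose $c \subseteq W_p^+$, i.e.\ $c$ is a $p$-chain all of whose simplices lie in $W_p^+ \cap K^{(p)} = (\ha{W} \cup \ov{W})^{(p)}$. Writing $c = \sum_{\nu \in c} \nu$ and applying linearity of $\rf_p$, I would split the sum according to whether each $\nu$ is critical or upper regular:
\[
\rf(c) = \sum_{\nu \in c} \rf(\nu) = \sum_{\nu \in c \cap \ha{W}} \nu + \sum_{\nu \in c \cap \ov{W}} 0 = \sum_{\nu \in c \cap \ha{W}} \nu = c \cap \ha{W}.
\]
This establishes the first part. For the second, by the duality principle I would observe that $c \subseteq K \setminus W_p^-$ means every $p$-simplex of $c$ lies outside the lower skeleton; by the definition of $W_p^-$, the $p$-simplices excluded from $W_p^-$ are exactly those in $\ha{W} \cup \un{W}$, and the dual relation $\corf(\sig) = 0$ for $\sig \in \un{W}$ together with $\corf(\nu) = \nu$ for $\nu \in \ha{W}$ yields $\corf(c) = c \cap \ha{W}$ by the identical linear splitting.

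There is no real obstacle here; the content is entirely in correctly reading off the classification of $p$-simplices inside $W_p^+$ (resp.\ outside $W_p^-$) from Definition \ref{def:reg1}. The only point requiring a moment of care is verifying that $W_p^+ \cap K^{(p)}$ contains no lower-regular $p$-simplices, so that every $\nu \in c$ is genuinely either critical or upper regular and the reference values are exactly $\nu$ or $0$; this is precisely what the definition of $W_p^+$ guarantees, since lower-regular faces are admitted only up to dimension $p-1$.
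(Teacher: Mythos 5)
Your proof is correct and matches the paper's own treatment: the paper gives no separate proof, stating the proposition as a direct consequence of the observation that every $p$-simplex of $W_p^+$ lies in $\ha{W} \cup \ov{W}$ (so $\rf$ acts as identity or zero on it), followed by linearity, exactly as you argue, with the dual case handled identically. The only cosmetic remark is that your opening mention of ``induction'' is never used --- the argument is, as you then say yourself, purely a linear splitting.
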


\begin{theorem} \label{pro:reg5} Let $\ms(K)$ be a Morse sequence. 
\begin{enumerate}
 \item Let $z,z' \in Z_p(K)$. If $\rf (z) = \rf (z')$ and $z,z' \subseteq W_p^+$, then $z = z'$. 
 \item Let $z,z' \in Z^p(K)$. If $\corf (z) = \corf (z')$ and $z,z' \subseteq K \setminus W_p^-$, then $z = z'$.
 \end{enumerate}
 \end{theorem}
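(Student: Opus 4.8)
The plan is to prove the first statement in full and obtain the second by the same scheme with $\partial$ replaced by $\delta$, $\rf$ by $\corf$, $\ov{W}$ by $\un{W}$, and the lower skeleton $W_p^-$ by $W_p^+$.

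For the first statement I would begin by reducing it to a purely combinatorial fact about cycles. Set $w = z + z'$. Since $Z_p(K)$ is a vector space, $w \in Z_p(K)$, and its support is contained in $W_p^+$. By Proposition~\ref{pro:reg4}(1) we have $\rf(z) = z \cap \ha{W}$ and $\rf(z') = z' \cap \ha{W}$, so the hypothesis $\rf(z) = \rf(z')$, together with the distributivity of intersection over symmetric difference, gives $w \cap \ha{W} = (z \cap \ha{W}) + (z' \cap \ha{W}) = 0$. As $w$ is a $p$-chain contained in $W_p^+$, its support lies in $(W_p^+)^{(p)} = \ha{W}^{(p)} \cup \ov{W}^{(p)}$ (by Definition~\ref{def:reg1}), and since $w$ contains no critical face we obtain $w \subseteq \ov{W}^{(p)}$. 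Thus it suffices to show that a $p$-cycle whose support consists only of upper regular $p$-faces must be $0$; this yields $z = z'$.

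For this claim I would invoke Theorem~\ref{pro:reg3} with $p$ replaced by $p-1$, namely that $W_p^-$ collapses onto $W_{p-1}^+$. Reading this collapse backwards as an expansion $W_{p-1}^+ = M_0 \nearrow \cdots \nearrow M_m = W_p^-$, each step adds a free pair $(\sigma_t,\tau_t)$ with $\tau_t \in \ov{W}^{(p)}$ and $\sigma_t \in \un{W}^{(p-1)}$; crucially, since $(\sigma_t,\tau_t)$ is free for $M_t$, the face $\tau_t$ is the unique proper coface of $\sigma_t$ in $M_t$, so every other $p$-coface of $\sigma_t$ that belongs to $W_p^-$ is added strictly later. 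Assuming $w \neq 0$, let $t$ be the largest index with $\tau_t$ in the support of $w$. In $\partial w = 0$ the coefficient of the $(p-1)$-face $\sigma_t$ counts the $p$-cofaces of $\sigma_t$ lying in $w$; the face $\tau_t$ is one of them, so to force this count to be even there must be another coface $\tau'$ of $\sigma_t$ in $w$. But $\tau' \in \ov{W}^{(p)} \subseteq W_p^-$ is a proper coface of $\sigma_t$ distinct from $\tau_t$, hence absent from $M_t$ and added at a step strictly after $t$ — contradicting the maximality of $t$. Therefore $w = 0$.

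The main obstacle is exactly this extraction of a $(p-1)$-face whose coefficient in $\partial w$ cannot cancel; this is why an arranged sequence (Theorem~\ref{pro:reg1}) and the resulting skeleton collapse (Theorem~\ref{pro:reg3}) are needed, since the unique-coface property of free pairs is precisely what blocks the offending cancellation. For the second statement one reduces, via Proposition~\ref{pro:reg4}(2), to showing that a $p$-cocycle supported on $\un{W}^{(p)}$ vanishes, using $(W_p^-)^{(p)} = \ov{W}^{(p)}$ so that the $p$-faces of $K \setminus W_p^-$ are $\ha{W}^{(p)} \cup \un{W}^{(p)}$. The dual argument then uses the collapse $W_{p+1}^- \searrow W_p^+$ read as an expansion $W_p^+ = M_0 \nearrow \cdots \nearrow M_m = W_{p+1}^-$, adding free pairs $(\sigma_t,\tau_t)$ with $\sigma_t \in \un{W}^{(p)}$, $\tau_t \in \ov{W}^{(p+1)}$; selecting the smallest index $t$ with $\sigma_t$ in the support, the $p$-faces of $\tau_t$ other than $\sigma_t$ all lie in $M_t$ and so are either in $W_p^+$ (hence critical or upper regular, not in the support) or were added earlier (hence, by minimality, not in the support). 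Consequently $\tau_t$ occurs with coefficient $1$ in $\delta w$, contradicting $\delta w = 0$, and the proof concludes as before.
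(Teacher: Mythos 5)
Your proposal is correct and follows essentially the same route as the paper: the same reduction via Proposition~\ref{pro:reg4} and linearity to showing that a $p$-cycle supported on upper-regular faces (i.e., contained in $W_p^-$) vanishes, and the same key mechanism, namely that along the expansion sequence from $W_{p-1}^+$ to $W_p^-$ a free pair's unique-coface property forces $\sigma_t$ to survive in $\partial w$. Your ``largest added index'' extremal argument is just the downward-induction step of the paper's proof repackaged, and your direct appeal to Theorem~\ref{pro:reg3} (itself proved via Theorem~\ref{pro:reg1} and Proposition~\ref{pro:reg3a}) merely absorbs the paper's explicit passage to an equivalent arranged sequence; writing out the dual cocycle case with the minimal index is likewise exactly the exchange of $\partial$ and $\delta$ the paper leaves implicit.
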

 
\begin{proof} 
By  Theorem \ref{pro:reg1} and by Proposition \ref{pro:equi}, we may suppose that
$\ms(K)$ is an arranged sequence, we set $\ms(K) = \langle K_0,\ldots,K_k \rangle$. \\
1) We will first prove that $z = 0$ whenever $z \in Z_p(K)$ and $z \subseteq W_p^-$. 
Since $W_0^- = \emptyset$, the property is trivially true whenever $p=0$.
Suppose $p \geq 1$. 
By Proposition \ref{pro:reg2}, there exist $K_i$ and $K_j$ such that
$K_i = W_{p-1}^+$ and $K_j = W_{p}^-$. Since $dim(K_i) \leq p-1$, the property is trivially true whenever 
$z \subseteq K_i$. Suppose the property is true whenever 
$z \subseteq K_{l-1}$, with $i \leq l-1 \leq j-1$, and let $z \in Z_p(K)$ with $z \subseteq K_l$. 
By Proposition~\ref{pro:reg3a} we have $K_l = K_{l-1} \cup \{\sig,\tau\}$, where $(\sig,\tau)$ is a free pair
and $dim(\tau) = p$.
But it is not possible that $z$ contains the simplex $\tau$ otherwise we would have $\sig \in \partial(z)$ and $z$ would not be a cycle.
Thus, $z \subseteq K_{l-1}$. By the induction hypothesis, it means that $z =0$. 
Therefore the property is true whenever $z \subseteq K_j$. Since $K_j = W_{p}^-$, we obtain the desired conclusion. \\
2) Now, suppose $z \in Z_p(K)$, $z \subseteq W_p^+$ and $\rf (z) = 0$. By  Proposition \ref{pro:reg4}, 
it means that $z \subseteq W_p^-$. By the result obtained in 1), we deduce that $z=0$. 
Let $z,z' \in Z_p(K)$ such that
$z,z' \subseteq W_p^+$, and $\rf (z) = \rf (z')$. We have $\rf (z+ z')= 0$. By the previous result, this gives $z = z'$.
\end{proof}

By the previous statement, two distinct $p$-cycles in $W_p^+$ must have distinct references.
Of course this property is not true if we consider arbitrary $p$-cycles.
For example, let us consider the $1$-cycle $z$ that is the boundary of the dunce hat
depicted Figure \ref{fig:MorseRefDunceHat} (b). We have $\rf(z) = b$.
But the $1$-cycle $z' = \pl(c)$, the boundary of the critical simplex $c$, 
is made of three simplices $\nu$ such that $\rf(\nu) = b$. Thus we have 
$z \not = z'$ but $\rf(z) = \rf(z') = b$. 

In the next result, we consider the general case where the $p$-cycles are not necessarily in $W_p^+$.
We write $z \sim z'$ if $z$ and $z'$ are two cycles (or two cocycles) that are in the same homology (or cohomology) class.\\

 \begin{theorem} \label{pro:injec}
Let $\ms(K)$ be a Morse sequence. 
\begin{enumerate}
 \item Let $z,z' \in Z_p(K)$. If $\rf (z) = \rf (z')$, then $z \sim z'$. 
\item Let $z,z' \in Z^p(K)$. If $\corf (z) = \corf (z')$, then $z \sim z'$.
\end{enumerate}
 \end{theorem}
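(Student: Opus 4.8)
The plan is to reduce Theorem~\ref{pro:injec} to the sharper uniqueness statement of Theorem~\ref{pro:reg5}, which already settles the case of cycles supported in the upper skeleton $W_p^+$. The bridge I would establish is a reduction lemma: \emph{every $z \in Z_p(K)$ is homologous to a cycle $\hat z \in Z_p(K)$ with $\hat z \subseteq W_p^+$ and $\rf(\hat z) = \rf(z)$.} Granting this, the theorem is immediate. Given $z, z' \in Z_p(K)$ with $\rf(z) = \rf(z')$, I replace them by $\hat z, \hat z' \subseteq W_p^+$ supplied by the lemma; then $\rf(\hat z) = \rf(z) = \rf(z') = \rf(\hat z')$, so Theorem~\ref{pro:reg5} forces $\hat z = \hat z'$, and hence $z \sim \hat z = \hat z' \sim z'$.

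The heart of the matter is the reduction lemma, whose proof rests on the collapse of $W_{p+1}^-$ onto $W_p^+$ provided by Theorem~\ref{pro:reg3}. First I would note that a $p$-cycle is a $p$-chain, hence supported in the $p$-skeleton of $K$, which is contained in $W_{p+1}^-$; thus $z \subseteq W_{p+1}^-$. The faces of $W_{p+1}^- \setminus W_p^+$ are exactly the regular pairs $(\sig,\tau)$ of dimension $p+1$, with $\sig \in \un{W}$ a lower-regular $p$-simplex and $\tau \in \ov{W}$ an upper-regular $(p+1)$-simplex, and the collapse removes them one pair at a time along a sequence $\langle W_{p+1}^- = L_0, L_1, \ldots, L_m = W_p^+\rangle$. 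Processing these pairs in collapse order, whenever the current cycle $z_{t-1}$ still contains the lower-regular face $\sig_t$ of the next removed pair $(\sig_t,\tau_t)$, I set $z_t = z_{t-1} + \pl(\tau_t)$; otherwise $z_t = z_{t-1}$.

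Two facts make each step work. On the homological side, $\pl(\tau_t)$ is a boundary, so the homology class is preserved; since $\sig_t \in \pl(\tau_t)$, the face $\sig_t$ is cancelled, and it is never reintroduced because $\pl(\tau_t) \subseteq L_{t-1}$ while every previously removed $\sig_s$ ($s<t$) has already left the complex $L_{t-1}$. One checks routinely that $z_t$ remains a cycle and stays inside $L_t$. On the reference side---the crucial point---since $\tau_t$ is upper regular we have $\rf(\pl(\tau_t)) = 0$ \emph{by the very definition of the reference map}, so $\rf(z_t) = \rf(z_{t-1})$ at every step. The terminal cycle $\hat z = z_m$ is then supported in $W_p^+$, is homologous to $z$, and satisfies $\rf(\hat z) = \rf(z)$, as required.

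I expect the main obstacle to be the bookkeeping of this reduction: verifying that the modification keeps the chain a cycle, keeps it inside the shrinking complex $L_t$, and never resurrects an eliminated lower-regular face. The decisive feature is that $\rf$ is preserved \emph{exactly}, not merely up to homology---this is what allows Theorem~\ref{pro:reg5} to conclude---and it is precisely the property $\rf(\pl(\tau)) = 0$ for upper-regular $\tau$ that guarantees it. Finally, statement~(2) for cocycles follows by the duality convention of the paper: exchange $\pl$ with $\dl$, cycles with cocycles, $\rf$ with $\corf$, and the collapse of $W_{p+1}^-$ onto $W_p^+$ with the dual reduction toward $K \setminus W_p^-$ (invoking Theorem~\ref{pro:reg5}(2)), using $\corf(\dl(\sig)) = 0$ for lower-regular $\sig$ in place of $\rf(\pl(\tau)) = 0$.
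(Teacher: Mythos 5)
Your proposal is correct and takes essentially the same route as the paper: the paper likewise passes to an equivalent arranged sequence, walks along the collapse steps between $W_p^+$ and $W_{p+1}^-$, cancels each lower regular face $\sig$ by adding $\pl(\tau)$ for the corresponding upper regular $\tau$ --- using $\rf(\pl(\tau)) = 0$ so that the reference is preserved exactly --- and concludes via Theorem~\ref{pro:reg5}. The only difference is packaging: the paper reduces by linearity to showing $\rf(z) = 0$ implies $z \sim 0$ and runs the induction along the arranged filtration, whereas you phrase the identical inductive step as a reduction lemma pushing an arbitrary cycle into $W_p^+$ while keeping $\rf$ fixed.
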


\begin{proof}
Again, by  Theorem \ref{pro:reg1} and by Proposition \ref{pro:equi}, we may suppose that
$\ms(K)$ is an arranged sequence, we set $\ms(K) = \langle K_0,\ldots,K_k \rangle$. 
We will prove that we have $z \sim 0$ whenever $z \in Z_p(K)$ and $\rf (z) =0$. The result will follow by linearity. \\
Let $z \in Z_p(K)$. We first observe that $z \subseteq W_{p+1}^-$. 
By Proposition \ref{pro:reg2}, there exist $K_i$ and $K_j$ such that $K_i = W_{p}^+$ and $K_j = W_{p+1}^-$. 
By Theorem \ref{pro:reg5}, the property is true whenever $z \subseteq K_i$. 
Suppose it is true for any $z \subseteq K_{l-1}$, with $i \leq l-1 \leq j-1$, and let $z \in Z_p(K)$ such that $z \subseteq K_l$ and $\rf (z) =0$. 
By Proposition \ref{pro:reg3a} we have $K_l = K_{l-1} \cup \{\sig,\tau\}$, where $(\sig,\tau)$ is a free pair. 
We have $dim(\tau) = p+1$, thus the simplex $\tau$ is not in $z$. 
Also by the induction hypothesis, we have $z \sim 0$ whenever $\sig \not\in z$. \\
Suppose now that $\sig \in z$. Let $z' = z + \partial (\tau)$. We have $\rf(z') = \rf(z) + \rf(\partial (\tau)) = \rf(z) = 0$.
We have $z' \subseteq K_{i-1}$. Therefore, by the induction hypothesis, we have $z' \sim 0$. But $\partial(\tau)$ is a boundary, thus $z \sim z'$. 
Therefore $z \sim 0$. 
\end{proof}

It should be noted that the converse of the previous statement is, in general, not true. For example, let us consider again the cycle $z$ that is the boundary of the dunce hat
depicted in Figure \ref{fig:MorseRefDunceHat} (b). We have $\rf(z) = b$, but $z \sim 0$ and $\rf(0) = 0$.


\section{The extension and coextension maps}
\label{sec:ext}

In this section, we introduce two maps $\tirf$ and $\ticorf $ which, at the level of homology, may be seen as the inverses of the maps
$\rf$ and $\corf$. Each couple $(\rf,\tirf)$ and $(\corf,\ticorf)$ allows us to establish a relationship between 
a complex and its critical complex. In particular, this relation includes the equality of their homology.  \\

\begin{definition} \label{def:ext}
Let $\ms(K)$ be a Morse sequence.  
Let $\tirf_p$ and $\ticorf_p$ be the two linear maps 
$\tirf_p:$ $\ha{W}[p] \rightarrow  K[p]$ and $\ticorf_p:$ $\ha{W}[p] \rightarrow  K[p]$
such that, for each $\kappa \in \ha{W}$, \\
\hspace*{\fill} 
$\tirf (\kappa) = \{ \nu \in K \; | \; \kappa \in \corf(\nu) \}$
and $\ticorf (\kappa) = \{ \nu \in K \; | \; \kappa \in \rf(\nu) \}$.
\hspace*{\fill} \\
$\widetilde{\rf}$ and $\widetilde{\corf}$ are, respectively, the \emph{extension and
the coextension map of~$\ms$}. If $c \in \ha{W} [p]$, 
$\widetilde{\rf}_p (c)$ and $\widetilde{\corf}_p (c)$ are, respectively, the \emph{extension and
the coextension of $c$ (for~$\ms$)}.\\
\end{definition}


\noindent
In other words, if $\kappa \in \ha{W}^{(p)}$, we have: \\
\hspace*{\fill}
$\nu \in \tirf (\kappa)$ if and only if $\kappa \in \corf(\nu)$ and $\nu \in \ticorf (\kappa)$ if and only if $\kappa \in \rf(\nu)$.
\hspace*{\fill} \\
Also, the two maps $\tirf$ and $\ticorf$ may be described by $\corf$- and $\rf$-paths: 
\begin{enumerate}
\item[-] We have $\nu \in  \tirf (\kappa)$ if and only if there exists a $\corf$-path from $\ka$ to $\nu$.
\item[-] We have $\nu \in  \ticorf (\kappa)$ if and only if there exists a $\rf$-path from $\nu$ to $\kappa$. 
\end{enumerate}
An illustration of two extension maps is directly given by the coreference maps of the torus and the dunce hat 
depicted in Figure \ref {fig:MorseRefTorus} (c) and \ref{fig:MorseRefDunceHat} (c). Another example is given in
Figure  \ref{fig:Ext} where the effect of the modulo 2 arithmetic is emphasized.  

\begin{figure*}[tb]
    \centering
    \begin{subfigure}[t]{0.25\textwidth}
        \centering
        \includegraphics[width=0.7\textwidth]{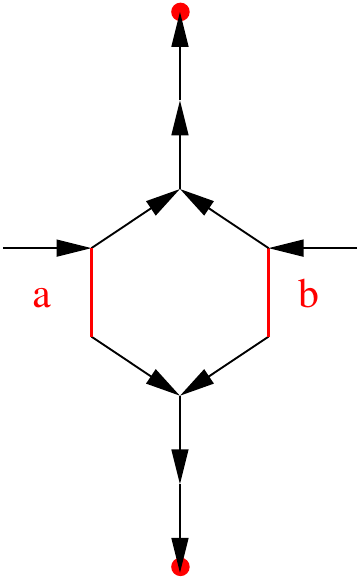}
        \caption{}
    \end{subfigure}%
    ~
    \begin{subfigure}[t]{0.25\textwidth}
        \centering
        \includegraphics[width=0.7\textwidth]{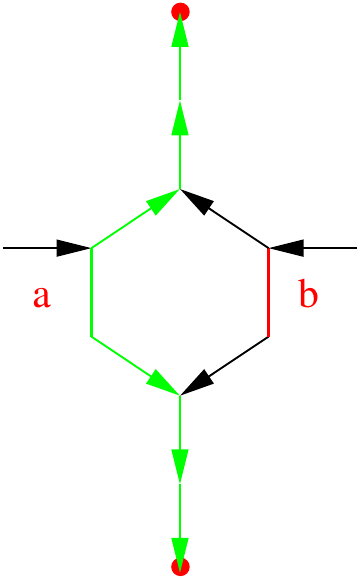}
        \caption{}
    \end{subfigure}%
     ~
    \begin{subfigure}[t]{0.25\textwidth}
        \centering
        \includegraphics[width=0.7\textwidth]{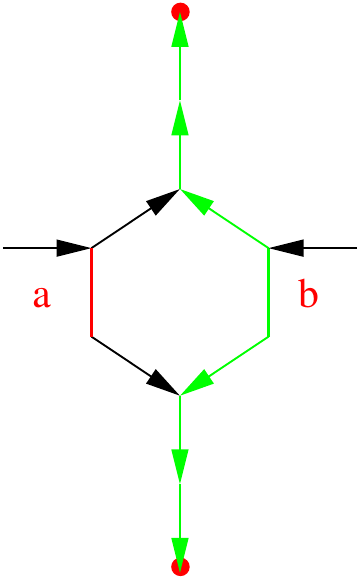}
        \caption{}
    \end{subfigure}%
     ~
    \begin{subfigure}[t]{0.25\textwidth}
        \centering
        \includegraphics[width=0.7\textwidth]{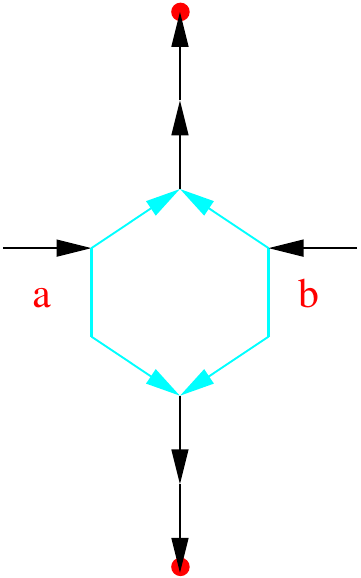}
        \caption{}
    \end{subfigure}%
 \caption{(a) The gradient vector field of a Morse sequence with 4 critical faces. 
  (b)  The extension of the face $a$. (c) The extension of the face $b$. (d) The extension of $a+b$.} 
 \label{fig:Ext}
\end{figure*}

\ELIMINE{
\begin{definition} \label{def:ext}
Let $\ms(K)$ be a Morse sequence. The two linear maps: \\ 
\hspace*{\fill}
$\tirf_p:$ $\msc[p] \rightarrow  K[p]$ and $\ticorf_p:$ $\msc[p] \rightarrow  K[p]$
\hspace*{\fill} \\
are defined by  
$\tirf (\kappa) = \{ \nu \in K \; | \; \corf(\nu) = \kappa \}$
and $\ticorf (\kappa) = \{ \nu \in K \; | \; \rf(\nu) = \kappa \}$,
 where $\kappa \in \msc(p)$. 
If $c \in  \msc[p]$, then $\widetilde{\rf}_p (c)$ is the \emph{$\rf$-extension of $c$} and
$\widetilde{\corf}_p (c)$ is the \emph{$\corf$-extension of $c$}.
\end{definition}
}

With the notion of an extension map, we retrieve the closure operator $\varphi$ introduced in \cite{koz20}, Chapter 12.
This fact can be checked by using Proposition 12.7 of  \cite{koz20} and Theorem \ref{pro:grad1} of this paper.
It has been proved that the image of the operator $\varphi$ induces a chain complex 
which leads to 
an isomorphism with the critical complex, see Chapter 13 of \cite{koz20}. Perhaps
the main difference between $\tirf$ and $\varphi$ is that the map 
$\tirf$ is defined with the coreference map
 $\corf$, while $\varphi$ is built with a decomposition of a graph, see Definition 12.3 of \cite{koz20}. 
 In the following, we will see that the link between $\tirf$ and
 $\corf$ allows us to highlight the strong relationships
 between the maps $\tirf$, $\rf$, $\plc$ and, in a dual way, between the maps $\ticorf$, $\corf$, $\dlc$. We will recover, 
 in a different manner, and without the image of $\tirf$, an isomorphism with the critical complex. In the next section, 
 we will use the chain complex induced by the image of $\tirf$ for an equivalence with the flow complex.

Let $\kappa \in \ha{W}$. We have $\corf(\kappa) = \rf(\kappa) = \kappa$,
therefore $\kappa \in \tirf(\kappa)$ and $\kappa \in \ticorf(\kappa)$.
Also, if $\kappa' \in \ha{W}$ and $\kappa \in \corf(\kappa')$ or $\kappa \in \rf(\kappa')$,
we have $\kappa = \kappa'$ (since $\corf(\kappa') = \rf(\kappa') = \kappa'$).
Furthermore: 
\begin{enumerate}
\item[-] if $\nu \in \tirf(\kappa)$, then $\nu$ cannot be lower regular (since $\corf(\nu) \not= 0)$. 
\item[-] if $\nu \in \ticorf(\kappa)$, then $\nu$ cannot be upper regular (since $\rf(\nu) \not= 0)$. 
\end{enumerate}
Thus, we have the two following results. \\

 \begin{proposition} \label{pro:ext1}
 Let $\ms$ be a Morse sequence on $K$ and let $\kappa \in \ha{W}$. Then: 
 \begin{enumerate}
\item $\tirf (\kappa)$ and $\ticorf (\kappa)$ each contains a unique critical simplex which is precisely $\kappa$.
\item If $\nu \in \tirf(\kappa)$ and $\nu \not= \kappa$, then $\nu$ is 
upper regular for $\ms$.  That is $\tirf (\kappa)+ \kappa \subseteq \overline{W}$.
\item If $\nu \in \ticorf(\kappa)$ and $\nu \not= \kappa$, then $\nu$ is 
lower regular for $\ms$.  That is $\ticorf (\kappa)+ \kappa \subseteq \underline{W}$. \\
\end{enumerate}
 \end{proposition}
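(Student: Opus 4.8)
The plan is to exploit the partition $K = \ha{W} \cup \un{W} \cup \ov{W}$ together with the behaviour of the reference and coreference maps on each class, namely $\rf(\nu) = \corf(\nu) = \nu$ for a critical $\nu$, $\rf(\tau) = 0$ for an upper regular $\tau$, and $\corf(\sig) = 0$ for a lower regular $\sig$. All three items then follow directly from the defining relations $\tirf(\kappa) = \{\nu \in K \mid \kappa \in \corf(\nu)\}$ and $\ticorf(\kappa) = \{\nu \in K \mid \kappa \in \rf(\nu)\}$, with no induction needed.

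For item (1), I would first record that $\kappa \in \tirf(\kappa)$ and $\kappa \in \ticorf(\kappa)$, since $\corf(\kappa) = \rf(\kappa) = \kappa$ forces $\kappa$ into its own (co)reference. To see that $\kappa$ is the \emph{only} critical simplex in each set, I would take any $\kappa' \in \ha{W}$ with $\kappa' \in \tirf(\kappa)$, i.e. $\kappa \in \corf(\kappa')$; because $\corf(\kappa') = \kappa'$, this yields $\kappa = \kappa'$. The same reasoning with $\rf$ in place of $\corf$ settles $\ticorf(\kappa)$.

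For item (2), I would take $\nu \in \tirf(\kappa)$, so that $\kappa \in \corf(\nu)$ and in particular $\corf(\nu) \neq 0$. Since a lower regular simplex has zero coreference, $\nu$ cannot be lower regular, hence $\nu \in \ha{W} \cup \ov{W}$. If moreover $\nu \neq \kappa$, then item (1) excludes $\nu \in \ha{W}$, leaving $\nu \in \ov{W}$, i.e. $\nu$ is upper regular. Item (3) is the exact dual: if $\nu \in \ticorf(\kappa)$ then $\rf(\nu) \neq 0$, so $\nu$ is not upper regular, and combining with item (1) gives that any $\nu \neq \kappa$ is lower regular.

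I do not expect a genuine obstacle here; the statement is essentially a bookkeeping consequence of the partition and the vanishing properties of $\rf$ and $\corf$. The only point requiring a little care is the logical combination in items (2) and (3), where one must first invoke item (1) to discard the critical case before reading off the regularity type from the non-vanishing of the (co)reference.
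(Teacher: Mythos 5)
Your proposal is correct and follows essentially the same route as the paper, which establishes the proposition by exactly these observations: $\rf(\kappa)=\corf(\kappa)=\kappa$ gives membership and uniqueness of the critical simplex, and the vanishing of $\corf$ on $\un{W}$ (resp.\ $\rf$ on $\ov{W}$) combined with the partition $K = \ha{W} \cup \un{W} \cup \ov{W}$ yields items (2) and (3). No gaps; your care in invoking item (1) before reading off the regularity type matches the paper's reasoning.
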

 
  \begin{proposition} \label{pro:ext10}
  Let $\ms$ be a Morse sequence and let $c \in \ha{W}[p]$. We have: \\
  \hspace*{\fill}
$\tirf_{p} (c) \subseteq W_{p}^+$ and $\ticorf_{p} (c) \subseteq K \setminus W_{p}^-$.
 \hspace*{\fill} \\
   \end{proposition}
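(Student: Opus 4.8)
The plan is to reduce the statement to individual critical simplices and then control dimensions. Since $\tirf_p$ and $\ticorf_p$ are linear, we have $\tirf_p(c) = \sum_{\kappa \in c} \tirf(\kappa)$ and $\ticorf_p(c) = \sum_{\kappa \in c} \ticorf(\kappa)$, where the sums are symmetric differences. Because a symmetric difference of subsets of a fixed set is again a subset of that set, it suffices to prove the two containments $\tirf(\kappa) \subseteq W_p^+$ and $\ticorf(\kappa) \subseteq K \setminus W_p^-$ for each single $\kappa \in \ha{W}^{(p)}$.

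The crucial observation I would make next is that every simplex occurring in $\tirf(\kappa)$ or in $\ticorf(\kappa)$ has dimension exactly $p$. Indeed, $\nu \in \tirf(\kappa)$ means $\kappa \in \corf(\nu)$; since $\corf$ is a frame, $\corf(\nu) \in \ha{W}[dim(\nu)]$, so membership of $\kappa$ forces $dim(\kappa) = dim(\nu)$, hence $dim(\nu) = p$. The identical argument with $\rf$ in place of $\corf$ gives $dim(\nu) = p$ for every $\nu \in \ticorf(\kappa)$.

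With the dimension fixed, I would invoke Proposition \ref{pro:ext1}. For the first containment, every $\nu \in \tirf(\kappa)$ is either the critical simplex $\kappa$ or an upper regular simplex; in either case $\nu \in \ha{W} \cup \ov{W}$ with $dim(\nu) = p \leq p$, so $\nu \in W_p^+$ by the definition of the upper skeleton. For the second containment, every $\nu \in \ticorf(\kappa)$ is either $\kappa$ (critical, dimension $p$) or a lower regular simplex (dimension $p$); since $W_p^-$ contains critical and lower regular simplices only in dimension at most $p-1$, neither type belongs to $W_p^-$, and therefore $\nu \in K \setminus W_p^-$.

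I do not expect any genuine obstacle here: the whole content lies in recognizing that the frame condition pins $dim(\nu)$ to $p$, after which both containments fall out directly from the definitions of $W_p^+$ and $W_p^-$ together with Proposition \ref{pro:ext1}. The only care required is attention to the dimension thresholds ($\leq p$ versus $\leq p-1$) appearing in those definitions.
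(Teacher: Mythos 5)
Your proof is correct and takes essentially the same route as the paper, which states Proposition \ref{pro:ext10} as a direct consequence of Proposition \ref{pro:ext1} combined with the definitions of the skeletons $W_p^+$ and $W_p^-$. Your argument merely makes explicit what the paper leaves implicit: the reduction by linearity to a single $\kappa \in \ha{W}^{(p)}$, and the observation that the frame condition pins $dim(\nu)$ to $p$ so that the dimension thresholds ($\leq p$ for $\ha{W} \cup \ov{W}$ in $W_p^+$, $\leq p-1$ for $\ha{W} \cup \un{W}$ in $W_p^-$) yield both containments.
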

 
 Let $\kappa \in \ha{W}$. We have $\rf(\nu) =0$ whenever $\nu$ is upper regular. Thus, by Proposition~\ref{pro:ext1}, we have 
 $\rf (\widetilde{\rf} (\kappa)) = \kappa$. If we denote by $Id_S$ the identity map on a set $S$, we obtain by linearity: \\

  \begin{proposition} \label{pro:ext8}
  Let $\ms$ be a Morse sequence. We have: \\
\hspace*{\fill}
$\rf_{p} \circ \widetilde{\rf}_p = Id_{\ha{W} [p]}$ and
$\corf_{p} \circ \widetilde{\corf}_p  = Id_{\ha{W} [p]}$.
\hspace*{\fill}\\
 \end{proposition}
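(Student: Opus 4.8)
The plan is to prove the first identity $\rf_p \circ \tirf_p = Id_{\ha{W}[p]}$ pointwise on the basis $\ha{W}^{(p)}$ of $\ha{W}[p]$, and then to extend it by linearity; the second identity $\corf_p \circ \ticorf_p = Id_{\ha{W}[p]}$ will follow by the duality principle announced after Theorem \ref{pro:label3}, exchanging the roles of $\partial$ and $\delta$, of $\rf$ and $\corf$, and of upper and lower regular faces.

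First I would fix a critical simplex $\kappa \in \ha{W}^{(p)}$ and analyse the chain $\tirf_p(\kappa) \in K[p]$ using Proposition \ref{pro:ext1}. That proposition tells me exactly the structure I need: $\tirf(\kappa)$ contains a unique critical simplex, namely $\kappa$ itself, and every remaining simplex it contains is upper regular for $\ms$, i.e. $\tirf(\kappa) + \kappa \subseteq \ov{W}$. I would then apply $\rf_p$ and use linearity to write $\rf_p(\tirf_p(\kappa)) = \sum_{\nu \in \tirf(\kappa)} \rf(\nu)$. Since $\rf$ is the identity on critical faces, the term coming from $\kappa$ contributes $\kappa$; and by the very definition of a reference map (Definition \ref{def:reference}), we have $\rf(\nu) = 0$ for every upper regular face $\nu \in \ov{W}$, so all remaining terms vanish. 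Hence $\rf_p(\tirf_p(\kappa)) = \kappa$, and extending linearly over the basis $\ha{W}^{(p)}$ gives $\rf_p \circ \tirf_p = Id_{\ha{W}[p]}$.

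I do not expect a serious obstacle, since essentially all the structural work has already been carried out in Proposition \ref{pro:ext1}, which isolates $\kappa$ as the unique critical simplex of $\tirf(\kappa)$ and certifies that the remaining simplices lie in $\ov{W}$. The only point requiring a little care is that $\kappa$ occurs exactly once in $\tirf(\kappa)$, so that it is not cancelled in the modulo 2 sum; this is precisely what part 1 of that proposition guarantees. For the dual identity I would simply transpose the argument, replacing Proposition \ref{pro:ext1} part 2 by part 3, the boundary relation by the coboundary relation, and the vanishing of $\rf$ on $\ov{W}$ by the vanishing of $\corf$ on $\un{W}$.
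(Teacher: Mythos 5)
Your proposal is correct and follows essentially the same route as the paper: the paper likewise observes that $\rf(\nu)=0$ for upper regular $\nu$, invokes Proposition \ref{pro:ext1} to conclude $\rf(\tirf(\kappa))=\kappa$ for each $\kappa \in \ha{W}$, and extends by linearity, with the dual identity obtained by the same exchange of roles.
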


 Let $\kappa \in  \ha{W}$. With Propositions \ref{pro:cons1} and \ref{pro:ext1}, we derive
 $\rf(\partial(\tirf (\kappa)+ \kappa)) = 0$.  
It means that  $\rf(\partial(\tirf (\kappa))) = \rf(\partial(\kappa))$ where the last expression corresponds to the definition of the map $\plc$. 
By linearity we obtain:  \\
 
 \begin{proposition} \label{pro:ext2}
  Let $\ms$ be a Morse sequence. We have: \\
\hspace*{\fill}
$\rf_{p-1} \circ \partial_p \circ \widetilde{\rf}_p = \plc_p  $ and
$\corf_{p+1} \circ \delta_p \circ \widetilde{\corf}_p  = \dlc_p$
\hspace*{\fill} \\
 \end{proposition}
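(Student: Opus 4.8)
The plan is to verify the first identity on a single critical simplex and then extend by linearity, since $\ha{W}^{(p)}$ is a basis of $\ha{W}[p]$ and both $\rf_{p-1}\circ \partial_p \circ \tirf_p$ and $\plc_p$ are linear maps $\ha{W}[p] \rightarrow \ha{W}[p-1]$. So I fix a critical simplex $\kappa \in \ha{W}^{(p)}$ and aim to show $\rf_{p-1}(\partial_p(\tirf_p(\kappa))) = \plc_p(\kappa)$.

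First I would isolate the upper-regular part of $\tirf(\kappa)$. Since $\corf(\kappa)=\kappa$ we have $\kappa \in \tirf(\kappa)$, so in modulo $2$ arithmetic the chain $\tirf(\kappa) + \kappa$ is precisely $\tirf(\kappa)$ with $\kappa$ deleted. By the second item of Proposition \ref{pro:ext1}, every simplex of $\tirf(\kappa)$ other than $\kappa$ is upper regular, hence $\tirf(\kappa) + \kappa \subseteq \ov{W}$. As $\tirf_p(\kappa)$ is a $p$-chain, the chain $C = \tirf(\kappa) + \kappa$ is supported on upper-regular faces, so the first item of Proposition \ref{pro:cons1} applies and yields $\rf(\partial(\tirf(\kappa) + \kappa)) = 0$.

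It then remains only to unwind this equation. By linearity of $\partial$ and of $\rf$, the relation $\rf(\partial(\tirf(\kappa) + \kappa)) = 0$ becomes $\rf_{p-1}(\partial_p(\tirf_p(\kappa))) + \rf_{p-1}(\partial_p(\kappa)) = 0$, that is $\rf_{p-1}(\partial_p(\tirf_p(\kappa))) = \rf_{p-1}(\partial_p(\kappa))$ modulo $2$. The right-hand side is exactly $\plc_p(\kappa)$ by the definition of the $\rf$-boundary, so the desired equality holds on each basis element and therefore on all of $\ha{W}[p]$. The second identity follows by the duality convention stated after Theorem \ref{pro:label3}, exchanging the roles of $\partial$ and $\delta$, of $\rf$ and $\corf$, and of $\tirf$ and $\ticorf$, and invoking the third item of Proposition \ref{pro:ext1} together with the second item of Proposition \ref{pro:cons1}.

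I do not expect a genuine obstacle here: the entire content is the observation that the non-critical part of $\tirf(\kappa)$ consists solely of upper-regular faces, which is exactly the hypothesis required to apply Proposition \ref{pro:cons1}. The only point demanding care is the modulo $2$ bookkeeping, namely keeping track of the fact that $\kappa$ itself belongs to $\tirf(\kappa)$ so that adding $\kappa$ removes it and leaves a chain supported on $\ov{W}$, and checking that all chains involved sit in the correct degrees, with $\tirf_p(\kappa) \in K[p]$ and its boundary in $K[p-1]$.
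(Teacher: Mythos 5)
Your proposal is correct and follows the paper's own argument exactly: the paper likewise combines Proposition \ref{pro:ext1} (the non-critical part of $\tirf(\kappa)$ is contained in $\ov{W}$) with Proposition \ref{pro:cons1} to get $\rf(\partial(\tirf(\kappa)+\kappa))=0$, deduces $\rf(\partial(\tirf(\kappa)))=\rf(\partial(\kappa))=\plc(\kappa)$, and concludes by linearity, with the second identity obtained by the stated duality. Your explicit remark that $\kappa\in\tirf(\kappa)$ (so that adding $\kappa$ modulo $2$ deletes it) just makes precise a step the paper leaves implicit.
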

 
 \ELIMINE{
  \begin{lemma} \label{lem:ext3}
 Let $\ms$ be a Morse sequence on $K$. Let $\kappa \in \ha{W}$ and let $(\sig, \tau)$ be a free pair for $\ms$. We set: \\
 \hspace*{\fill} 
 $S = \{\nu \in \pl(\tau) \; | \; \kappa \in \rf(\nu) \}$ and 
 $T = \{\nu \in \delta(\sig) \; | \; \kappa \in \corf(\nu) \}$. 
 \hspace*{\fill} \\
Then, each number $card(S)$ and $card(T)$ is even.
 \end{lemma}
 
\begin{proof}

Let us consider the set $S' = \{\nu \in \pl(\tau)+ \sig \; | \; \kappa \in \rf(\nu) \}$.
The simplex $\sig$ is not in $S'$. 
By definition of $\rf$, we have
$ \rf(\sig) = \rf(\partial(\tau) + \sig)$. Therefore we have $\kappa \in \rf(\sig)$
if and only if $card(S')$ is odd. Thus $\sig \in S$ if and only if $card(S')$ is odd.
If $\sig \in S$, we have $S = S' \cup \{\sig\}$ and $card(S)$ is even since $\sig \not\in S'$.
If $\sig \not\in S$, we have $S= S'$ and $card(S)$ is also even.
\qed
\end{proof}
}

Let us consider the critical face $c$ of the dunce hat depicted in Figure \ref{fig:MorseRefDunceHat} (c). 
The extension of $c$, that is the set $\widetilde{\rf}(c)$, is composed of all yellow faces. 
By Prop. \ref{pro:ext1}, if $\nu \in \widetilde{\rf}(c)$, 
 then either $\nu = c$, or $\nu$ is upper regular for $\ms$.
Now, let us consider the boundary of $\tirf(c)$. The set $\partial (\tirf(c))$ is composed of all the blue segments.
By examining the skeleton $W_1^+$ of Fig. \ref{fig:MorseSequenceTorusSkeleton} (b), it can be checked that this set is also
composed solely of critical and upper regular faces. This illustrates the following result. \\


 \begin{proposition} \label{pro:ext4}
 Let $\ms$ be a Morse sequence and let $\kappa \in \ha{W}$. 
 \begin{enumerate}
\item If $\nu \in \partial (\tirf(\kappa))$, then $\nu$ is either critical or
upper regular for $\ms$.
\item If $\nu \in \delta (\ticorf(\kappa))$, then $\nu$ is either critical or
lower regular for $\ms$.
\end{enumerate}
 \end{proposition}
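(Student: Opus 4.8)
The plan is to prove statement (1) by contraposition: I will show that if $\nu$ is \emph{lower regular} for $\ms$, then $\nu \notin \partial(\tirf(\kappa))$. Since $K = \ha{W} \cup \un{W} \cup \ov{W}$ is a partition, this is equivalent to the claim, because it forces any $\nu \in \partial(\tirf(\kappa))$ to be critical or upper regular.

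The first step is to turn membership in the boundary into a counting statement. When $\kappa \in \ha{W}^{(p)}$, the extension $\tirf(\kappa) = \{\mu \in K \mid \kappa \in \corf(\mu)\}$ is a $p$-chain, and a $(p-1)$-simplex $\nu$ lies in the chain $\partial(\tirf(\kappa)) = \sum_{\mu \in \tirf(\kappa)} \partial(\mu)$ if and only if $\nu$ is a facet of an odd number of simplices of $\tirf(\kappa)$; that is, if and only if $Card(\delta(\nu) \cap \tirf(\kappa))$ is odd. I would then rewrite this cardinality through the coreference map: by the very definition of $\tirf$ we have $\delta(\nu) \cap \tirf(\kappa) = \{\mu \in \delta(\nu) \mid \kappa \in \corf(\mu)\}$, so its parity equals the coefficient modulo $2$ of $\kappa$ in the chain $\corf(\delta(\nu)) = \sum_{\mu \in \delta(\nu)} \corf(\mu)$.

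The crux is then to invoke the defining property of the coreference map. Since $\nu$ is lower regular, Definition \ref{def:reference} (equivalently, Proposition \ref{pro:cons1} applied to the singleton $\{\nu\}$) gives $\corf(\delta(\nu)) = 0$. Hence the coefficient of $\kappa$ vanishes, the cardinality $Card(\delta(\nu) \cap \tirf(\kappa))$ is even, and therefore $\nu \notin \partial(\tirf(\kappa))$, which is exactly what the contrapositive requires.

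I expect the only delicate point to be the bookkeeping in the first translation: I must match the support of the boundary chain $\partial(\tirf(\kappa))$ with the coface set $\delta(\nu)$ and keep every count strictly modulo $2$, so that the global vanishing of $\corf(\delta(\nu))$ transfers cleanly to the single coefficient of $\kappa$. Statement (2) is the exact dual, obtained by exchanging $\partial$ with $\delta$, $\corf$ with $\rf$, and $\tirf$ with $\ticorf$: there the identity $\rf(\partial(\nu)) = 0$ for upper regular $\nu$ plays the role that $\corf(\delta(\nu)) = 0$ plays above, and by the paper's stated convention this dual argument need not be written out in full.
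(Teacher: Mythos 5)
Your proof is correct and follows essentially the same route as the paper's own: both arguments translate $\nu \in \partial(\tirf(\kappa))$ into the parity of the set $\{\mu \in \delta(\nu) \; | \; \kappa \in \corf(\mu)\}$ and then annihilate it via $\corf(\delta(\nu)) = 0$ for lower regular $\nu$ (Definition \ref{def:reference}, or Proposition \ref{pro:cons1} applied to a singleton). Your treatment of statement (2) by the stated duality convention also matches the paper exactly.
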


\begin{proof} Let $\kappa \in \ha{W}$ and let $\sig$ be a
simplex that is lower regular for $\ms$.
Let 
$T = \{\nu \in \tirf(\kappa)  \; | \; \sig \in \pl(\nu) \}$. 
By the definition of the boundary operator,
we have $\sig \in \partial (\tirf(\kappa))$ if and only if $card(T)$ is odd. 
But we may also write $T = \{\nu \in \delta(\sig) \; | \; \kappa \in \corf(\nu) \}$. 
We observe that we have $\kappa \in \corf(\delta(\sig))$ if and only if $card(T)$ is odd.
But, since $\sig$ is lower regular, we must have $\corf(\delta(\sig)) = 0$ (Proposition \ref{pro:cons1}). 
Therefore $card(T)$ must be even.
It means that $\partial (\tirf(\kappa))$ cannot contain a simplex that is lower regular for $\ms$.
\end{proof}

\ELIMINE{
Another way of stating Proposition \ref{pro:ext4} is to say that, if $\kappa \in \ha{W}^{(p)}$, then: \\
1) $\partial (\tirf(\kappa)) \subseteq W_{p-1}^+$. \\
!!! 2) $\delta (\ticorf(\kappa)) \subseteq W^{p+1}$.
}

\begin{proposition} \label{pro:ext5}
 Let $\ms$ be a Morse sequence and let $\kappa \in \ha{W}$. We have: \\
 \hspace*{\fill}  $\plc(\kappa) = \partial (\tirf(\kappa)) \cap \ha{W}$ and $\dlc(\kappa) = \delta (\ticorf(\kappa)) \cap \ha{W}$.  \hspace*{\fill} 
 \end{proposition}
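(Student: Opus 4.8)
The plan is to combine Proposition \ref{pro:ext2}, which already expresses $\plc$ through the extension map, with the structural information about $\partial(\tirf(\kappa))$ supplied by Proposition \ref{pro:ext4}. First I would recall that, by Proposition \ref{pro:ext2}, for any $\kappa \in \ha{W}^{(p)}$ we have $\plc(\kappa) = \rf_{p-1}(\partial_p(\tirf_p(\kappa)))$. Hence it suffices to show that applying the reference map to the chain $\partial(\tirf(\kappa))$ merely selects its critical faces, that is, $\rf(\partial(\tirf(\kappa))) = \partial(\tirf(\kappa)) \cap \ha{W}$.

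The decisive step is to invoke Proposition \ref{pro:ext4}: every simplex $\nu$ appearing in $\partial(\tirf(\kappa))$ is either critical or upper regular for $\ms$. By the definition of the reference map (Definition \ref{def:reference}), $\rf$ fixes each critical simplex ($\rf(\nu) = \nu$) and annihilates each upper regular one ($\rf(\tau) = 0$ for $\tau \in \ov{W}$). Applying $\rf$ term by term (by linearity) to the chain $\partial(\tirf(\kappa))$, the upper regular faces drop out and the critical faces survive unchanged, yielding exactly $\partial(\tirf(\kappa)) \cap \ha{W}$. Combined with the first step, this gives $\plc(\kappa) = \partial(\tirf(\kappa)) \cap \ha{W}$.

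The dual identity $\dlc(\kappa) = \delta(\ticorf(\kappa)) \cap \ha{W}$ follows from the same argument, with $\partial$, $\rf$, $\tirf$, and ``upper regular'' replaced by $\delta$, $\corf$, $\ticorf$, and ``lower regular'', as licensed by the duality convention stated after Theorem \ref{pro:label3}.

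The main obstacle is conceptual rather than computational: the reference map does \emph{not} in general vanish on lower regular simplices, so a naive evaluation of $\rf(\partial(\tirf(\kappa)))$ could pick up spurious contributions from such faces. Proposition \ref{pro:ext4} is precisely what rules this out, guaranteeing that no lower regular face occurs in the boundary of an extension. Once that structural fact is in hand, the identity is immediate from linearity and the defining behaviour of $\rf$ on critical and upper regular faces.
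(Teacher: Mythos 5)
Your proof is correct and takes essentially the same route as the paper's: both reduce the statement via Proposition \ref{pro:ext2} to evaluating $\rf(\partial(\tirf(\kappa)))$, then invoke Proposition \ref{pro:ext4} to exclude lower regular faces, so that $\rf$ acts as the identity on critical faces and as zero on upper regular ones. The only cosmetic difference is that you conclude by linearity (no cancellation among distinct critical faces mod 2), whereas the paper phrases the same computation through parity of cardinalities.
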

 
 \begin{proof} By Proposition \ref{pro:ext2}, we have
 $\plc(\kappa) = \rf(\partial(\tirf (\kappa)))$. Let $\nu \in K$. We can write: \\
 \hspace*{\fill}
 $\nu \in \rf(\partial(\tirf (\kappa))) \Leftrightarrow card \{\nu' \; | \; \nu' \in \partial(\tirf (\kappa))$ and $\nu \in \rf(\nu') \}$ is odd. 
 \hspace*{\fill} \\
 If $\nu'$ is an upper regular simplex of $\partial(\tirf (\kappa))$, we have $\rf(\nu') = 0$. Furthermore, by 
Proposition \ref{pro:ext4}, $\partial(\tirf (\kappa))$ does not contain any lower regular simplex. Therefore,
if $\nu' \in \partial(\tirf (\kappa))$ and $\nu \in \rf(\nu')$, we must have $\nu' \in \ha{W}$. In this case, we also have 
$\rf(\nu') = \nu'$. We obtain $\nu = \nu'$. Thus, we can write: \\
\hspace*{\fill}
$\nu \in \rf(\partial(\tirf (\kappa))) \Leftrightarrow card \{\nu \; | \; \nu \in \partial(\tirf (\kappa)) \cap \ha{W} \}$ is odd, 
\hspace*{\fill} \\
This last number must be equal to 1.
That is, we have $\nu \in \rf(\partial(\tirf (\kappa)))$ if and only if $\nu \in \partial(\tirf (\kappa)) \cap \ha{W}$, which gives the desired result.
\ELIMINE{
: \\
$\nu \in \rf(\partial(\tirf (\kappa)))$ if and only $card \{\nu' \; | \; \nu' \in \partial(\tirf (\kappa)) \cap \msc$ and $\nu \in \rf(\nu') \}$ is odd. \\

Therefore, we have $\nu \in \rf(\partial(\tirf (\kappa)))$ if and only if $\nu$ is critical for $\ms$. \\
But 

$\nu \in \rf(\partial(\tirf (\kappa)))$ if and only $card \{\nu' \; | \; \nu' \in \partial(\tirf (\kappa)) \cap \msc$ and $\nu = \nu' \}$ is odd. \\
$\nu \in \rf(\partial(\tirf (\kappa)))$ if and only $card \{\nu \; | \; \nu \in \partial(\tirf (\kappa)) \cap \msc \}$ is odd. \\

Therefore, the set $\rf(\partial(\tirf (\kappa)))$ contains the critical simplices that are precisely in $\partial(\tirf (\kappa))$.
That is, we have $\rf(\partial(\tirf (\kappa))) = \partial (\tirf(\kappa)) \cap \msc$.
} 
\end{proof}


\begin{proposition} \label{pro:ext6}
Let $\ms$ be a Morse sequence on $K$. 
\begin{enumerate}
\item If $z \in Z_p(\ha{W})$, then $\tirf_p (z) \in Z_p(K)$. 
\item If $z \in Z^p(\ha{W})$, then $\ticorf_p (z) \in Z^p(K)$. 
\end{enumerate}
\end{proposition}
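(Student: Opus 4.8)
The plan is to prove statement (1); statement (2) then follows by the duality that exchanges $\partial$ with $\delta$, $\rf$ with $\corf$, $\tirf$ with $\ticorf$, and lower with upper skeletons, as announced after Theorem \ref{pro:label3}. Fix $z \in Z_p(\ha{W})$, so that $\plc_p(z) = 0$, and set $w = \partial_p(\tirf_p(z)) \in K[p-1]$. The goal is to show that $w = 0$, which is exactly the assertion $\tirf_p(z) \in Z_p(K)$. I would extract two facts about $w$: that its reference vanishes, and that its support is confined to the upper skeleton $W_{p-1}^+$.

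First I would compute the reference of $w$. By Proposition \ref{pro:ext2}, $\rf_{p-1}(w) = \rf_{p-1}(\partial_p(\tirf_p(z))) = \plc_p(z) = 0$. Next I would pin down the support of $w$. Proposition \ref{pro:ext4} asserts that every face of $\partial(\tirf(\kappa))$ is critical or upper regular for $\ms$; extending this to the chain $z$ by linearity — since $w = \sum_{\kappa \in z} \partial(\tirf(\kappa))$ and any face surviving the symmetric difference must already occur in one of the summands — shows that every face of $w$ is critical or upper regular. As all these faces have dimension $p-1$, they all belong to $W_{p-1}^+$, so $w \subseteq W_{p-1}^+$.

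Finally, $w$ is itself a cycle, because $\partial_{p-1}(w) = \partial_{p-1}\partial_p(\tirf_p(z)) = 0$, hence $w \in Z_{p-1}(K)$. I would then apply Theorem \ref{pro:reg5}, part (1), to the two cycles $w$ and $0$ in $Z_{p-1}(K)$: both are contained in $W_{p-1}^+$ and satisfy $\rf_{p-1}(w) = 0 = \rf_{p-1}(0)$, so the theorem forces $w = 0$. Therefore $\partial_p(\tirf_p(z)) = 0$, that is, $\tirf_p(z) \in Z_p(K)$, as required.

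The delicate point is not the identity $\rf_{p-1}(w) = 0$ but the observation that $w$ is forced to live in $W_{p-1}^+$. The cautionary examples of the dunce hat in Section \ref{sec:reg} show that vanishing reference alone does not force a cycle to be $0$; it is precisely Proposition \ref{pro:ext4} that confines the boundary of an extension to critical and upper regular faces, thereby making Theorem \ref{pro:reg5} — the injectivity of $\rf$ on cycles supported in $W_{p-1}^+$ — applicable. The only routine care required is the passage from the basis-element statements of Propositions \ref{pro:ext2} and \ref{pro:ext4} to the general chain $z$, which is immediate by linearity together with the behaviour of the symmetric difference on supports.
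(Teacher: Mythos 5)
Your proof is correct, and it takes a genuinely different route from the paper's. The paper proves this proposition by induction on $p$: the inductive step first establishes, for each critical $p$-simplex $\kappa$, the identity $\partial_p(\tirf(\kappa)) = \tirf_{p-1}(\plc(\kappa))$ — it needs the induction hypothesis (applied to $\plc(\kappa) \in Z_{p-1}(\ha{W})$) precisely to know that $\tirf_{p-1}(\plc(\kappa))$ is a cycle before Theorem \ref{pro:reg5} can be invoked to identify the two chains, and only then concludes by linearity. You sidestep the induction entirely by comparing $w = \partial_p(\tirf_p(z))$ with the zero cycle rather than with $\tirf_{p-1}(\plc_p(z))$: since $0$ is trivially a cycle supported in $W_{p-1}^+$ with vanishing reference, the three ingredients — $\rf_{p-1}(w) = \plc_p(z) = 0$ from Proposition \ref{pro:ext2}, $w \subseteq W_{p-1}^+$ from Proposition \ref{pro:ext4} plus the support argument for symmetric differences, and $w \in Z_{p-1}(K)$ from $\partial \circ \partial = 0$ — feed directly into Theorem \ref{pro:reg5} with no recursion on dimension. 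All three propositions you cite precede this one in the paper, so there is no circularity; in particular Theorem \ref{pro:chainmap2}, which the paper proves afterwards using the present proposition, remains unaffected. What the trade-off buys: your argument is shorter and self-contained, while the paper's inductive proof establishes en route the pointwise chain-map identity $\partial_p \circ \tirf_p = \tirf_{p-1} \circ \plc_p$ on critical simplices, i.e.\ the substance of Theorem \ref{pro:chainmap2}, which under your arrangement must still be proved separately (as the paper in fact does). Your dual reduction for statement (2) is also sound, since a critical or lower regular $(p+1)$-face necessarily lies in $K \setminus W_{p+1}^-$, so Theorem \ref{pro:reg5}(2) applies to $\delta_p(\ticorf_p(z))$ exactly as claimed. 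One minor remark: Proposition \ref{pro:ext2} is already stated as an equality of linear maps on $\ha{W}[p]$, so no passage from basis elements is needed there; the linearity care is genuinely required only for Proposition \ref{pro:ext4}, which you handle correctly.
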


 \begin{proof}
 If $z \in Z_0(\ha{W})$, then $z \in K[0]$, $\tirf_0 (z) = z$, and $\partial(z) = 0$.  
 Thus the property is true for $p=0$.
Suppose the property is true for any $z \in Z_{p-1}(\ha{W})$, with $p \geq 1$. \\
1) Let $\kappa \in \ha{W}$ with $\kappa \in K^{(p)}$. 
by Proposition \ref{pro:ext8}, we have $\plc(\kappa) = \rf_{p-1}(\tirf_{p-1} (\plc(\kappa)))$. 
With Prop. \ref{pro:ext2} we obtain : \\
\hspace*{\fill}
 $\plc(\kappa) = \rf_{p-1}(\partial_p (\tirf(\kappa)))$ and $\plc(\kappa) = \rf_{p-1}(\tirf_{p-1} (\plc(\kappa)))$. 
 \hspace*{\fill} \\
Now, let us consider the two expressions $\partial_p (\tirf(\kappa))$ and $\tirf_{p-1} (\plc(\kappa))$. \\
- We have $\partial_p (\tirf(\kappa)) \in Z_{p-1}(K)$.
Since $\plc(\kappa) \in Z_{p-1}(\ha{W})$, by the induction hypothesis, we also have
$\tirf_{p-1} (\plc(\kappa)) \in Z_{p-1}(K)$. \\
- By Proposition \ref{pro:ext4}, we have $\partial_p (\tirf(\kappa)) \subseteq  W_{p-1}^+$. Furthermore, 
by Proposition \ref{pro:ext1}, we have $\tirf_{p-1} (\plc(\kappa)) \subseteq W_{p-1}^+$. \\
Thus, by considering the two above expressions of $\plc(\kappa)$,
Theorem \ref{pro:reg5} leads to: \\
\hspace*{\fill} $\partial_p (\tirf(\kappa)) = \tirf_{p-1} (\plc(\kappa))$. \hspace*{\fill}\\
2) Let $z \in Z_p(\ha{W})$, that is $\plc_p  (z) = 0$. We have
$\partial_p (\tirf_p (z)) = \sum \{ \partial_p (\tirf (\kappa)) \; | \; \kappa \in z \}$. \\
With the above result we obtain:  \\
$\partial_p (\tirf_p (z)) = \sum_{\kappa \in z} \tirf_{p-1} (\plc(\kappa))$
$= \tirf_{p-1} (\sum_{\kappa \in z}  \plc(\kappa)) = \tirf_{p-1} (\plc_p  (z)) = 0$. 
\end{proof}

  \begin{proposition} \label{pro:ext7}
  Let $\ms$ be a Morse sequence. 
  \begin{enumerate}
\item If $z \in Z_p(K)$, then $\tirf_p(\rf_{p} (z)) \sim z$. 
\item If $z \in Z^p(K)$, then $\ticorf_p(\corf_{p} (z)) \sim z$. 
\end{enumerate}
\end{proposition}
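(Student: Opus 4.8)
The plan is to reduce the statement to Theorem~\ref{pro:injec}, which says that two $p$-cycles of $K$ with equal reference are homologous. So for the first statement, fixing $z \in Z_p(K)$, it suffices to produce a \emph{cycle} $\tirf_p(\rf_p(z))$ and to verify that its reference coincides with $\rf_p(z) = \rf_p(z)$.

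First I would observe that $\rf_p(z) \in Z_p(\ha{W})$: this is exactly the chain-map property of $\rf$ recorded just after Theorem~\ref{pro:label3}, namely that $\rf$ carries cycles of $K$ to cycles of the critical complex. Having $\rf_p(z)$ as a cycle of $(\ha{W}[p], \plc_p)$, Proposition~\ref{pro:ext6} then guarantees that its extension $\tirf_p(\rf_p(z))$ is a genuine element of $Z_p(K)$. This is the step that makes Theorem~\ref{pro:injec} applicable, since that theorem requires both arguments to be cycles.

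Next I would compute the reference of this new cycle. By Proposition~\ref{pro:ext8} we have $\rf_p \circ \tirf_p = Id_{\ha{W}[p]}$, hence
\[
\rf_p\bigl(\tirf_p(\rf_p(z))\bigr) = \rf_p(z).
\]
Thus $z$ and $\tirf_p(\rf_p(z))$ are two $p$-cycles of $K$ having the same reference, and Theorem~\ref{pro:injec} immediately yields $\tirf_p(\rf_p(z)) \sim z$, which is the desired conclusion. The second statement follows by the usual duality, exchanging the roles of $\partial$ and $\delta$, of $\rf$ and $\corf$, and of $\tirf$ and $\ticorf$, and invoking the cohomological versions of the three ingredients (the cochain-map property of $\corf$, Proposition~\ref{pro:ext6}(2), and Proposition~\ref{pro:ext8}).

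I expect no serious obstacle here: the argument is a short assembly of earlier results. The only point requiring care is making sure the intermediate object $\tirf_p(\rf_p(z))$ is legitimately a cycle before appealing to Theorem~\ref{pro:injec} — which is precisely why Proposition~\ref{pro:ext6} was established beforehand — and keeping track of the grading so that the compositions $\rf_p$, $\tirf_p$ land in the correct spaces $\ha{W}[p]$ and $K[p]$.
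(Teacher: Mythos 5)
Your proof is correct and follows exactly the same route as the paper's own argument: Theorem~\ref{pro:label3} to see that $\rf_p(z) \in Z_p(\ha{W})$, Proposition~\ref{pro:ext6} to see that $\tirf_p(\rf_p(z)) \in Z_p(K)$, Proposition~\ref{pro:ext8} to compute $\rf_p(\tirf_p(\rf_p(z))) = \rf_p(z)$, and Theorem~\ref{pro:injec} to conclude. Nothing is missing, and the duality remark for the cohomological statement matches the paper's stated convention of exchanging $\partial$ and $\delta$.
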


 \begin{proof}
Let $z \in Z_p(K)$.
By Theorem \ref{pro:label3}, $\rf_{p} (z) \in Z_p(\ha{W})$, and by Proposition \ref{pro:ext6} $\tirf_p(\rf_{p} (z)) \in Z_p(K)$.
By Proposition \ref{pro:ext8}, we have 
$\rf_{p} (\tirf_p (\rf_{p} (z))) = \rf_{p} (z)$. Now, by applying Theorem \ref{pro:injec}, we obtain $\tirf_p(\rf_{p} (z)) \sim z$. 
\end{proof}

 \begin{theorem} \label{pro:chainmap2}
 Let $\ms$ be a Morse sequence. We have: \\
\hspace*{\fill}
$\partial_p \circ \tirf_p = \tirf_{p-1} \circ \plc_p  $ and
$\dl_p \circ \ticorf_p = \ticorf_{p+1} \circ \dlc_p$ 
\hspace*{\fill}
 \end{theorem}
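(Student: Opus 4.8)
The plan is to prove the first identity $\partial_p \circ \tirf_p = \tirf_{p-1} \circ \plc_p$ by reducing, through linearity, to the case of a single critical simplex, and then to obtain the second identity by the duality principle stated after Theorem~\ref{pro:label3}. Since $\ha{W}^{(p)}$ is a basis of $\ha{W}[p]$ and both composites are linear, it suffices to check that
\[
\partial_p(\tirf(\ka)) = \tirf_{p-1}(\plc(\ka))
\]
for every critical simplex $\ka \in \ha{W}^{(p)}$. In fact this very equality was already established as an intermediate step inside the proof of Proposition~\ref{pro:ext6}, so the theorem will follow almost immediately; nevertheless I would spell out how that per-simplex identity is obtained, since it carries the whole content of the result.

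First I would apply $\rf_{p-1}$ to both chains. By Proposition~\ref{pro:ext2} we have $\rf_{p-1}(\partial_p(\tirf(\ka))) = \plc(\ka)$, and by Proposition~\ref{pro:ext8} we have $\rf_{p-1}(\tirf_{p-1}(\plc(\ka))) = \plc(\ka)$, so both $(p-1)$-chains share the reference $\plc(\ka)$. To upgrade equal references to genuine equality I would invoke the uniqueness result Theorem~\ref{pro:reg5}, whose hypotheses require both chains to be $(p-1)$-cycles lying in $W_{p-1}^+$. The chain $\partial_p(\tirf(\ka))$ is a boundary, hence a cycle, and by Proposition~\ref{pro:ext4} all its faces are critical or upper regular, so it is contained in $W_{p-1}^+$. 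For the other chain, $\plc_{p-1}(\plc_p(\ka)) = 0$ by Proposition~\ref{pro:label4}, so $\plc(\ka) \in Z_{p-1}(\ha{W})$; Proposition~\ref{pro:ext6} then gives $\tirf_{p-1}(\plc(\ka)) \in Z_{p-1}(K)$, while Proposition~\ref{pro:ext10} gives $\tirf_{p-1}(\plc(\ka)) \subseteq W_{p-1}^+$. With both chains exhibited as cycles in $W_{p-1}^+$ sharing the reference $\plc(\ka)$, Theorem~\ref{pro:reg5} forces $\partial_p(\tirf(\ka)) = \tirf_{p-1}(\plc(\ka))$.

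Extending by linearity, for any $c \in \ha{W}[p]$ I would write $\partial_p(\tirf_p(c)) = \sum_{\ka \in c}\partial_p(\tirf(\ka)) = \sum_{\ka \in c}\tirf_{p-1}(\plc(\ka)) = \tirf_{p-1}(\plc_p(c))$, which is the first identity. The second identity $\dl_p \circ \ticorf_p = \ticorf_{p+1} \circ \dlc_p$ follows by the systematic exchange $\partial \leftrightarrow \delta$, $\rf \leftrightarrow \corf$, $\tirf \leftrightarrow \ticorf$, $\plc \leftrightarrow \dlc$, and $W_{p-1}^+ \leftrightarrow K \setminus W_p^-$, together with the dual halves of the propositions invoked above. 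The only genuine obstacle is the per-simplex identity, and within it the step where equal references are promoted to equality: that is precisely where one must verify the cycle-and-containment hypotheses of Theorem~\ref{pro:reg5}, which is the technical heart of the argument, everything else being bookkeeping and linearity.
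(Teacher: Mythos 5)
Your proof is correct and takes essentially the same route as the paper's: reduce by linearity to a single critical simplex $\kappa$, show that $\partial_p(\tirf(\kappa))$ and $\tirf_{p-1}(\plc(\kappa))$ are both cycles contained in $W_{p-1}^+$ with common reference $\plc(\kappa)$, and conclude equality via Theorem~\ref{pro:reg5}. The only differences are cosmetic: you cite Proposition~\ref{pro:ext2} directly where the paper re-derives the same identity through Propositions~\ref{pro:ext4}, \ref{pro:reg4} and \ref{pro:ext5}, and you make explicit (via Proposition~\ref{pro:label4}) the fact $\plc(\kappa)\in Z_{p-1}(\ha{W})$ needed to apply Proposition~\ref{pro:ext6}, a point the paper leaves implicit.
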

 
\begin{proof} Let $\kappa \in \ha{W}$ with $\kappa \in K^{(p)}$. \\
 - By Proposition \ref{pro:ext4}, we have $\partial_p (\tirf(\kappa)) \subseteq W_{p-1}^+$. By Proposition \ref{pro:reg4},
we obtain $\rf_{p-1}(\partial_p (\tirf(\kappa))) = \partial_p (\tirf(\kappa)) \cap \ha{W}$. 
Thus, by Proposition \ref{pro:ext5},\\
 \hspace*{\fill}
  $\rf_{p-1}(\partial_p (\tirf(\kappa))) = \plc(\kappa)$. 
   \hspace*{\fill} \\
 - By Proposition \ref{pro:ext8}, we have
$\rf_{p-1}(\tirf_{p-1} (\plc (\kappa)) ) = \plc(\kappa)$. Thus: \\
 \hspace*{\fill}
 $\rf_{p-1}(\partial_p (\tirf(\kappa))) = \rf_{p-1}(\tirf_{p-1} (\plc (\kappa)) )$.
  \hspace*{\fill} \\
Furthermore,  by Proposition \ref{pro:ext6}, we have $\tirf_{p-1} (\plc (\kappa) ) \in Z_{p-1} (K)$ and,
by Proposition \ref{pro:ext10}, $\tirf_{p-1} (\plc (\kappa) ) \subseteq W_{p-1}^+$. 
Thus, by Theorem  \ref{pro:reg5}, we obtain: \\
 \hspace*{\fill} 
$\partial_p (\tirf(\kappa)) = \tirf_{p-1} (\plc (\kappa)) $,
 \hspace*{\fill} \\
which gives the result by linearity. 
\end{proof}

Thus, the previous theorem is a counter-part of Theorem \ref{pro:label3} given for $\rf$ and $\corf$:
  the map $\tirf$ is a chain map from the chain complex $(\ha{W}[p], \plc_p)$
to  the chain complex  $(K[p], \partial_p)$, and 
the map $\ticorf$ is a cochain map from $(\ha{W}[p], \dlc_p)$
to  $(K[p], \delta_p)$. 

The counter-part of $\rf^H_p$ and $\corf^H_p$ are two linear maps 
$\tirf^H_p$ and $\ticorf^H_p$ between $H_p(\ha{W})$ and $H_p(K)$,
and between $H^p(\ha{W})$ and $H^p(K)$. \\
From Propositions \ref {pro:ext8} and \ref{pro:ext7}, we deduce that: 
  \begin{enumerate}
\item $\widetilde{\rf}^H_p \circ \rf^H_{p}  = Id_{H_p(K)}$  and $\rf^H_{p} \circ \widetilde{\rf}^H_p = Id_{H_p(\ha{W})}$.
\item  $\widetilde{\corf}^H_p \circ \corf^H_{p}  = Id_{H^p(K)}$  and $\corf^H_{p} \circ \widetilde{\corf}^H_p = Id_{H^p(\ha{W})}$. 
\end{enumerate}
Therefore, we have the following result which relates the homology of a complex with the homology of 
its critical complex. \\
 
 
  \begin{theorem} \label{pro:chainmap3}
 Let $\ms$ be a Morse sequence. 
   \begin{enumerate}
\item The vector spaces $H_p(K)$ and $H_p(\ha{W})$ are isomorphic.
\item  The vector spaces $H^p(K)$ and $H^p(\ha{W})$ are isomorphic. \\
\end{enumerate}
 \end{theorem}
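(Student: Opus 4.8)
The plan is to exhibit $\rf^H_p$ and $\tirf^H_p$ as a pair of mutually inverse linear maps between $H_p(K)$ and $H_p(\ha{W})$; the same argument, transported to the dual side, then handles the cohomology statement.

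First I would record that both maps are well defined on homology. By Theorem \ref{pro:label3} the reference map $\rf$ is a chain map, so it descends to $\rf^H_p : H_p(K) \to H_p(\ha{W})$, $[z] \mapsto [\rf_p(z)]$, exactly as already noted in Section \ref{sec:Mcomplex}. By Theorem \ref{pro:chainmap2} the extension map $\tirf$ is a chain map from $(\ha{W}[p], \plc_p)$ to $(K[p], \partial_p)$, hence it descends to $\tirf^H_p : H_p(\ha{W}) \to H_p(K)$, $[z] \mapsto [\tirf_p(z)]$ (here I implicitly use Proposition \ref{pro:ext6}, which guarantees that $\tirf_p$ carries cycles to cycles, so that classes are sent to classes).

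Next I would verify the two composition identities that are displayed just before the statement. In one direction, Proposition \ref{pro:ext8} gives $\rf_p \circ \tirf_p = Id_{\ha{W}[p]}$ already at the chain level; passing to homology classes yields $\rf^H_p \circ \tirf^H_p = Id_{H_p(\ha{W})}$. In the other direction, Proposition \ref{pro:ext7} asserts that for every $z \in Z_p(K)$ one has $\tirf_p(\rf_p(z)) \sim z$, i.e. $\tirf_p(\rf_p(z))$ and $z$ lie in the same homology class; this is precisely the identity $\tirf^H_p \circ \rf^H_p = Id_{H_p(K)}$.

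Finally I would conclude that a linear map admitting a two-sided inverse is an isomorphism, so $\rf^H_p$ is an isomorphism with inverse $\tirf^H_p$, which proves part (1). Part (2) follows verbatim by exchanging the roles of $\partial$ and $\delta$, of $\rf$ and $\corf$, and of $\tirf$ and $\ticorf$, invoking the coboundary analogues (the second halves of Propositions \ref{pro:ext8} and \ref{pro:ext7} and of Theorem \ref{pro:chainmap2}), in accordance with the duality convention announced after Theorem \ref{pro:label3}. I do not anticipate a real obstacle here, since the substantive content is entirely contained in Propositions \ref{pro:ext8} and \ref{pro:ext7}; the only point deserving a moment of care is confirming that the chain-level identities legitimately descend to homology, which is immediate once one knows that $\rf$ and $\tirf$ are chain maps carrying cycles to cycles.
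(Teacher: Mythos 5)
Your proposal is correct and follows essentially the same route as the paper: the paper likewise deduces the isomorphism from the chain-map properties (Theorems \ref{pro:label3} and \ref{pro:chainmap2}) together with the two composition identities obtained from Propositions \ref{pro:ext8} and \ref{pro:ext7}, yielding $\tirf^H_p \circ \rf^H_p = Id_{H_p(K)}$ and $\rf^H_p \circ \tirf^H_p = Id_{H_p(\ha{W})}$, with the cohomology statement handled by the announced duality convention. Your explicit remarks on well-definedness via Proposition \ref{pro:ext6} only make the descent to homology more careful than the paper's brief presentation.
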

We have seen that the critical complex $(\ha{W}[p], \plc_p)$ is equivalent to the classical notion of a Morse complex.
Thus, the previous theorem allows us to retrieve a basic fact of discrete Morse theory.
There are several presentations of this fact, see \cite{For98a}, \cite{koz20}. \\
A specificity of the above results is that the isomorphism between $H_p(K)$ and $H_p(\ha{W})$ is given directly with the maps $\rf$ and $\tirf$.

\ELIMINE{
In the classical presentation of this fact: \\
1) First an isomorphism is given by a map between $H_p(K)$ and $H_p(\Phi)$, where $H_p(\Phi)$
corresponds to the homology of a \emph{flow complex}, see \cite{For98a}, section~7, \\
2) Then an isomorphism between $H_p(\Phi)$ and $H_p(\ha{W})$ is given; the boundary operator is 
obtained by computing the number of gradient paths, see \cite{For98a}, section~8.  
}
 

\section{The extension complex and the gradient flow}
\label{sec:gflow}

In discrete Morse theory, gradient flows and flow complexes are basic ingredients for setting the fundamental property of a Morse complex, that is, the equality of homology between a complex and its Morse complex.
 In this section, we first give the definition of chain complexes based on the images of $\ti{\rf}$ and $\ti{\corf}$, 
 which was mentioned in the last section. 
 Then,  
 we outline a deep link between reference and extension maps  of a Morse sequence, and gradient flows (Theorem \ref{th:gradi4}). 
 We conclude by showing the equivalence
 between the classical definition of a flow complex and the definition of a chain complex based on extension maps. \\
 
 

 Let $\ms$ be a Morse sequence. We write: \\
\hspace*{\fill}
 $\ov{O} [p] = \{\ti{\rf}_p (c) \; | \; c  \in \ha{W} [p] \}$ and $\un{O} [p] = \{\ti{\corf}_p (c) \; | \; c  \in \ha{W} [p] \}$. 
 \hspace*{\fill} \\
 That is,  $\ov{O} [p]$ is the image of the map $\ti{\rf}_p$, and $\un{O} [p]$ is the image of the map $\ti{\corf}_p$. \\ 
 
 The sets $\ov{O} [p]$ and $\un{O} [p]$ 
are two vector spaces. In the example of Figure \ref{fig:Ext}, $\ov{O} [1]$
is composed of four vectors: we have $\ov{O} [1] = \{0, \ti{\rf}_p (a), \ti{\rf}_p (b), \ti{\rf}_p (a+b) \}$. \\

\ELIMINE{
As indicated in the previous section,  there is a correspondence between an extension map $\ti{\rf}$ and the closure operator $\phi$ introduced in \cite{koz20}.
In this context the image of $\phi$ is used for a decomposition of the chain complex $(K[p], \partial_p)$. \\
$\ov{O} [p]$
The main differences between $\tirf$ and $\phi$ is that the map $\tirf$ is concentrated on the critical simplices and $\tirf$ is defined with the coreference map
 $\corf$. In the following, we will see that these two characteristics of $\tirf$ allow us to highlight the strong relationships
 between the maps $\tirf$, $\rf$, $\plc$, and in a dual way between the maps $\ticorf$, $\corf$, $\dlc$. \\
 }
 
Let $d \in \ov{O} [p]$ and let $c  \in \ha{W} [p]$ such that $d =  \ti{\rf} (c)$. 
We have $\pl_p (d) = \pl_p (\ti{\rf}_p (c))$ and, by Theorem \ref{pro:chainmap2}, $\pl_p (d) = \tirf_{p-1} (\plc_p (c))$. 
Since $\plc_p (c) \in \ha{W} [p-1]$,  we obtain $\pl_p (d) \in  \ov{O} [p-1]$. 
Thus, the boundary operator $\partial_p : K[p] \rightarrow K[p-1]$ can be restricted to
 $\pl_p:$ $\ov{O}[p] \rightarrow  \ov{O}[p-1]$. This leads us to the following definition: \\

\begin{definition} \label{def:flowcomp}
Let $\ms(K)$ be a Morse sequence.  We say that the chain complex $(\ov{O}[p], \pl_p)$ is the \emph{extension complex (of $\ms$)},
and the cochain complex $(\un{O}[p], \dl_p)$ is the \emph{coextension complex (of $\ms$)}. 
 We write $H_p(\ov{O})$ and $H^p(\un{O})$ for the corresponding homology vector spaces. \\
 \end{definition}

The map $\tirf_p$ can be considered as a map   $\tirf_p:$ $\ha{W}[p] \rightarrow  \ov{O} [p]$;
  with an abuse of notation, we use the same name 
for this map.
Also, we may consider the restriction of the map $\rf_p$ to $\ov{O}  [p]$, we write
$\rf_p:$ $\ov{O} [p] \rightarrow  \ha{W}[p]$; again we use the same name 
for this map. 
We proceed in the same manner with the maps $\ticorf_p$ and $\corf_p$.

By Proposition \ref{pro:ext8}, we have
$\rf_{p} \circ \widetilde{\rf}_p = Id_{\ha{W}[p]}$.  Now, let $d \in \ov{O} [p]$, we have $d =  \ti{\rf}_p (c)$, with $c  \in \ha{W} [p]$. 
Again by Proposition \ref{pro:ext8}, we have $\rf_{p} (d) = c$. Thus, $\ti{\rf}_p (\rf_{p} (d)) = d$.
Hence, $ \widetilde{\rf}_p \circ \rf_{p}  = Id_{\ov{O} [p] }$. Since $\rf_{p}$ and $\widetilde{\rf}_p$ are linear, we obtain: \\

   \begin{proposition} \label{th:gradi5}
 Let $\ms$ be a Morse sequence.  
 \begin{enumerate}
\item The map $\tirf_p:$ $\ha{W}[p] \rightarrow  \ov{O} [p]$ is a vector space isomorphism. Its inverse is given by the map 
 $\rf_p:$ $\ov{O} [p] \rightarrow  \ha{W}[p]$. 
 \item The map $\ticorf_p:$ $\ha{W}[p] \rightarrow  \un{O} [p]$ is a vector space isomorphism. Its inverse is given by the map 
 $\corf_p:$ $\un{O} [p] \rightarrow  \ha{W}[p]$. \\
 \end{enumerate} 
 \end{proposition}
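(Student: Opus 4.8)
The plan is to show directly that $\tirf_p$ and $\rf_p$ are two mutually inverse linear maps between $\ha{W}[p]$ and $\ov{O}[p]$; once this is done the isomorphism follows immediately, and the second statement is obtained by the usual duality, exchanging the roles of $\partial$ and $\delta$ and of $(\rf,\tirf)$ and $(\corf,\ticorf)$.

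First I would record that both $\tirf_p$ and $\rf_p$ are linear by construction, and that one of the two required identities is already available: Proposition \ref{pro:ext8} gives $\rf_p \circ \tirf_p = Id_{\ha{W}[p]}$. This settles the composition on the source side and, in particular, already shows that $\tirf_p$ is injective.

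The only point that genuinely uses the definition of $\ov{O}[p]$ is the reverse composition $\tirf_p \circ \rf_p = Id_{\ov{O}[p]}$. Here I would invoke that $\ov{O}[p]$ is, by definition, the image of $\tirf_p$: every $d \in \ov{O}[p]$ can be written $d = \tirf_p(c)$ with $c \in \ha{W}[p]$. Applying $\rf_p$ and using Proposition \ref{pro:ext8} yields $\rf_p(d) = \rf_p(\tirf_p(c)) = c$, whence $\tirf_p(\rf_p(d)) = \tirf_p(c) = d$. Thus $\tirf_p \circ \rf_p$ is the identity on $\ov{O}[p]$, and together with the previous identity this makes $\tirf_p : \ha{W}[p] \to \ov{O}[p]$ a linear bijection with inverse $\rf_p$, which is exactly statement (1).

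The genuinely delicate observation — and the reason the restriction of the codomain to $\ov{O}[p]$ is essential — is that on the full space $K[p]$ the composite $\tirf_p \circ \rf_p$ is merely a projection onto $\ov{O}[p]$ rather than the identity; the argument works precisely because we have cut down the target of $\tirf_p$ to its image. Beyond this point there is no real obstacle: everything reduces to Proposition \ref{pro:ext8} together with the defining property of $\ov{O}[p]$ as the image of $\tirf_p$. Statement (2) then follows verbatim, with $\corf_p$, $\ticorf_p$ and $\un{O}[p]$ replacing $\rf_p$, $\tirf_p$ and $\ov{O}[p]$, using the companion identity $\corf_p \circ \ticorf_p = Id_{\ha{W}[p]}$ of Proposition \ref{pro:ext8} and the fact that $\un{O}[p]$ is the image of $\ticorf_p$.
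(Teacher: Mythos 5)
Your proof is correct and follows essentially the same route as the paper: both establish $\tirf_p \circ \rf_p = Id_{\ov{O}[p]}$ by writing an arbitrary $d \in \ov{O}[p]$ as $\tirf_p(c)$ and applying the identity $\rf_p \circ \tirf_p = Id_{\ha{W}[p]}$ from Proposition~\ref{pro:ext8}, then conclude by linearity and handle statement (2) by the standard duality. Your added remark that the composite is only a projection on all of $K[p]$ is a fair observation, but the core argument is identical to the paper's.
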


By Theorem \ref{pro:label3}, the map $\rf$ is a chain map from the chain complex $(\ov{O}[p], \pl_p)$
to  $( \ha{W}[p], \plc_p )$, and 
by Theorem \ref{pro:chainmap2} the map $\tirf$ is a chain map from the chain complex $(\ha{W}[p], \plc_p)$ to
$(\ov{O}[p], \pl_p)$. As in the previous section, we obtain from Propositions \ref {pro:ext8} and \ref{pro:ext7}
two isomorphisms $\rf^H_p$ and $\tirf^H_p$ between $H_p(\ha{W})$ and $H_p(\ov{O})$: \\

  \begin{theorem} \label{th:gradi6}
 Let $\ms$ be a Morse sequence. 
  \begin{enumerate}
\item The vector spaces $H_p(\ov{O})$  and $H_p(\ha{W})$ are isomorphic. 
 \item The vector spaces $H^p(\un{O})$ and $H^p(\ha{W})$ are isomorphic. \\
 \end{enumerate} 
 \end{theorem}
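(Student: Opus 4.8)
The plan is to recognise $\tirf_p$ and $\rf_p$ as a pair of mutually inverse chain maps between $(\ha{W}[p], \plc_p)$ and $(\ov{O}[p], \pl_p)$, and then to pass to homology by functoriality. All the needed ingredients are already available: by Theorem \ref{pro:label3} the map $\rf$ is a chain map from $(\ov{O}[p], \pl_p)$ to $(\ha{W}[p], \plc_p)$; by Theorem \ref{pro:chainmap2} the map $\tirf$ is a chain map in the opposite direction; and by Proposition \ref{th:gradi5} these two maps are, at the chain level, mutually inverse vector-space isomorphisms, that is $\rf_p \circ \tirf_p = Id_{\ha{W}[p]}$ and $\tirf_p \circ \rf_p = Id_{\ov{O}[p]}$.

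First I would note that, being chain maps, $\rf$ and $\tirf$ carry cycles to cycles and boundaries to boundaries, and hence descend to well-defined linear maps on homology, namely $\rf^H_p : H_p(\ov{O}) \to H_p(\ha{W})$ and $\tirf^H_p : H_p(\ha{W}) \to H_p(\ov{O})$, acting on classes by $[z] \mapsto [\rf_p(z)]$ and $[z] \mapsto [\tirf_p(z)]$. This is the same construction as the one performed for $H_p(K)$ and $H_p(\ha{W})$ in Section \ref{sec:ext}.

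Next I would evaluate the two composites on representatives. For $[z] \in H_p(\ha{W})$, Proposition \ref{th:gradi5} gives $\rf^H_p(\tirf^H_p([z])) = [\rf_p(\tirf_p(z))] = [z]$; for $[z] \in H_p(\ov{O})$ it gives $\tirf^H_p(\rf^H_p([z])) = [\tirf_p(\rf_p(z))] = [z]$. Thus $\rf^H_p$ and $\tirf^H_p$ are mutually inverse, so each is an isomorphism, which proves statement~1. Statement~2 then follows by repeating the argument verbatim with the cochain maps $\corf$ and $\ticorf$ in place of $\rf$ and $\tirf$, invoking the coboundary halves of Theorems \ref{pro:label3} and \ref{pro:chainmap2} together with the second item of Proposition \ref{th:gradi5}.

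I do not anticipate a genuine difficulty. The only point requiring a moment's care is that the differential on $\ov{O}[p]$ is literally the restriction of the ambient boundary operator $\pl_p$ — this is precisely the observation recorded just before Definition \ref{def:flowcomp}, and it is what makes $\rf$ and $\tirf$ chain maps for the very differentials appearing in the extension complex. It is worth remarking that, in contrast with Theorem \ref{pro:chainmap3} of the previous section, here $\tirf_p \circ \rf_p$ already equals the identity at the chain level (Proposition \ref{th:gradi5}), so no homological perturbation argument such as Proposition \ref{pro:ext7} is actually needed: the isomorphism on homology is induced by a bona fide chain isomorphism.
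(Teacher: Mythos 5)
Your proof is correct, and its overall skeleton coincides with the paper's: both use Theorem~\ref{pro:label3} and Theorem~\ref{pro:chainmap2} to view $\rf$ and $\tirf$ as chain maps between $(\ov{O}[p],\pl_p)$ and $(\ha{W}[p],\plc_p)$, and both conclude by showing the induced maps $\rf^H_p$ and $\tirf^H_p$ are mutually inverse. Where you diverge is in how the inverse property is supplied. The paper proceeds ``as in the previous section,'' citing Propositions~\ref{pro:ext8} and~\ref{pro:ext7}, i.e.\ it reuses the homology-level statement $\tirf_p(\rf_p(z)) \sim z$; you instead invoke Proposition~\ref{th:gradi5}, which gives the chain-level identities $\rf_p \circ \tirf_p = Id_{\ha{W}[p]}$ and $\tirf_p \circ \rf_p = Id_{\ov{O}[p]}$, so that the isomorphism on homology is induced by a genuine chain isomorphism. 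Your variant is, if anything, slightly tighter: in Proposition~\ref{pro:ext7} the relation $\sim$ means equality modulo $B_p(K)$, and to run that argument inside the extension complex one should in principle verify that the bounding chain can be taken in $\ov{O}[p+1]$ rather than merely in $K[p+1]$ --- a point your route never raises, since for $z \in Z_p(\ov{O})$ the composite $\tirf_p(\rf_p(z))$ is literally equal to $z$. So both arguments prove the theorem with the same ingredients, but yours needs one fewer input (Proposition~\ref{pro:ext7} becomes superfluous) and replaces a homology-level perturbation step by an exact chain-level cancellation, which is exactly the observation you record in your closing remark.
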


\begin{figure*}[tb]
    \centering
    \begin{subfigure}[t]{0.28\textwidth}
        \centering
        \includegraphics[height=.9\textwidth]{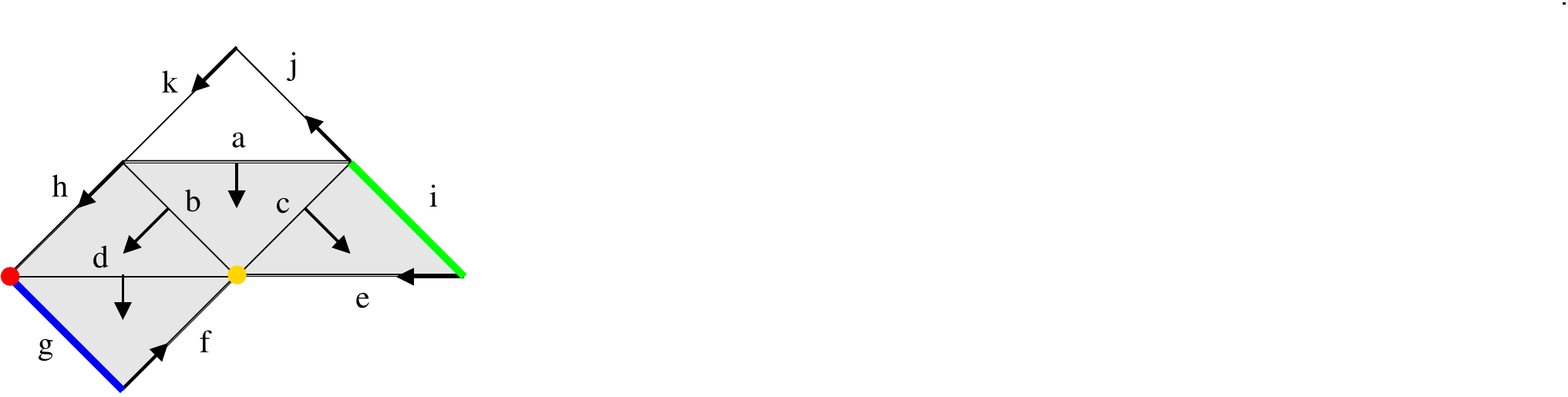}
        \caption{$\ha{W}(K)$  and $\dd{W}(K)$}
    \end{subfigure}%
    ~ \hspace*{0.4cm}
    \begin{subfigure}[t]{0.28\textwidth}
        \centering
        \includegraphics[height=.9\textwidth]{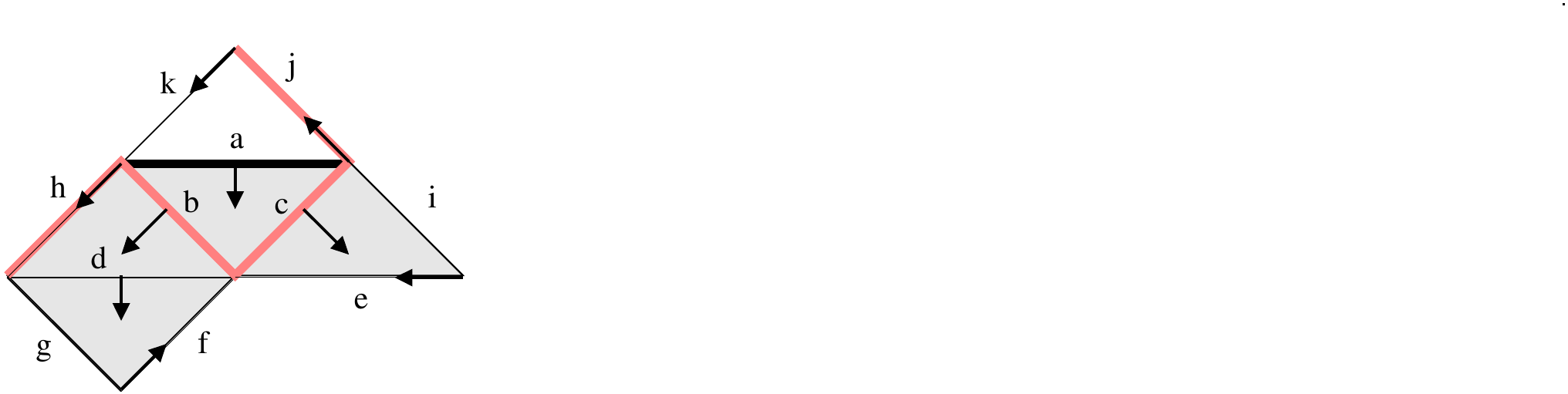}
        \caption{$\Phi(a)$}
    \end{subfigure}%
    ~ \hspace*{0.4cm}
    \begin{subfigure}[t]{0.28\textwidth}
        \centering
        \includegraphics[height=.9\textwidth]{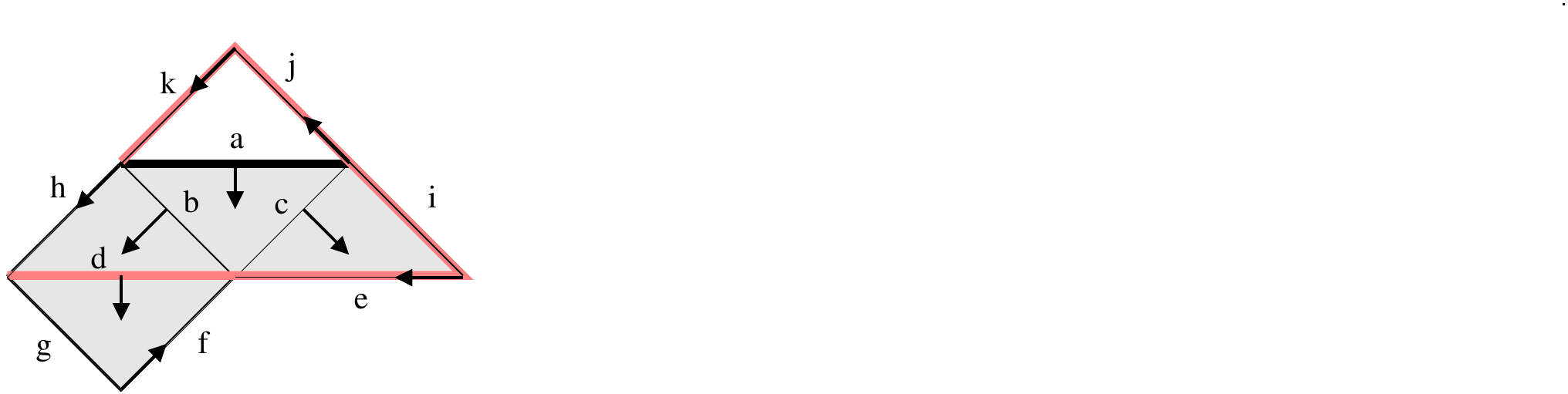}
        \caption{$\Phi^2(a)$}
    \end{subfigure}
    
    \begin{subfigure}[t]{0.28\textwidth}
        \centering
        \includegraphics[height=.9\textwidth]{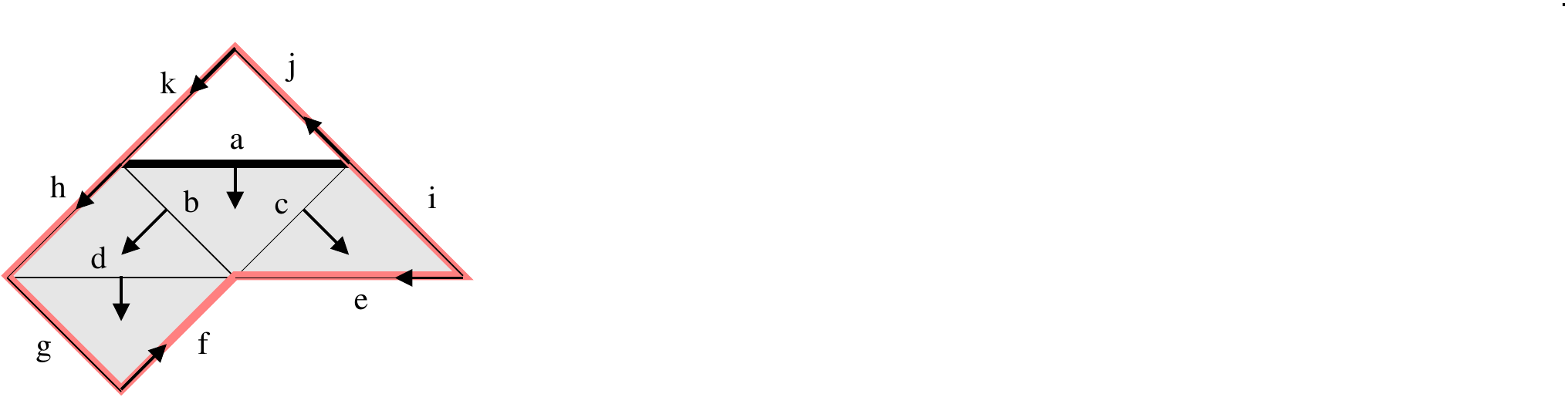}
        \caption{$\Phi^3(a)= \ov{\Phi} (a)$}
    \end{subfigure}%
    ~ \hspace*{0.4cm}
    \begin{subfigure}[t]{0.28\textwidth}
        \centering
        \includegraphics[height=.9\textwidth]{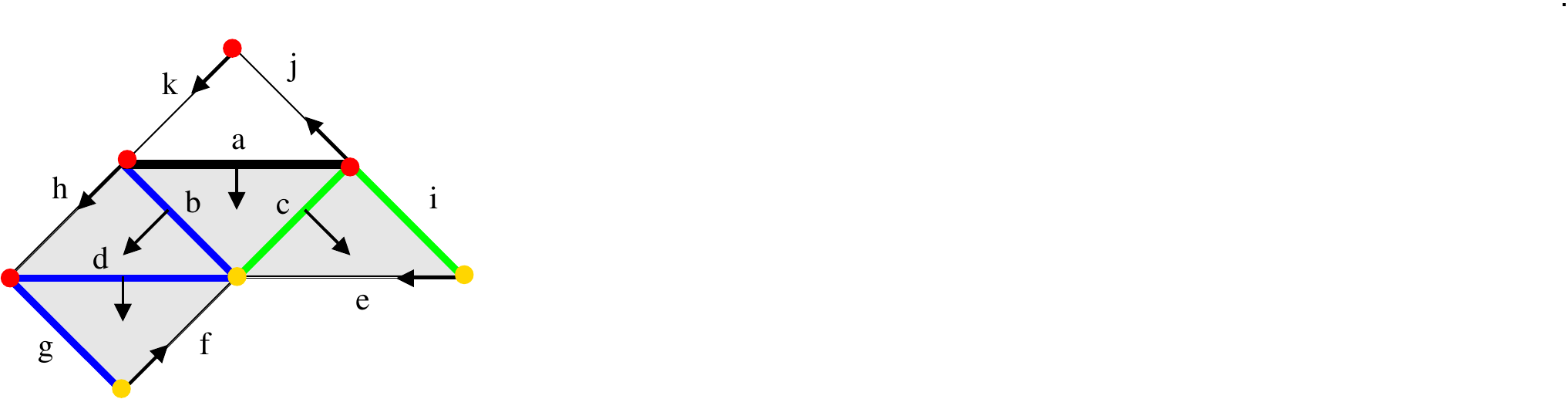}
        \caption{The map $\rf$}
    \end{subfigure}%
    ~ \hspace*{0.4cm}
    \begin{subfigure}[t]{0.28\textwidth}
        \centering
        \includegraphics[height=.9\textwidth]{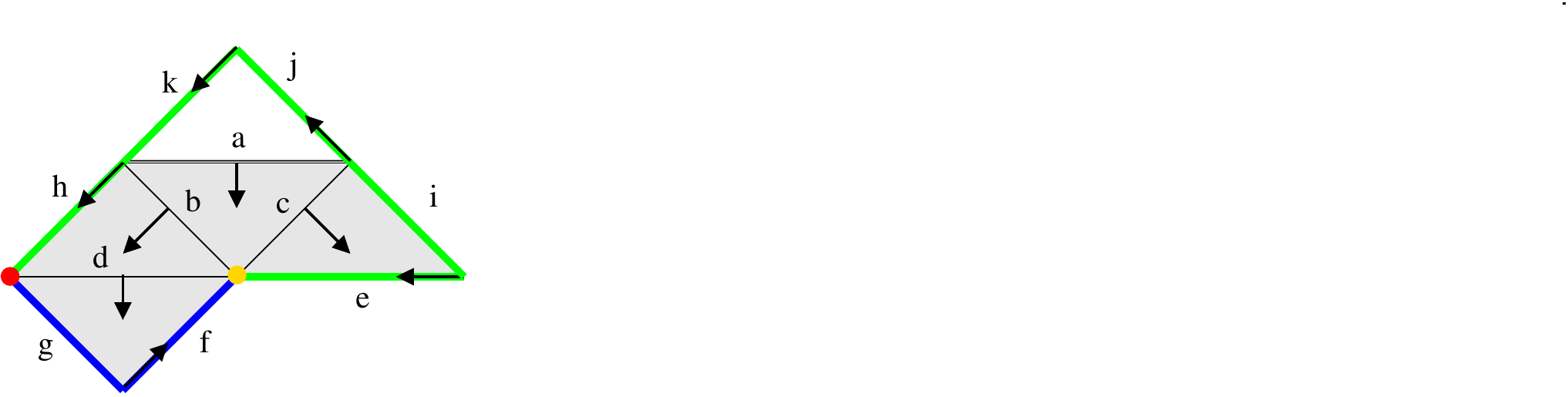}
        \caption{The map $\tirf$}
    \end{subfigure}%
 \caption{(a) The critical faces and the regular pairs of a complex $K$. (b) The first flow starting at the face $a$.
(c) The second flow.  (d) Stabilization of the flow. (e) The reference map $\rf$  of $\ms$: we have $\rf(a) = \rf(b) + \rf(c) = g+i$.
(f) The coreference map $\tirf$ of $\ms$: we have $\tirf \circ \rf (a) = \tirf(g+i) = \ov{\Phi} (a)$. 
  See text for details.
 }
 \label{fig:Flow}
\end{figure*}

Now, we recall the definition of a gradient flow, see \cite{forman2002discretecohomology}; we just adapt this definition in the context of Morse 
 sequences. See  also \cite{molina12,MOLINA12a} for an interpretation in
terms of chain homotopy equivalences. \\

Let $\ms(K)$ be a Morse sequence.  \\
Let $V_p$, $V_p^*$ be the linear maps $V_p:$ $K[p] \rightarrow  K[p+1]$, $V_p^*:$ $K[p] \rightarrow  K[p-1]$ such that: 
\begin{itemize}
\item $V(\ka) = V^*(\ka) = 0$ if $\ka$ is critical for $\ms$, 
\item $V(\sig) = \tau$, $V(\tau) = 0$, $V^*(\sig) = 0$, $V^*(\tau) = \sig$ if $(\sig, \tau)$ is a regular pair for $\ms$. 
\end{itemize}
We write $\Phi_p$, $\Phi^*_p$ for the linear maps
$\Phi_p,\Phi^*_p:$ $K[p] \rightarrow  K[p]$ 
such that: 
\begin{itemize}
\item For each $\nu \in K^{(p)}$, $\Phi_p (\nu) = \nu + \pl_{p+1}(V_p(\nu)) + V_{p-1}( \pl_p(\nu))$, 
\item For each $\nu \in K^{(p)}$, $\Phi^*_p (\nu) = \nu + \dl_{p-1}(V_p^*(\nu)) + V_{p+1}^*( \dl_p(\nu))$. 
\end{itemize}
The maps $\Phi$ and $\Phi^*$ are, respectively, the \emph{(gradient) flow and coflow}  of~$\ms$. \\
If $c \in K[p]$, then  there are integers $i,j$
such that $\Phi^{i+1} (c) = \Phi^{i} (c)$ and $(\Phi^*)^{j+1} (c) = (\Phi^*)^{j} (c)$, we write 
$\ov{\Phi}_p (c) = \Phi^{i} (c)$ and $\un{\Phi}_p (c) = (\Phi^*)^{j} (c)$.
In this way we obtain two linear maps
$\ov{\Phi}_p,\un{\Phi}_p:$ $K[p] \rightarrow  K[p]$. 

Let us consider the example of the complex $K$ shown in Fig. \ref{fig:Flow} (a), where the sets $\ha{W}(K)$ and $\dd{W}(K)$ are highlighted.
 We have $\Phi(a) = b+c+h+j$, $\Phi(b) = d$, $\Phi(c) = i+e+j$, $\Phi(h) = 0$, and $\Phi(j) = k$. Thus $\Phi^2(a) = d +i+e+j+k$,
 see  Fig. \ref{fig:Flow} (b) and (c). 
 Now, $\Phi(d) = f+g$, $\Phi(i) = i+e+j$, $\Phi(e) = 0$, and $\Phi(k) = h$. Thus $\Phi^3(a) = f+g+i+e+j+k+h$, see  Fig. \ref{fig:Flow} (d).
 Since $\Phi(f) = 0$ and $\Phi(g) = g+f$, we see that $\Phi^4(a) = \Phi^3(a)$. Therefore $\ov{\Phi} (a) = \Phi^3(a)$.

In this example, we observe a propagation effect that arises after a critical face. Specifically, we have $j \not\in \Phi(j)$, $k \not\in \Phi(k)$, and $h \not\in \Phi(h)$; 
nevertheless, since the critical face $i$ is in $\ov{\Phi} (a)$, we have $j,k,h \in \ov{\Phi} (a)$. Loosely speaking, a flow that starts at a face can be decomposed into a flow that reaches certain critical faces 
and a flow that stabilizes after these faces. This observation may help in understanding the following definition and properties. 

Let $\pi = \langle \pi_0, \ldots \pi_k \rangle$ be a sequence where each $\pi_i$ is either 
a gradient or a cogradient path in $\ms(K)$. 
We say that $\pi$ is a \emph{composite (gradient) path (from $\nu$ to $\mu$)} if: 
  \begin{enumerate}
\item The path $\pi_0$ begins at $\nu$ and the path $\pi_k$ ends at $\mu$, 
\item For each $i \in [1,k]$,  
the path  $\pi_i$ begins where $\pi_{i-1}$ ends. 
\end{enumerate}
We say that $\pi$ is \emph{nontrivial} if $\pi$ contains more than one simplex. 
  
For example, if $\pi_0 = \langle \sig_0,\tau_0,\nu \rangle$  is a gradient path and if $\pi_1 = \langle \nu, \sig_1,\tau_1 \rangle$ 
is a cogradient path, then the sequence $\pi = \langle \pi_0, \pi_1 \rangle = \langle \sig_0,\tau_0, \nu, \sig_1,\tau_1 \rangle$ is a composite path
from $\sig_0$ to $\tau_1$. Note that we have $dim(\sig_0) = dim(\tau_1)$. 

Let $\nu, \mu \in K^{(p)}$. We let $\Omega_\nu^\mu$ denote the set of all composite paths from $\nu$ to $\mu$ which include at least one 
critical simplex. Then, we have the following characterization of the maps $\ov{\Phi}$ and $\un{\Phi}$. \\

  \begin{theorem}[from  \cite{forman2002discretecohomology}] \label{th:gradi2}
 Let $\ms$ be a Morse sequence on $K$ and let $\nu \in K$. 
  \begin{enumerate}
\item We have $\mu \in \ov{\Phi} (\nu)$ if and only if $Card(\Omega_\nu^\mu)$ is odd. 
\item We have $\mu \in \un{\Phi} (\nu)$ if and only if $Card(\Omega^\nu_\mu)$ is odd. \\
 \end{enumerate}
 \end{theorem}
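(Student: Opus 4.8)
The plan is to read the flow $\Phi_p$ as an adjacency (transition) operator on the $\mathbb{Z}_2$-vector space $K[p]$, so that the entry of $\Phi_p^{\,n}$ in row $\mu$, column $\nu$ counts, modulo $2$, the walks of length $n$ between the $p$-simplices $\nu$ and $\mu$; the stabilized map $\ov{\Phi}_p$ then counts the ``arbitrarily long'' walks, and the whole point is to match these with the composite paths of $\Omega_\nu^\mu$. I will treat only the first statement, the second following by the duality $\pl \leftrightarrow \dl$, gradient $\leftrightarrow$ cogradient, $\ov{W} \leftrightarrow \un{W}$ announced after Theorem \ref{pro:label3}.

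First I would record the one-step form of the flow. A direct case analysis of $\Phi(\nu) = \nu + \pl(V(\nu)) + V(\pl(\nu))$, according to whether $\nu$ is critical, lower regular, or upper regular, shows that for every $\nu \in K^{(p)}$,
\[
\Phi(\nu) = [\nu \in \ha{W}]\,\nu \;+\; G(\nu) \;+\; C(\nu),
\]
where $G(\nu)$ is the sum of the single gradient steps issued from $\nu$ (the facets of the pair of $\nu$ other than $\nu$ when $\nu$ is lower regular, and $0$ otherwise) and $C(\nu) = \sum\{\,V(\sig') : \sig' \in \pl(\nu),\ (\sig',V(\sig'))\text{ regular},\ V(\sig')\neq\nu\,\}$ is the sum of the single cogradient steps issued from $\nu$. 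Hence the coefficient of $\nu$ in $\Phi(\nu)$ is $1$ exactly when $\nu$ is critical (a ``self-loop''), and $(\Phi^n)_{\mu\nu}\bmod 2$ is the number of length-$n$ walks from $\nu$ to $\mu$ built from gradient steps, cogradient steps, and self-loops at critical faces. I would also note that every genuine (non self-loop) step strictly decreases the birth index of its simplex in the filtration $\ms = \langle K_0,\dots,K_k\rangle$: a gradient step lands on a proper face of the pair $\tau$ of $\nu$, and a cogradient step lands on the upper member of a pair whose lower member is a proper face of $\nu$. Thus every composite path has length at most $k$, so for $n>k$ a length-$n$ walk must use a self-loop and therefore visit a critical face.

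The crux is the following structural lemma: \emph{a composite path visits at most one critical face}. Indeed, the source of a gradient step is lower regular and the target of a cogradient step is upper regular; hence a critical face can be left only by a cogradient step and entered only by a gradient step, while from an upper regular face only cogradient steps are possible. So once a walk meets a critical face $\ka\in\ha{W}$, every later vertex is upper regular and no further critical face can occur, forcing the shape $\nu \xrightarrow{\text{grad}} \ka \xrightarrow{\text{cograd}} \mu$ with $\ka$ the unique critical face. Given such a composite path of length $r$, the only way to extend it to a walk of length $n\geq r$ is to repeat the self-loop at $\ka$ exactly $n-r$ times, and $\ka$ is the unique available slot; conversely, deleting the self-loops of any length-$n$ walk (for $n>k$) returns a composite path in $\Omega_\nu^\mu$. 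This gives a bijection between the length-$n$ walks from $\nu$ to $\mu$ and the paths of $\Omega_\nu^\mu$, valid for every $n>k$.

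Putting these together, $(\Phi^n)_{\mu\nu} \equiv Card(\Omega_\nu^\mu)\pmod 2$ for all $n>k$; since $\ov{\Phi}(\nu)=\Phi^n(\nu)$ for $n$ large by stabilization, this yields $\mu\in\ov{\Phi}(\nu)$ if and only if $Card(\Omega_\nu^\mu)$ is odd, which is the first statement. I expect the main obstacle to be the structural lemma together with the bookkeeping of the bijection: one must argue cleanly that all self-loops concentrate at a \emph{single} critical face, so that the padding, and hence the modulo $2$ count, is independent of $n$. A naive count of self-loop insertions spread over several critical faces would introduce an $n$-dependent binomial factor and destroy stabilization, so it is precisely the ``at most one critical face'' lemma that makes the count well defined. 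The verification of the one-step form of $\Phi$ and of the birth-index monotonicity is routine by comparison.
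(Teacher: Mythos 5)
Your proof is correct, but there is nothing internal to compare it against: Theorem~\ref{th:gradi2} is precisely the one result this paper imports without proof --- the introduction flags it as the only external ingredient, quoted from \cite{forman2002discretecohomology}. So your proposal does not diverge from the paper's argument; it supplies the missing one, and it does so essentially along Forman's original lines. Your three ingredients are all sound and are the right ones: (i) the one-step case analysis of $\Phi(\nu)=\nu+\pl(V(\nu))+V(\pl(\nu))$ showing that the diagonal entry is $1$ exactly on $\ha{W}$, that lower regular faces emit gradient and cogradient steps, and that upper regular faces emit only cogradient steps; (ii) the strict decrease of the birth index under every genuine step, which is verbatim the computation the paper itself uses in part (i) of Theorem~\ref{th:DVF} to prove acyclicity, and which bounds the number of genuine steps by the length of the sequence; (iii) the structural lemma that a walk meets at most one critical face, with all self-loops concentrated there --- this is exactly Proposition~\ref{pro:gradi3} (Forman's Lemma~1.3) recast in walk language, and it is what makes your padding bijection well defined, as you correctly emphasize. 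Your argument even proves slightly more than it claims: the identity $(\Phi^n)_{\mu\nu}\equiv Card(\Omega_\nu^\mu) \pmod 2$ for all $n>k$ establishes the stabilization of the iterates $\Phi^n$, which the paper takes for granted when defining $\ov{\Phi}$.

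One point deserves an explicit sentence rather than being left implicit. A gradient step and a cogradient step can share the same source and target: with the pairs $(ab,abc)$ and $(b,bc)$ in a triangle, the face $bc$ is reached from $ab$ both through $abc$ and through $b$, and the two contributions cancel in the chain $\Phi(ab)$ over $\mathbb{Z}_2$. Consequently, ``the entry of $\Phi^n$ counts walks'' is only true if walks are counted in the multigraph sense, with each step labelled by its type (gradient, cogradient, or self-loop): the mod-$2$ matrix power equals the mod-$2$ multigraph walk count, not a count of unlabelled vertex sequences. This is harmless for your bijection, since a typed step determines its intermediate simplex uniquely via $V$ or $V^{*}$, so typed walks correspond exactly to the composite paths of $\Omega_\nu^\mu$, which record those intermediates --- but it is exactly the spot where a careless reading of your transfer-matrix interpretation would go wrong, so it should be stated. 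The appeal to duality for the second statement is legitimate and can be made concrete by observing that $\Phi^{*}$ is the transpose of $\Phi$, so that $((\Phi^{*})^n)_{\mu\nu}=(\Phi^n)_{\nu\mu}$, which matches the reversed index set $\Omega^\nu_\mu$.
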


\noindent
From Lemma 1.3 of \cite{forman2002discretecohomology}, we also have three basic properties of composite paths
from which we easily derive Proposition \ref{pro:gradi3}: 
\begin{enumerate}
\item[-] If $\ka,\ka' \in \ha{W}$ and $\ka \not= \ka'$, then there are no composite paths from $\ka$ to $\ka'$. 
\item[-] If $\ka \in \ha{W}$, then any composite path ending at $\ka$ is a gradient path. 
\item[-] If $\ka \in \ha{W}$, then any composite path beginning at $\ka$ is a cogradient path. \\
\end{enumerate}


\begin{proposition} \label{pro:gradi3}
Let $\ms$ be a Morse sequence on $K$. Let $\nu, \mu \in K^{(p)}$ and let
$\pi \in \Omega_\nu^\mu$. Then there exists $\ka \in \ha{W}$, 
a gradient path $\pi'$ from $\nu$ to $\ka$, and a cogradient path $\pi''$ from $\ka$ to
$\nu$, such that $\pi = \langle \pi', \pi'' \rangle$. \\
\end{proposition}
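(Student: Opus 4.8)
The plan is to obtain the decomposition by locating a critical simplex inside $\pi$ and cutting $\pi$ at that point, relying entirely on the three structural properties of composite paths recalled just above (from Lemma 1.3 of \cite{forman2002discretecohomology}). First I would record that, by the very definition of $\Omega_\nu^\mu$, the composite path $\pi$ passes through at least one critical simplex, so there is at least one candidate cut point $\ka \in \ha{W}$.

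Next I would pin down \emph{where} such a critical simplex can sit inside $\pi$. From the definition of a gradient path $\langle \sigma_0, \tau_0, \ldots, \sigma_{k-1}, \tau_{k-1}, \sig_k \rangle$, each $\sigma_i$ with $i<k$ is lower regular and each $\tau_i$ is upper regular, so the only simplex that may be critical is the terminal one $\sig_k$; dually, in a cogradient path the only simplex that may be critical is the initial one $\tau_0$. Consequently a critical simplex of $\pi$ can occur only at a junction between two consecutive constituent paths $\pi_{i-1}$ and $\pi_i$ (or, degenerately, at the global endpoint $\nu$ or $\mu$). Fixing one such $\ka$ therefore splits $\pi$ into two genuine composite paths: an initial part $\pi'$ starting at $\nu$ and ending at $\ka$, and a terminal part $\pi''$ beginning at $\ka$ and ending at $\mu$, with $\pi = \langle \pi', \pi'' \rangle$.

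It then remains to upgrade these two halves from composite paths to single (co)gradient paths. Since $\pi'$ is a composite path \emph{ending} at the critical simplex $\ka$, the second listed property forces $\pi'$ to be a gradient path from $\nu$ to $\ka$; since $\pi''$ is a composite path \emph{beginning} at $\ka$, the third listed property forces $\pi''$ to be a cogradient path from $\ka$ to $\mu$. This is exactly the asserted decomposition, with the trivial-path cases $\nu = \ka$ or $\ka = \mu$ handled uniformly since $\langle \ka \rangle$ is simultaneously a trivial gradient and a trivial cogradient path. The first listed property is not needed for existence (choosing any occurrence of a critical simplex suffices), but it does guarantee that $\pi$ cannot meet two distinct critical simplices, so the cut point is in fact essentially forced.

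The one step requiring genuine care is the middle one: verifying that cutting at $\ka$ yields two objects that are again \emph{composite paths} in the precise sense of the definition, rather than sequences that merely resemble truncations. This is precisely what the structural observation about gradient and cogradient paths secures, since it confines $\ka$ to a boundary between constituent paths, so that each half is a concatenation of complete constituents $\pi_i$ and hence itself a composite path to which the second and third properties may be applied.
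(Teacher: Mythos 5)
Your proof is correct and follows exactly the route the paper intends: the paper gives no explicit proof, stating only that the proposition is ``easily derived'' from the three properties quoted from Lemma 1.3 of \cite{forman2002discretecohomology}, and your argument --- cut $\pi$ at a critical simplex (which, as you verify, can only occur at a junction between constituent paths or at an endpoint), then apply the second and third properties to the two halves --- is precisely that derivation, with the useful added care of checking that each half is again a composite path and that the trivial cases are covered.
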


Let $\nu, \mu \in K^{(p)}$.
Suppose $\nu \in \ha{W}$. Then, $\Omega_\nu^\mu$ is necessarily a cogradient path. By Theorem \ref{th:gradi2},
Theorem \ref{pro:grad1}, and  Definition~\ref{def:ext}, we obtain 
$\ov{\Phi}(\nu) = \ti{\rf}(\nu)$. In fact, we have the following more general result: \\

  \begin{theorem} \label{th:gradi4}
If $\ms$ is a Morse sequence, then we have: \\
\hspace*{\fill} 
$\ov{\Phi} = \ti{\rf} \circ \rf$ and $\un{\Phi} = \ti{\corf} \circ \corf$.
\hspace*{\fill}  
 \end{theorem}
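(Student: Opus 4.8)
The plan is to establish the first identity $\ov{\Phi} = \ti{\rf} \circ \rf$ pointwise on the basis $K^{(p)}$ of $K[p]$; since every map in sight is linear, it suffices to show $\ov{\Phi}(\nu) = \ti{\rf}(\rf(\nu))$ for each $\nu \in K^{(p)}$, and for this I would decide, for an arbitrary $\mu \in K^{(p)}$, whether $\mu$ lies in each side. The idea is to express both memberships as the parity of a count of paths and to check that the two counts coincide modulo $2$. For the left-hand side nothing new is needed: Theorem \ref{th:gradi2} already gives that $\mu \in \ov{\Phi}(\nu)$ if and only if $Card(\Omega_\nu^\mu)$ is odd, where $\Omega_\nu^\mu$ is the set of composite paths from $\nu$ to $\mu$ containing at least one critical simplex.

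First I would analyse $\Omega_\nu^\mu$ by factoring through critical faces. By Proposition \ref{pro:gradi3}, each $\pi \in \Omega_\nu^\mu$ splits as $\pi = \langle \pi', \pi'' \rangle$ with $\pi'$ a gradient path from $\nu$ to some $\ka \in \ha{W}$ and $\pi''$ a cogradient path from $\ka$ to $\mu$. The three basic properties of composite paths recalled before that proposition force this pivot $\ka$ to be unique: a second interior critical face would yield a composite path between two distinct faces of $\ha{W}$, which is impossible, while the intermediate faces of gradient and cogradient paths are regular. Hence $\pi \mapsto (\pi', \pi'')$ is a bijection from $\Omega_\nu^\mu$ onto $\bigsqcup_{\ka \in \ha{W}} \{\text{gradient paths } \nu \to \ka\} \times \{\text{cogradient paths } \ka \to \mu\}$. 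Writing $N(\nu,\ka)$ and $M(\ka,\mu)$ for the cardinalities of these two path sets, this gives $Card(\Omega_\nu^\mu) = \sum_{\ka \in \ha{W}} N(\nu,\ka)\, M(\ka,\mu)$. Reducing modulo $2$, a product $N(\nu,\ka)\,M(\ka,\mu)$ is odd exactly when both factors are odd, so $Card(\Omega_\nu^\mu)$ is odd if and only if the number of $\ka \in \ha{W}$ with $N(\nu,\ka)$ and $M(\ka,\mu)$ both odd is itself odd. I expect this bijective bookkeeping, and in particular the uniqueness of the pivot, to be the main (though essentially routine) step.

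It then remains to recognise the right-hand side in the same terms. By Theorem \ref{pro:grad1}, $N(\nu,\ka)$ is odd iff $\ka \in \rf(\nu)$, and $M(\ka,\mu)$ is odd iff $\ka \in \corf(\mu)$. On the other hand, unfolding Definition \ref{def:ext}, $\mu \in \ti{\rf}(\ka)$ iff $\ka \in \corf(\mu)$, so by linearity $\mu \in \ti{\rf}(\rf(\nu))$ iff $Card\{\ka \in \ha{W} : \ka \in \rf(\nu) \text{ and } \ka \in \corf(\mu)\}$ is odd. Comparing with the previous paragraph, both $\mu \in \ov{\Phi}(\nu)$ and $\mu \in \ti{\rf}(\rf(\nu))$ are equivalent to the oddness of $Card\{\ka \in \ha{W} : N(\nu,\ka) \text{ and } M(\ka,\mu) \text{ both odd}\}$, whence $\ov{\Phi}(\nu) = \ti{\rf}(\rf(\nu))$ and the first identity follows. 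The second identity $\un{\Phi} = \ti{\corf} \circ \corf$ is obtained by the same argument with the roles of gradient and cogradient paths (equivalently of $\partial$ and $\delta$) interchanged, as announced after Theorem \ref{pro:label3}.
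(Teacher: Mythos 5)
Your proposal is correct and takes essentially the same route as the paper's proof: both decompose each composite path in $\Omega_\nu^\mu$ through its unique critical pivot $\ka \in \ha{W}$ (via Proposition \ref{pro:gradi3}), reduce to the product identity $Card(\Omega_\nu^\mu) = \sum_{\ka \in \ha{W}} Card(\Omega_\nu^\ka)\, Card(\Omega_\ka^\mu)$, and conclude by parity using Theorem \ref{th:gradi2}, Theorem \ref{pro:grad1} and Definition \ref{def:ext}, handling the second identity by duality. Your explicit justification that the pivot is unique (and that the splitting is a bijection) is a detail the paper leaves implicit, but the underlying argument is identical.
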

 
 \begin{proof}
  Let $\nu, \mu \in K^{(p)}$.
 If $\ka \in \ha{W}$, we write $\Omega_\nu^\mu (\ka)$ for the set composed of all paths in $\Omega_\nu^\mu$ which contain $\ka$. 
 Since each path in $\Omega_\nu^\mu$ contains a single critical simplex, by Theorem \ref{th:gradi2},
 we have: \\
 \hspace*{\fill}
  $\mu \in \ov{\Phi} (\nu)$ if and only if $\sum \{Card(\Omega_\nu^\mu (\ka)) \; | \; \ka \in \ha{W} \}$ is odd.
  \hspace*{\fill} \\
 But $Card(\Omega_\nu^\mu (\ka)) = Card(\Omega_\nu^\ka) \times Card(\Omega_\ka^\mu)$. 
Therefore,  by parity considerations, we have 
$\mu \in \ov{\Phi} (\nu)$ if and only if: \\
\hspace*{\fill}
 $\sum \{Card(\Omega_\ka^\mu (\ka)) \; | \; \ka \in \ha{W}$ and $Card(\Omega_\nu^\ka)$ is odd$ \}$ is odd. 
 \hspace*{\fill} \\
By Proposition \ref{pro:gradi3}, $\Omega_\nu^\ka$ is the set of all gradient paths from $\nu$ to $\ka$. Thus, 
 by Theorem~\ref{pro:grad1}, we have $\rf(\nu) = \{\ka \in \ha{W} \; | \; Card(\Omega_\nu^\ka)$ is odd$ \}$. 
 Therefore: \\
 \hspace*{\fill}
  $\mu \in \ov{\Phi} (\nu)$ if and only if 
 $\sum \{Card(\Omega_\ka^\mu (\ka)) \; | \; \ka \in \rf(\nu) \}$ is odd.
 \hspace*{\fill} \\
 By Proposition \ref{pro:gradi3}, $\Omega^\mu_\ka (\ka)$ is the set of all cogradient paths from $\ka$ to $\mu$. 
 By Definition~\ref{def:ext}, and by Theorem \ref{pro:grad1},  we obtain: \\
 \hspace*{\fill}
 $\mu \in \ov{\Phi} (\nu)$ if and only if $\mu \in \ti{\rf} (\rf (\nu))$.
 \hspace*{\fill} 
 \ELIMINE{
 1) Let $\nu \in K^{(p)}$. 
 Let $\ha{C} = \{\ka \in \ha{W} \; | \; \kappa \in  \ov{\Phi} (\nu) \}$. By Proposition \ref{pro:gradi3},
if $\ka \in \ha{W}$, then $\Omega_\nu^\ka$ is the set composed of all gradient paths from $\nu$ to $\ka$. 
By Theorem \ref{th:gradi2}, we have $\ka \in \ha{C}$ if and only if $Card(\Omega_\nu^\ka)$ is odd.  
Thus, by Theorem \ref{pro:grad1}, we have $\ha{C} = \rf(\nu)$. \\
 2) Now let $\mu \in K^{(p)}$. By Proposition \ref{pro:gradi3},
if $\ka \in \ha{W}$, then $\Omega_\ka^\mu$ is the set composed of all cogradient paths from $\nu$ to $\ka$ . 
If $\ka \in \ha{W}$, we write $\Omega_\nu^\mu (\ka)$ for the set composed of all paths in $\Omega_\nu^\mu$ which contain $\ka$. \\
By Proposition \ref{pro:gradi3}, each \\
Since each path in $\Omega_\nu^\mu$ contains a single critical simplex, by Theorem \ref{th:gradi2},
 we have $\mu \in \Phi^\infty (\nu)$ if and only if $\sum \{Card(\Omega_\nu^\mu (\ka)) \; | \; \ka \in \ha{W} \}$ is odd. \\
We have $Card(\Omega_\nu^\mu (\ka)) = Card(\Omega_\nu^\ka) \times Card(\Omega_\ka^\mu)$. 
By parity considerations, we have
$\mu \in \Phi^\infty (\nu)$ if and only if: \\
 $\sum \{Card(\Omega_\ka^\mu (\ka)) \; | \; \ka \in \ha{W}$ and $Card(\Omega_\nu^\ka)$ is odd $ \}$ is odd. \\
Thus, $\mu \in \Phi^\infty (\nu)$ if and only if $\sum \{Card(\Omega_\ka^\mu (\ka)) \; | \; \ka \in \ha{C} \}$ is odd. \\
Since $\ha{C} = \rf(\nu)$, and by Theorem \ref{pro:grad1}, $\mu \in \Phi^\infty (\nu)$ if and only if $\mu \in \ti{\rf} (\rf (\nu))$. \\

- We have $\mu \in \ti{\rf} (\rf (\nu))$ if and only if $\sum \{Card(\Omega_\ka^\mu) \; | \; \ka \in \ha{C} \}$ is odd. \\
 We have $\mu \in \ti{\rf} (\rf (\nu))$ if and only if $\sum \{Card(\Omega_\ka^\mu) \; | \; \ka \in \ha{C_o} \}$ is odd. \\
 - We have $\mu \in \Phi^\infty (\nu)$ if and only if $Card(\Omega_\nu^\mu)$ is odd. \\

$Card(\Omega_\nu^\mu (\ka)) = Card(\Omega_\nu^\ka) \times Card(\Omega_\ka^\mu)$. \\
By the above result, we have $Card(\Omega_\nu^\mu (\ka))$ is  odd if and only if $Card(\Omega_\ka^\mu)$ is odd. \\
By parity considerations, we have 
$\mu \in \Phi^\infty (\nu)$ if and only if $\sum \{Card(\Omega_\ka^\mu) \; | \; \ka \in \ha{C} \}$ is odd. \\
}
\end{proof}

As an illustration of the previous result, let us once again consider the example given in  Fig. \ref{fig:Flow}. 
The reference and coreference maps $\rf$  and $\tirf$ of $\ms$ are shown in (e) and (f), using the same coloring conventions as in the preceding examples. 
We have $\rf(a) = \rf(b) + \rf(c) = g+i$. Also we can check that $\tirf \circ \rf (a) = \tirf(g+i) = \ov{\Phi} (a)$. 

Thus, by Theorem \ref{th:gradi4}, the flow  $\ov{\Phi}$ may be decomposed into the two maps  $\ti{\rf}$ and $\rf$. Observe that if $\nu \in K^{(p)}$, 
then each step of the propagation of the flow from $\nu$ involves faces of dimensions $p-1$, $p$, and $p+1$. 
On the other hand, the propagation from $\nu$ to obtain $\mu = \rf(\nu)$ concerns faces of dimensions 
$p$ and $p+1$, while the propagation to get $\ti{\rf} (\mu)$ concerns faces of dimensions $p$ and $p-1$.
Therefore, the decomposition of the flow  $\ov{\Phi}$ with $\ti{\rf}$ and $\rf$ may also be interpreted  as a 
separation of dimensions.

With regard to the computational aspects, we note that the flow can be efficiently obtained if the maps $\rf$ and $\ti{\rf}$
have been precomputed. More specifically, the values of $\ov{\Phi}(\nu)$ for  different faces $\nu$ can be obtained without
redundant computations. 

 Classically, the flow complex of a complex $K$ is defined with the vector space 
 $F_p = \{c \in K[p] \; | \; \Phi_p(c) = c \}$, see Section 7 of \cite{For98a} and Section 8.2 of \cite{Sco19}.  
 We have $F_p = \{c \in K[p] \; | \; \ov{\Phi}_p(c) = c \}$. Also, we observe that: 
 
- If $c \in F_p$, then by Theorem \ref{th:gradi4}, $\ti{\rf}_p (\rf_p (c)) = c$. Therefore we have $c = \ti{\rf}_p (d)$, with $d \in \ha{W} [p]$. 
 It follows that $c \in \ov{O} [p]$. 
 
- If $c \in \ov{O} [p]$, then $c = \ti{\rf}_p (d)$ with $d \in \ha{W} [p]$. By Theorem \ref{th:gradi4}, 
 we have $\ov{\Phi}_p(c) = \ti{\rf}_p (\rf_p (c)) = \ti{\rf}_p (\rf_p (\ti{\rf}_p (d)))$. By Proposition \ref{pro:ext8},
 we derive  $\ov{\Phi}_p(c) = \ti{\rf}_p (d) = c$. Thus $c \in F_p$.  \\
 Therefore, as a direct consequence of the previous theorem,  we obtain 
 the equivalence
 between a flow complex and an extension complex. \\
 
  
 \begin{corollary}
 If $\ms$ is a Morse sequence, then we have: \\
 \hspace*{\fill}
  $\ov{O} [p] = \{c \in K[p] \; | \; \ov{\Phi}(c) = c \}$ and $\un{O} [p] = \{c \in K[p] \; | \; \un{\Phi}(c) = c \}$. 
   \hspace*{\fill} \\
 
 \end{corollary}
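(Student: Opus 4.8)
The plan is to prove each of the two set equalities by a pair of inclusions, using only the factorization $\ov{\Phi} = \tirf \circ \rf$ from Theorem~\ref{th:gradi4} together with the identity $\rf_{p} \circ \tirf_p = Id_{\ha{W}[p]}$ from Proposition~\ref{pro:ext8}. Since $\ov{O}[p]$ is by definition the image of the map $\tirf_p$, the task reduces to recognizing that the fixed-point set of the operator $\ov{\Phi}_p$ is exactly this image.

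First I would treat the equality for $\ov{O}[p]$. For the inclusion $\{c \in K[p] \; | \; \ov{\Phi}_p(c) = c \} \subseteq \ov{O}[p]$, I take $c$ with $\ov{\Phi}_p(c) = c$; then Theorem~\ref{th:gradi4} gives $c = \ov{\Phi}_p(c) = \tirf_p(\rf_p(c))$, so $c$ lies in the image of $\tirf_p$, that is $c \in \ov{O}[p]$. For the reverse inclusion, I take $c \in \ov{O}[p]$, write $c = \tirf_p(d)$ with $d \in \ha{W}[p]$, and compute $\ov{\Phi}_p(c) = \tirf_p(\rf_p(\tirf_p(d))) = \tirf_p(d) = c$, where the middle equality is Proposition~\ref{pro:ext8}. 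The companion equality for $\un{O}[p]$ then follows by the dual argument, exchanging $\partial$ with $\delta$ and using $\un{\Phi} = \ticorf \circ \corf$ together with $\corf_p \circ \ticorf_p = Id_{\ha{W}[p]}$.

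Since everything collapses to these two short computations, there is no real obstacle here; the substantive content has already been carried by Theorem~\ref{th:gradi4}, whose proof converts the combinatorics of composite paths into the algebra of $\rf$ and $\tirf$. The one point deserving a word of care is that the corollary is phrased with the stabilized flow $\ov{\Phi}$ rather than $\Phi$ itself, so that the classical flow complex $F_p = \{c \in K[p] \; | \; \Phi_p(c) = c\}$ is recovered through the identity $F_p = \{c \in K[p] \; | \; \ov{\Phi}_p(c) = c\}$ noted just above; the corollary then simply reads off the fixed points of the idempotent operator $\ov{\Phi}_p = \tirf_p \circ \rf_p$.
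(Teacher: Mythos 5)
Your proposal is correct and takes essentially the same route as the paper: the paper likewise proves both inclusions by combining Theorem~\ref{th:gradi4} ($\ov{\Phi} = \tirf \circ \rf$) with Proposition~\ref{pro:ext8} ($\rf_{p} \circ \tirf_p = Id_{\ha{W}[p]}$), computing $\ov{\Phi}_p(\tirf_p(d)) = \tirf_p(\rf_p(\tirf_p(d))) = \tirf_p(d)$ in one direction and reading $c = \tirf_p(\rf_p(c))$ off the fixed-point equation in the other, with the dual equality for $\un{O}[p]$ obtained by the usual exchange of $\partial$ and $\delta$. Your closing remark about $F_p$ matches the paper's framing as well, so there is nothing to add.
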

 
 \ELIMINE{
 ************************* \\
 Let $\ms(K)$ be a Morse sequence.  \\
 $\Phi_p:$ $K[p] \rightarrow  K[p]$  \\
 For each $\nu \in K^{(p)}$, $\Phi_p (\nu) = \nu + \pl_{p+1}(V_p(\nu)) + V_{p-1}( \pl_p(\nu))$. \\
 
 Suppose $\nu \in \ha{W}$. Then $\Phi_p (\nu) = \nu + V_{p-1}( \pl_p(\nu))$. Thus, if $\mu \in \Phi_p (\nu)$, we have either 
 $\mu = \nu$ or $\mu \in  \ov{W}$. Also if $\mu \in \ov{W}$, we have $\Phi_p (\mu) = \mu + V_{p-1}( \pl_p(\mu))$.
 
 Now, let $f:$ $K[p] \rightarrow  K[p]$ be the linear map such that,
 for each $\nu \in K^{(p)}$, $f_p (\nu) = \nu +  V_{p-1}( \pl_p(\nu))$. \\
 
 By the preceding observation, if $\nu \in \ha{W}$, we have $\ov{\Phi}_p (\nu) = \ov{f}_p(\nu)$. \\
 
 If $\nu \in \ha{W}$, then we have $\ov{f}_p(\nu) = \ti{\rf}(\nu)$. \\

 Let $\nu, \mu \in K^{(p)}$.
Suppose $\nu \in \ha{W}$. Then, $\Omega_\nu^\mu$ is necessarily a cogradient path. By Theorem \ref{th:gradi2},
Theorem \ref{pro:grad1}, and  Definition~\ref{def:ext}, we obtain 
$\ov{\Phi}(\nu) = \ti{\rf}(\nu)$.
 
 }

\section{Conclusion}

In the final chapter of his gentle introduction to discrete Morse theory \cite{Sco19}, Nicholas Scoville  wrote,
``As is now apparent, the idea behind discrete Morse theory is extremely
simple: every simplicial complex can be broken down (or, equivalently,
built up) using only two moves: 1) perform an elementary collapse, and
2) remove a facet.'' 
In this paper, we just start from this basic fact for an alternative approach to this theory.  
A Morse sequence is simply made of two operators which provide
a direct introduction to discrete Morse theory with two fundamental aspects:
 homotopy with the expansion operator, and homology
with the filling operator. 

A key aspect of a Morse sequence is that it gives rise to the maps $\rf$ and $\tirf$.
These two maps, along with their dual $\corf$ and $\ticorf$, contain the crucial homological information of the sequence.

The reference map $\rf$ assigns, to each simplex, a set of critical simplices. 
This leads us to a chain complex, the critical complex, which corresponds precisely
to the classical definition of such a complex. The map $\rf$
is a chain map: it allows to carry out the homology of the original complex to the
critical complex.
 
The extension map $\tirf$ assigns, to each critical simplex, a set of simplices.
It is noteworthy  that $\rf$ and $\tirf$ are ``homology inverses".
This fact allows us to recover a fundamental property, that is, the equality of
homology between a complex and its critical complex. Also, extension maps
leads directly to a chain complex which fully characterizes a flow complex.

Arranged Morse sequences constitute a special class of sequences which contains different kinds of skeletons
of the complex. Any Morse sequence may be transformed to an arranged sequence by a simple reordering
which preserves the gradient vector field.
The skeleton sequence which is included in an arranged sequence allows us to derive some crucial properties of the map $\rf$. 
Furthermore, it leads to a result which provides another aspect of the fundamental collapse theorem of discrete Morse theory.

Since reference and extension maps can be computed using simple operators, 
these notions can have a practical impact on calculating different homological characteristics of a complex.
In future work \cite{BN25}, we plan to leverage this property to develop new algorithms for topological data analysis and digital image segmentation
\cite{robins2011theory,Delgado15,de2015morse}. 
Additionally, we will establish connections between Morse sequences and concepts from Mathematical Morphology, such as watersheds \cite{CoustyBNC10,CoustyBCN14,COMIC201643,BertrandBN23}.


\bmhead{Acknowledgements}
The author wishes to express his thanks to Laurent Najman
for his active interest in this paper and 
for stimulating conversations.

\bibliographystyle{splncs04}
\bibliography{biblio}

\appendix

\section{Discrete vector fields}
\label{app:vector}

We first recall the definitions of a discrete vector field and a $p$-gradient path, see Definitions 2.43 and 2.46 of \cite{Sco19}.
See also \cite{7034732,GONZALEZLORENZO20171} where a generalization of gradient vector fields is introduced, 
which allows cycles under a certain algebraic condition. 

Let $K$ be a complex and $V$ be a set of pairs $(\sigma,\tau)$, with $\sigma, \tau \in K$ and
$\sigma \in \partial \tau$.
We say that $V$ is a  \emph{(discrete) vector field on $K$}
if each simplex of $K$ is in at most one pair of $V$.
We say that $\sigma \in K$ is \emph{critical for $V$} if $\sig$ is not in a pair of $V$.

Let $V$ be a vector field on a complex $K$.
A \emph{($p$-)gradient path in $V$ (from $\sig_0$ to $\sig_k$)}
is a sequence
$\pi = \langle \sig_0,\tau_0,\sig_1,\tau_1,...,\sig_{k-1},\tau_{k-1}, \sigma_{k}\rangle$,
with $k \geq 0$, composed of faces $\sig_i \in K^{(p)}$, $\tau_i \in K^{(p+1)}$
such that, for all $i \in [0,k-1]$,
$(\sig_i,\tau_{i})$ is in~$V$,
$\sigma_{i+1} \subset \tau_i$, and $\sig_{i+1} \not= \sig_i$.
This sequence $\pi$ is said to be \emph{trivial} if $k =0$, that is, if 
$\pi = \langle \sig_0 \rangle$; otherwise, if $k \geq 1$, we say that $\pi$ is \emph{non-trivial}.
Also, the sequence $\pi$ is \emph{closed} if $\sig_0 = \sig_k$.
We say that a vector field $V$ is \emph{acyclic} if $V$ contains
no non-trivial closed $p$-gradient path. 

Now, let $\ms$ be a Morse sequence on $K$. Then, the gradient vector field of $\ms$ is clearly a vector field.
Also, a path is a gradient path in this vector field if and only if it is a gradient path in $\ms$. 

In the sequel of this section,
we want to emphasize that a Morse sequence may be seen as an alternative way to represent the
gradient vector field of an arbitrary discrete Morse function. A classical result of discrete Morse t\textit{}heory states that
a discrete vector field $V$ is the gradient vector field of a
discrete Morse function if and only if $V$ is acyclic (Theorem 2.51 of \cite{Sco19}). Thus, in order to achieve this goal,
it is sufficient to establish the equivalence between gradient vector fields of Morse sequences and acyclic vector fields.
In fact, this equivalence  is a direct consequence of Theorem 11.9 of \cite{koz20} which is formulated in terms of linear extensions associated to a poset.
We give hereafter a proof which is based on the notion of a maximal gradient path. Such a path allows us to extract, in the top dimension of a complex $K$,
either a critical simplex or a free pair for $K$ (Lemma \ref{lem:DVF1}). This formalizes  an incremental
deconstruction of a complex, which is usually 
given with certain Morse functions, see Remark 13 of \cite{Ben17}.

Let $V$ be a vector field on $K$ and 
let $\pi = \langle \sig_0,\tau_0,...,\sig_{k-1},\tau_{k-1}, \sigma_{k} \rangle$ be a $p$-gradient path in $V$.
We say that a pair of simplices $(\eta,\nu)$ is an \emph{extension of~$\pi$ (in $V$)}
if $\langle \eta,\nu,\sig_0,\tau_0,...,\sig_{k-1},\tau_{k-1}, \sigma_{k} \rangle$ or if $\langle \sig_0,\tau_0,...,\sig_{k-1},\tau_{k-1}, \sigma_{k}, \nu, \eta \rangle$ is a $p$-gradient path in $V$.
We say that $\pi$ is \emph{maximal (in $V$)} if $\pi$ has  no extension in~$V$. 
If $V$ is acyclic, it can be checked that, for any 
$p \geq 0$, there exists a maximal $p$-gradient path in $V$.
To see this point, we can pick an arbitrary (possibly trivial) $p$-gradient path and 
extend it iteratively with extensions. If $V$ is acyclic, we obtain 
a maximal path after a finite number of extensions. \\

 \begin{lemma} \label{lem:DVF1}
Let $V$ be an acyclic vector field on $K$, with $dim(K) = d$.
Then, at least one of the following holds: 
\begin{enumerate}
\item There exists a facet $\tau$ of $K$, with $dim(\tau) = d$, that is critical for $V$. 
\item There exists a pair $(\sig,\tau)$ in $V$, with $dim(\tau) = d$, that is a free pair for $K$.
\end{enumerate}
\end{lemma}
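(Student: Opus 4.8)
The plan is to isolate the top dimension and argue by a dichotomy on the $d$-simplices, producing the free pair of case~(2) from a maximal gradient path, exactly as the paragraph preceding the lemma suggests. I would first record two elementary observations about the top dimension. Since $\dim(K)=d$, every $d$-simplex is automatically a facet of $K$, because nothing of dimension $d+1$ can contain it. Moreover a $d$-simplex $\tau$ that is not critical for $V$ must be the \emph{upper} element of its pair: it cannot occur as a lower element $(\tau,\mu)$, since that would require a $(d+1)$-simplex $\mu$, which does not exist. Hence every non-critical $d$-simplex $\tau$ carries a unique pair $(\sigma,\tau)\in V$ with $\dim(\sigma)=d-1$. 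With this in hand I split into two cases: if some $d$-simplex is critical for $V$, then conclusion~(1) holds immediately; otherwise \emph{every} $d$-simplex is the upper element of a pair of $V$, and this is the substantial case in which I must exhibit a free pair.

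In that second case I would construct a \emph{nontrivial} maximal $(d-1)$-gradient path. Observe first that this case forces $d\geq 1$, since otherwise $V$ would contain no pairs at all and every $d$-simplex would be critical. Now $\dim(K)=d$ guarantees a $d$-simplex $\tau_0$, which by the case hypothesis is the top of a pair $(\sigma_0,\tau_0)\in V$; as $d\geq 1$, the $d$-simplex $\tau_0$ has a $(d-1)$-face $\sigma_1\neq\sigma_0$, so $\langle \sigma_0,\tau_0,\sigma_1\rangle$ is a nontrivial $(d-1)$-gradient path. Extending it iteratively and invoking acyclicity (as recalled just before the lemma) yields a maximal path $\pi=\langle \sigma_0,\tau_0,\ldots,\sigma_k\rangle$ which, having been obtained from a nontrivial path by lengthening, is still nontrivial; in particular $(\sigma_0,\tau_0)\in V$ with $\tau_0\in K^{(d)}$.

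The heart of the argument is then to show that $(\sigma_0,\tau_0)$ is a free pair, i.e.\ that $\sigma_0$ is contained in no $d$-simplex other than $\tau_0$. Suppose, for contradiction, that $\sigma_0\subset\nu$ for some $d$-simplex $\nu\neq\tau_0$. By the case hypothesis $\nu$ is the top of a pair $(\eta,\nu)\in V$; and since $V$ is a vector field, $\sigma_0$ already belongs to the single pair $(\sigma_0,\tau_0)$, whence $\eta\neq\sigma_0$. But then $(\eta,\nu)$ is an extension of $\pi$ at its left end, for $\langle \eta,\nu,\sigma_0,\tau_0,\ldots,\sigma_k\rangle$ is again a $(d-1)$-gradient path, contradicting the maximality of $\pi$. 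Therefore $\tau_0$ is the unique $d$-simplex containing $\sigma_0$; as the proper cofaces of the $(d-1)$-simplex $\sigma_0$ are exactly its $d$-cofaces, $\tau_0$ is its only proper coface, so $(\sigma_0,\tau_0)$ is a free pair for $K$ with $\dim(\tau_0)=d$, which is conclusion~(2).

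I expect the main obstacle to be precisely this last step: converting the \emph{local} maximality of $\pi$ into the \emph{global} fact that $\sigma_0$ has a unique coface. The two ingredients that make it work are the vector-field condition, which forces $\eta\neq\sigma_0$ and hence makes the candidate extension legitimate, together with the case hypothesis that every top simplex is paired, which is what supplies the pair $(\eta,\nu)$ needed to build the forbidden extension in the first place. Everything else is bookkeeping; in particular the degenerate dimension $d=0$ needs no separate treatment, since there $V$ is empty and every vertex is critical, placing us in case~(1).
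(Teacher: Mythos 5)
Your proof is correct and follows essentially the same route as the paper's: the dichotomy on whether some $d$-simplex is critical, the construction of a nontrivial maximal $(d-1)$-gradient path, and the left-extension contradiction showing the leading pair $(\sigma_0,\tau_0)$ is free. Your two added justifications --- that the vector-field condition forces $\eta\neq\sigma_0$, and that the proper cofaces of a $(d-1)$-simplex in a $d$-dimensional complex are exactly its $d$-cofaces --- are points the paper uses implicitly, so you have merely made the same argument slightly more explicit.
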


\begin{proof}
If $K$ has a $d$-face that is critical for $V$, then we are done. 
Suppose there is no such faces in $K$. If $d= 0$, then all the $0$-faces of $K$ are critical, thus we must have $d \geq 1$.
Let $\tau$ be an arbitrary $d$-face of $K$.
Since $\tau$ is not critical, 
there exists a pair $(\sig,\tau)$ that is in $V$.
Since $d \geq 1$, there is a face $\sig' \in K$
such that $\pi' = \langle \sig,\tau,\sig' \rangle$ is a $(d-1)$ gradient path in~$V$.
By iteratively extending $\pi'$ with extensions we obtain 
a maximal $(d-1)$-gradient path in $V$ that is non-trivial.
Let $\pi = \langle \sig_0,\tau_0,...,\sig_{k-1},\tau_{k-1}, \sigma_{k} \rangle$ be such a path, we have $k \geq 1$.
If $(\sigma_0,\tau_0)$ is a free pair for $K$, then we are done.
Otherwise, $\sig_0$ must be a subset of a $d$-simplex  $\nu$, with $\nu \not= \tau_0$. By our hypothesis $\nu$ is not critical for $V$. Since $\nu$ is a facet
for $K$, there must exist a $(d-1)$-simplex $\eta$, $\eta \not= \sig_0$, such that $(\eta,\nu)$ is in $V$.
In this case, the path $\pi' = \langle \eta, \nu, \sigma_0, \tau_0,...,\sig_{k-1},\tau_{k-1}, \sigma_{k} \rangle$ would be a $(d-1)$-gradient path in~$V$.
Thus, the path $\pi$ would not be maximal, a contradiction:
the pair $(\sigma_0,\tau_0)$ must be a free pair for $K$. \end{proof}

\begin{theorem} \label{th:DVF}
Let $K$ be a simplicial complex. A vector field $V$ on $K$ is acyclic if and only if $V$ is the gradient vector field of a
Morse sequence on $K$.
\end{theorem}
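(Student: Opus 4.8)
The plan is to prove the two implications separately. For the forward direction I would attach a birth time to every simplex of a Morse sequence and show that these times strictly decrease along any gradient path, which rules out non-trivial closed paths. For the converse I would argue by induction on $Card(K)$, repeatedly stripping a top-dimensional critical facet or free pair off $K$ by means of Lemma~\ref{lem:DVF1}.

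Suppose first that $V = \msc$ for a Morse sequence $\ms(K) = \langle \emptyset = K_0, \ldots, K_k = K \rangle$. For each $\nu \in K$ let $t(\nu)$ be the index $i$ with $\nu \in K_i \setminus K_{i-1}$. Consider a gradient path $\langle \sigma_0, \tau_0, \ldots, \sigma_{m-1}, \tau_{m-1}, \sigma_m \rangle$ in $\ms$. Each regular pair $(\sigma_j, \tau_j)$ is created by a single elementary expansion, say at step $s_j$, so that $t(\sigma_j) = t(\tau_j) = s_j$ and, by the definition of a free pair, $\tau_j$ is the only face of $K_{s_j}$ containing $\sigma_j$. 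Because $\sigma_{j+1} \subset \tau_j$ and $\ms$ is a filtration (Remark~\ref{rem:fil}), the face $\sigma_{j+1}$ already lies in $K_{s_j}$; since $\sigma_{j+1} \ne \sigma_j$ and $dim(\sigma_{j+1}) < dim(\tau_j)$, it cannot have been added at step $s_j$, whence $t(\sigma_{j+1}) < t(\sigma_j)$. Thus $t(\sigma_0) > \cdots > t(\sigma_m)$, so no non-trivial closed gradient path can exist and $V$ is acyclic.

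For the converse, assume $V$ is acyclic and proceed by induction on $Card(K)$, the cases $K = \emptyset$ and $K$ a single vertex being clear. Let $d = dim(K)$. By Lemma~\ref{lem:DVF1} one of two situations occurs. If there is a critical facet $\tau$ with $dim(\tau) = d$, put $K' = K \setminus \{\tau\}$, so that $K$ is an elementary filling of $K'$ and $V' = V$ is a vector field on $K'$. If instead there is a pair $(\sigma,\tau) \in V$ with $dim(\tau) = d$ that is free for $K$, put $K' = K \setminus \{\sigma,\tau\}$, so that $K$ is an elementary expansion of $K'$ and $V' = V \setminus \{(\sigma,\tau)\}$ is a vector field on $K'$. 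In either case $V'$ is acyclic, since every gradient path in $V'$ is a gradient path in $V$. By the induction hypothesis $V'$ is the gradient vector field of a Morse sequence $\ms'$ on $K'$; appending to $\ms'$ the filling of $\tau$ (first case) or the expansion adding $(\sigma,\tau)$ (second case) produces a Morse sequence $\ms$ on $K$. Its set of regular pairs is $V'$ in the first case and $V' \cup \{(\sigma,\tau)\}$ in the second, and in both cases this equals $V$, completing the induction.

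The real work is carried by Lemma~\ref{lem:DVF1}: it is what guarantees that an acyclic field can always be deconstructed from the top dimension, and everything else is bookkeeping. The one place I would take care is the descent of acyclicity to $V'$ in the collapse case, where one must know that removing the pair $(\sigma,\tau)$ does not leave behind a closed path living entirely in $K'$; this is immediate because $K' \subseteq K$ and $V' \subseteq V$, so any closed gradient path of $V'$ would already be one of $V$.
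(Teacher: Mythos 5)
Your proposal is correct and takes essentially the same route as the paper: the forward direction is the paper's argument with the birth-time index $\rho$ strictly decreasing along gradient paths, and the converse is the paper's top-dimensional deconstruction via Lemma~\ref{lem:DVF1}, merely recast as an induction on $Card(K)$ with the Morse sequence assembled by appending steps, rather than the paper's explicit construction of the reversed sequence $\langle K = K_0, \ldots, K_k = \emptyset \rangle$. The points you flag (acyclicity descending to $V'$, and $V'$ being a vector field on $K'$) are handled correctly.
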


\begin{proof}
i) Let $\ms = \langle \emptyset = K_0,...,K_i,...,K_l = K  \rangle$ be a Morse sequence on $K$, and let $V$ be the gradient vector field of $\ms$.
For each $\nu \in K$, let $\rho(\nu)$ be the index $i$ such that 
$\nu \in K_i$ and $\nu \not\in K_{i-1}$. 
Now, let $\pi = \langle \sig_0,\tau_0,\sig_1,\tau_1,...,\sig_{k-1},\tau_{k-1}, \sigma_{k}\rangle$, $k \geq 1$, be a non-trivial $p$-gradient path in $V$. 
For all $i \in [0,k-1]$, $(\sig_i,\tau_{i})$ is in~$V$, thus 
$\rho(\sig_i) = \rho(\tau_i)$. Since $\sigma_{i+1} \subset \tau_i$ and since 
$\ms$ is a filtration, we have $\rho(\sig_{i+1}) \leq \rho(\tau_i)$. 
Since $\sig_{i+1} \not= \sig_i$ the pair $(\sig_{i+1},\tau_i)$ is not a regular pair for 
$\ms$, thus we have $\rho(\sig_{i+1}) < \rho(\tau_i)$.
It follows that, for all $i \in [0,k-1]$, we have $\rho(\sig_{i+1}) < \rho(\sig_i)$.
This gives $\rho(\sig_{k}) < \rho(\sig_0)$. It means that $\sig_{k} \not= \sig_{0}$,
and the path $\pi$ cannot be closed. Thus the vector field $V$ is acyclic. \\
ii) 
Let $V$ be an acyclic vector field on $K$, with $dim(K) = d$. \\
1) Suppose there exists a facet $\tau$ of $K$, with $dim(\tau) = d$, that is critical for $V$. \\
Let $K' = K \setminus \{\tau\}$ and $V' = V$.
Then, the set $V'$ is also an acyclic vector field on $K'$. \\
2) Suppose there exists a pair $(\sig,\tau)$ in $V$, with $dim(\tau) = d$, that is a free pair for $K$.
Clearly, the set $V' = V \setminus \{(\sigma, \tau) \}$ is also an acyclic vector field on 
$K' = K \setminus \{\sigma, \tau \}$.\\
By 1), 2), and by Lemma \ref{lem:DVF1}, we can build inductively two sequences $\overleftarrow{W}= \langle K = K_0,...,K_k = \emptyset \rangle$
and $\langle V = V_0,...,V_k = \emptyset \rangle$ such that, for each $i \in [1,k]$: \\
- either $K_i$ is an elementary perforation of $K_{i-1}$ and $V_i = V_{i-1}$, \\
- or $K_i = K_{i-1} \setminus \{\sigma, \tau \}$ is an elementary collapse of $K_{i-1}$ and $V_i = V_{i-1} \setminus \{(\sigma, \tau) \}$.\\
By considering the inverse of $\overleftarrow{W}$ we obtain
the sequence $\ms = \langle K_k= K'_0,...,K'_k = K_0 \rangle$, which is such that, for each $i \in [1,k]$,
either $K'_i$ is an elementary expansion of $K'_{i-1}$, or $K'_i$ is an elementary filling of $K'_{i-1}$.
In other words, $\ms$ is a Morse sequence on $K_0 = K$; the gradient field of $\ms$ is precisely $V,$ as required. 
\end{proof}

\section{Morse functions and Morse sequences}
\label{app:functions}

Discrete Morse theory is classically introduced through the concept of a discrete Morse function. In this section we 
show that it is possible, in a straightforward manner,  to make a link between Morse sequences and these functions. 

We first introduce the notion of a Morse function on a 
Morse sequence $\ms$. \\

\begin{definition}
Let $\ms(K)$ be a Morse sequence, let $\diamond \ms = \langle \ka_1, \ldots, \ka_k \rangle$.
We say that a map $f \colon K \to \bb{Z}$ is a \emph{Morse function on $\ms$}  if $f$ satisfies the two conditions:
\begin{enumerate}
\item  If $\ka_i$ is critical for $\ms$ and $\nu \in \partial(\ka_i)$, then $f(\ka_i) > f(\nu)$.
\item If $\ka_i = (\sig,\tau)$ is regular for $\ms$, then $f(\sig) \geq f(\tau)$.     \\
\end{enumerate}
\end{definition}

Now, we can check that the following definition of a Morse function on a simplicial complex $K$ is equivalent to the classical one \cite{For98,For02}.

Let $K$ be a simplicial complex and let $f \colon K \to \bb{Z}$ be a map on $K$. Let $V$ be the set of all pairs $(\sigma, \tau)$,
with $\sig, \tau \in K$, such that $\sig \in \partial(\tau)$ and $f(\sig) \geq f(\tau)$.
If each $\nu \in K$ is in at most one pair in $V$,
we say that $f$ is  a \emph{Morse function  on~$K$}, and $V$ is the \emph{gradient vector field of $f$}.
We say that two Morse functions on $K$ are \emph{equivalent} if they have the same gradient vector field. 

Let $f$ be  a Morse function  on $K$, and $V$ be the gradient vector field of~$f$.
From the above definition, the set $V$ is a discrete vector field on $K$.
If $\pi = \langle \sig_0,\tau_0,\sig_1,\tau_1,...,\sig_{k-1},\tau_{k-1}, \sigma_{k}\rangle$ is a $p$-gradient path in $V$,
we have $f(\sig_i) \geq f(\tau_i)$, and also $f(\tau_i) > f(\sig_{i+1})$.
Thus, $f(\sig_0) > f(\sig_k)$ whenever $k \geq 1$. It means that $V$ contains
no non-trivial closed $p$-gradient path. In other words, we have the classical result:\\

 \begin{proposition} \label{pro:mf1}
If $f$ is a Morse function on $K$, then the gradient vector field of $f$ 
is an acyclic vector field.\\
 \end{proposition}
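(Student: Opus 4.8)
The plan is to show that the integer value of $f$ strictly decreases along any non-trivial gradient path in the gradient vector field $V$, so that such a path can never return to its starting simplex. Since acyclicity of $V$ is defined precisely as the absence of non-trivial closed $p$-gradient paths, establishing this strict monotonicity immediately yields the conclusion.

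First I would fix an arbitrary non-trivial $p$-gradient path $\pi = \langle \sigma_0, \tau_0, \ldots, \sigma_{k-1}, \tau_{k-1}, \sigma_k \rangle$ in $V$, with $k \geq 1$, and analyse the two kinds of elementary steps separately. For the step from $\sigma_i$ to $\tau_i$, the pair $(\sigma_i, \tau_i)$ belongs to $V$ by definition of a gradient path, so the defining inequality of the gradient vector field of $f$ gives directly $f(\sigma_i) \geq f(\tau_i)$.

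The one place where the Morse condition is genuinely used is the step from $\tau_i$ to $\sigma_{i+1}$, where I must obtain the \emph{strict} inequality $f(\tau_i) > f(\sigma_{i+1})$. Here $\sigma_{i+1} \in \partial(\tau_i)$ and $\sigma_{i+1} \neq \sigma_i$. Were it the case that $f(\sigma_{i+1}) \geq f(\tau_i)$, then the pair $(\sigma_{i+1}, \tau_i)$ would itself satisfy the criterion defining membership in $V$; but then $\tau_i$ would lie in two distinct pairs of $V$, namely $(\sigma_i, \tau_i)$ and $(\sigma_{i+1}, \tau_i)$, contradicting the requirement in the definition of a Morse function that each simplex belongs to at most one pair. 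Hence $f(\tau_i) > f(\sigma_{i+1})$. This contradiction argument, rather than any computation, is the crux of the proof.

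Chaining the inequalities along the whole path yields $f(\sigma_0) \geq f(\tau_0) > f(\sigma_1) \geq f(\tau_1) > \cdots > f(\sigma_k)$, so $f(\sigma_0) > f(\sigma_k)$ whenever $k \geq 1$. In particular $\sigma_0 \neq \sigma_k$, so the path $\pi$ cannot be closed. Since $\pi$ was an arbitrary non-trivial $p$-gradient path, $V$ contains no non-trivial closed $p$-gradient path, which is exactly the definition of $V$ being acyclic, completing the proof.
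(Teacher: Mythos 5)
Your proof is correct and takes essentially the same approach as the paper: both show that $f$ strictly decreases along any non-trivial gradient path, via $f(\sigma_i) \geq f(\tau_i)$ and $f(\tau_i) > f(\sigma_{i+1})$, and conclude that no such path can be closed. The only difference is that you spell out the justification for the strict inequality (if $f(\sigma_{i+1}) \geq f(\tau_i)$ then $\tau_i$ would lie in the two distinct pairs $(\sigma_i,\tau_i)$ and $(\sigma_{i+1},\tau_i)$ of $V$, violating the Morse condition), a step the paper asserts without detail.
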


Let $\ms$ be a Morse sequence on $K$. 
We see that a Morse function on $\ms$ is indeed a Morse function
on $K$, the gradient vector field of this Morse function is
precisely the gradient vector field of $\ms$.
Conversely, by Proposition \ref{pro:mf1} and by Theorem~\ref{th:DVF}, if $f$ is a Morse function on $K$, then there exists a
Morse sequence $\ms$ on $K$ which has the same gradient vector field as $f$. It is easy to check that~$f$ is also a Morse function 
on $\ms$. This leads us to the following result. \\

 \begin{theorem} \label{th:mf3}
If $f$ is a Morse function on $K$, then there exists a Morse sequence~$\ms$ 
on $K$ such that $f$ is a Morse function on $\ms$.
Furthermore, any Morse function on $\ms$ is equivalent to $f$.\\
 \end{theorem}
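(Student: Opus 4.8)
The plan is to reduce both assertions to results already established: Proposition~\ref{pro:mf1} and Theorem~\ref{th:DVF} for the existence part, and, for the ``furthermore'', the statement that a Morse function on $\ms$ is a Morse function on $K$ whose gradient vector field is exactly $\msc$. Throughout I write $\diamond\ms = \langle\ka_1,\ldots,\ka_k\rangle$.

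\emph{Existence.} Let $V$ be the gradient vector field of $f$. By Proposition~\ref{pro:mf1} the field $V$ is acyclic, so Theorem~\ref{th:DVF} yields a Morse sequence $\ms$ on $K$ with $\msc = V$. It then remains to verify the two defining conditions of a Morse function on $\ms$. If $\ka_i = (\sig,\tau)$ is regular, then $(\sig,\tau)\in\msc = V$, so the very definition of $V$ gives $f(\sig)\geq f(\tau)$, which is condition~$2$. If $\ka_i$ is critical, it lies in no pair of $V$; hence for each $\nu\in\partial(\ka_i)$ the pair $(\nu,\ka_i)$ is absent from $V$, and since $\nu\in\partial(\ka_i)$ the description of $V$ forces $f(\nu)<f(\ka_i)$, which is condition~$1$. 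Thus $f$ is a Morse function on $\ms$.

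\emph{The ``furthermore''.} Let $g$ be an arbitrary Morse function on $\ms$ and set $V_g = \{(\sig,\tau) : \sig\in\partial(\tau),\ g(\sig)\geq g(\tau)\}$. The goal is $V_g = \msc$, which, since $\msc = V$ is the gradient vector field of $f$, gives that $g$ and $f$ are equivalent. The inclusion $\msc\subseteq V_g$ is immediate from condition~$2$. For the reverse inclusion the key observation is that no critical simplex $\ka$ of $\ms$ can be matched by $g$: condition~$1$ forbids $\ka$ from being the coface member of a $g$-pair (every $\nu\in\partial(\ka)$ has $g(\ka)>g(\nu)$), and being its face member would, together with the $\msc$-pair covering the partner of $\ka$, place a simplex in two pairs of $V_g$. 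A short case analysis according to whether $\tau$ is upper- or lower-regular then shows that any $(\sig,\tau)\in V_g$ must already lie in $\msc$, so $V_g=\msc$ and $g\equiv f$.

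\emph{Main obstacle.} The step carrying the real weight is the implicit one just used, namely that a Morse function on $\ms$ is genuinely a Morse function on $K$, i.e.\ that $V_g$ is a matching at all. Conditions~$1$ and~$2$ only constrain a critical cell through its \emph{boundary} and a regular pair through the single inequality $g(\sig)\geq g(\tau)$; to guarantee that no simplex is covered twice one must bring in the global structure of the sequence $\ms$ (equivalently the interplay between $\partial$ and $\delta$ that it induces) rather than relying on these pointwise conditions in isolation. I expect this verification to be the crux; once $V_g$ is known to be a matching, the equivalence of $g$ and $f$ follows from the short argument above.
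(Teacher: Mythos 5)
Your existence half is complete and is exactly the paper's route: Proposition~\ref{pro:mf1} plus Theorem~\ref{th:DVF} give a Morse sequence $\ms$ with $\msc$ equal to the gradient vector field $V$ of $f$, and your verification of the two conditions is correct --- in particular you correctly exploit that $V$ contains \emph{every} incidence $(\nu,\ka)$ with $f(\nu)\geq f(\ka)$, so that absence from $V$ forces the strict inequality needed in condition~1. (The paper disposes of this with ``it is easy to check''; your spelled-out check is the right one.)

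The ``furthermore'' half has the gap you yourself flag, and it is worse than you suspect: the claim that a Morse function $g$ on $\ms$ is a Morse function on $K$ --- that is, that $V_g$ is a matching --- is not just the unproven crux, it is false under the paper's literal definitions. Take $K$ to be the path with vertices $a,b,c$ and edges $\{a,b\},\{b,c\}$, and $\ms$ with critical vertex $\{a\}$ and regular pairs $(\{b\},\{a,b\})$ and $(\{c\},\{b,c\})$. The map $g$ with $g(\{a\})=g(\{a,b\})=g(\{b,c\})=0$ and $g(\{b\})=g(\{c\})=10$ satisfies both conditions of the definition of a Morse function on $\ms$ (condition~1 is vacuous since $\partial(\{a\})=\emptyset$), yet $V_g$ contains $(\{a\},\{a,b\})$, $(\{b\},\{a,b\})$ and $(\{b\},\{b,c\})$, so $\{b\}$ and $\{a,b\}$ are each covered twice. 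The two conditions constrain only matched incidences and boundaries of critical cells; nothing prevents a large value on a lower regular face from dominating a \emph{second} coface, which is exactly what happens here. So no argument can close your gap as posed. For the record, the paper does not close it either: its entire treatment of this step is the one-sentence assertion ``We see that a Morse function on $\ms$ is indeed a Morse function on $K$, the gradient vector field of this Morse function is precisely the gradient vector field of $\ms$,'' so your probing went deeper than the published argument. The statement becomes true --- and your case analysis collapses --- if the definition of a Morse function on $\ms$ is read as also requiring $f(\nu)<f(\mu)$ whenever $\nu\in\partial(\mu)$ and $(\nu,\mu)\notin\msc$: then $V_g\subseteq\msc$ is immediate, while $\msc\subseteq V_g$ is your condition-2 observation. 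Note also that once the matching property is granted there is a faster finish than your case split: since $\msc\subseteq V_g$ and every non-critical cell is already covered by $\msc$, an extra pair in $V_g$ would have to match two critical cells, which condition~1 forbids directly.
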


We introduce hereafter a particular kind of Morse 
function. 
Since a Morse sequence is a filtration, the following
function $f$ is indeed a Morse function on~$\ms$. \\

\begin{definition}
Let $\ms$ be a Morse sequence on $K$ and let $\diamond \ms = \langle \ka_1, \ldots, \ka_k \rangle$.
The \emph{canonical Morse function of $\ms$}  is the map 
$f \colon K \to \bb{Z}$ such that:
\begin{enumerate}
\item If $\ka_i$
is critical for $\ms$, then $f(\ka_i) =i$.
\item If $\ka_i = (\sig,\tau)$
is regular for $\ms$, then $f(\sig) = f(\tau) =i$. \\
\end{enumerate}
\end{definition}

As a consequence of Theorem \ref{th:mf3}, any Morse function 
on $K$ is equivalent to a canonical Morse function. 

We note that a canonical Morse function $f$ is \emph{flat}, that is,
we have  $f(\sig) = f(\tau)$ whenever $(\sig,\tau)$ is in the gradient vector field of $f$
(Definition 4.14 of \cite{Sco19}).
Also~$f$ is \emph{excellent}, that is, all values of the critical simplices
are distinct (Definition  2.31 of \cite{Sco19}).
In fact, a canonical Morse function has the three properties
which define a basic Morse function (see \cite{Ben16} and also Definition 2.3 of \cite{Sco19}).

Let $f \colon K \to \bb{Z}$ be a map on $K$. 
We say that $f$ is a \emph{basic Morse function}
if $f$ satisfies the properties:
\begin{enumerate}
\item \emph{monotonicity}: we have $f(\sig) \leq f(\tau)$ whenever $\sig \subseteq \tau$;
\item \emph{semi-injectivity}: for each $i \in \bb{Z}$, the cardinality of $f^{-1}(i)$ is at most 2;
\item \emph{genericity}: if $f(\sig) = f(\tau)$, then either  $\sig \subseteq \tau$ or $\tau \subseteq \sig$. 
\end{enumerate}
We observe that, if $f$ is a basic Morse function on $K$,
then we can build a Morse sequence $\ms$
if we pick the simplices of $K$ by increasing values of $f$. 
For each~$i$, $f^{-1}(i)$ gives a critical simplex if 
the cardinality of $f^{-1}(i)$ is one, and
$f^{-1}(i)$ gives a regular pair if 
the cardinality of $f^{-1}(i)$ is two.

Let $f$ and $g$  be two basic Morse functions on $K$. 
We say that $f$ and $g$ are \emph{strongly equivalent}
if $f$ and $g$ induce the same order on $K$. That is, we have 
$f(\sig) \leq f(\tau)$ if and only if 
$g(\sig) \leq g(\tau)$.

With the above scheme for building a Morse sequence from 
a basic Morse function, we obtain the following result. \\

\begin{proposition}
Let $f$ be a basic Morse function on $K$.
There exists one and only one Morse sequence $\ms$ such that 
the canonical Morse function of $\ms$ is strongly equivalent to $f$.
\end{proposition}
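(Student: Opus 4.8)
The plan is to prove existence and uniqueness separately, taking for granted the construction already observed above, namely that enumerating the simplices of $K$ by increasing values of $f$ produces a Morse sequence $\ms$, where each value $i$ with $|f^{-1}(i)|=1$ contributes a critical simplex and each value with $|f^{-1}(i)|=2$ contributes a regular pair.

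For existence I would take this $\ms$ and compare its canonical Morse function $g$ with $f$; the argument is purely order-theoretic. Its steps $\ka_1,\ldots,\ka_k$ are created in order of increasing $f$-value, and $g$ assigns to every simplex the index of its step. Two simplices drawn from different values of $f$ therefore lie in different steps and receive distinct $g$-values ordered exactly as their $f$-values, while the two simplices $\sig\subset\tau$ of a regular pair are drawn from a single value of $f$, hence lie in one step and satisfy $g(\sig)=g(\tau)$, matching $f(\sig)=f(\tau)$. Combining the two cases gives $f(\alpha)\le f(\beta)\Leftrightarrow g(\alpha)\le g(\beta)$ for all $\alpha,\beta\in K$, so $g$ is strongly equivalent to $f$. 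I would also invoke the fact recorded above that the canonical Morse function of any Morse sequence is itself a basic Morse function, so that the notion of strong equivalence applies to $g$.

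For uniqueness, suppose $\ms$ and $\ms'$ have canonical Morse functions $g$ and $g'$ both strongly equivalent to $f$. Since strong equivalence is transitive, $g$ and $g'$ induce one and the same total preorder on $K$, and I would show that this preorder reconstructs the sequence unambiguously. Its classes of tied simplices (those with $g(\alpha)=g(\beta)$) are exactly the steps $\ka_m$, linearly ordered by the preorder: a class of size one is a critical simplex, and a class of size two is a regular pair, where genericity forces one of the two simplices to be a proper face of the other, so the smaller-dimensional one is $\sig$ and the larger is $\tau$. Thus the ordered list $\diamond\ms$ is determined by the preorder alone, which forces $\diamond\ms=\diamond\ms'$ and hence $\ms=\ms'$.

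The main part is mostly bookkeeping: matching the canonical index to the $f$-order, and correctly orienting each two-element class into a pair $(\sig,\tau)$ by dimension. The genuinely technical point, should one not assume the validity of the construction, is checking that every group yields a legal Morse operation. Monotonicity guarantees that all proper faces of a newly added simplex already occur earlier; semi-injectivity together with genericity guarantees that a two-element group is a pair $\sig\subset\tau$ with $dim(\tau)=dim(\sig)+1$ forming a free pair, since any intermediate face would produce a third simplex with the same value of $f$, contradicting $|f^{-1}(i)|\le 2$.
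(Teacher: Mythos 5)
Your proposal is correct and takes essentially the same approach as the paper, whose proof consists precisely of the construction you use: picking the simplices of $K$ by increasing values of $f$, with singleton fibers $f^{-1}(i)$ yielding critical simplices and two-element fibers yielding regular pairs. You additionally spell out the verifications the paper leaves implicit — that monotonicity, semi-injectivity and genericity make each step a legal filling or expansion (in particular forcing $dim(\tau)=dim(\sig)+1$ and the free-pair condition), and that uniqueness follows because the tie classes of the common preorder reconstruct $\diamond \ms$ unambiguously — all of which is sound.
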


\end{document}